\documentclass[a4paper,onecolumn,11pt,noarxiv]{quantumarticle}
\usepackage[utf8]{inputenc}
\usepackage{csquotes}
\usepackage[english]{babel}
\usepackage[T1]{fontenc}

\usepackage{amsmath,amsthm,amsfonts,amssymb,dsfont,graphicx,mathtools,physics,thmtools,thm-restate}
\usepackage{xprintlen}

\usepackage{complexity}
\usepackage{enumitem}
\usepackage{fullpage}

\usepackage{comment}
\usepackage{algorithm}  
\usepackage{algpseudocode}
\algrenewcommand\algorithmicrequire{\textbf{Input:}}
\algrenewcommand\algorithmicensure{\textbf{Output:}}

\usepackage{xcolor}
\usepackage{hyperref}
\hypersetup{
  colorlinks=true,
  hypertexnames=false,
  linktocpage,
  colorlinks=true, 
  urlcolor=magenta!90!black,    
  linkcolor=blue!60!black, 
  citecolor=black!60 
}
\usepackage[capitalize, compress]{cleveref}

\crefname{problem}{Problem}{Problems}

\usepackage[
  backend=biber,
  citestyle=alphabetic,
  bibstyle=alphabetic,
  minalphanames=3,maxalphanames=4,
  maxnames=10,minnames=1,
  maxcitenames=3,mincitenames=1,
  eprint=true,
  date=year,
  doi=false,
  url=false,
  isbn=false,
  ]{biblatex}
\AtEveryBibitem{
  \clearfield{urldate}
  \clearfield{primaryclass}
}

\newclass{\stoqma}{StoqMA}
\newclass{\classp}{P}
\newclass{\bqp}{BQP}
\newclass{\postbqp}{postBQP}
\newclass{\posta}{postA}
\newclass{\postiqp}{postIQP}
\newclass{\classa}{A}
\newclass{\bpp}{BPP}
\newclass{\fbpp}{FBPP}
\newclass{\pp}{PP}
\newclass{\ceqp}{C_=P}
\newclass{\ph}{PH}
\newclass{\np}{NP}
\newclass{\gapp}{GapP}
\newclass{\approxclass}{Apx}
\newclass{\gapclass}{Gap}
\newclass{\sharpP}{\#P}
\newclass{\ma}{MA}

\newclass{\am}{AM}
\newclass{\coam}{coAM}
\newclass{\qma}{QMA}
\newclass{\qcma}{QCMA}
\newclass{\qcam}{QCAM}
\newclass{\Time}{TIME}
\newclass{\ti}{TI}
\newclass{\gi}{GI}
\newclass{\classx}{X}
\newclass{\mis}{mis}
\newclass{\pis}{pis}
\newclass{\MIS}{MIS}
\newclass{\PIS}{PlantedIndSp}
\newclass{\IS}{IndSp}
\newclass{\search}{search}
\newclass{\decision}{decision}
\newclass{\ips}{IP1S}
\newclass{\isop}{IP}
\newclass{\hog}{HOG}
\newclass{\quath}{QUATH}
\newclass{\bog}{BOG}
\newclass{\xeb}{XEB}
\newclass{\llqsv}{LLQSV}
\newclass{\llha}{LLHA}
\newclass{\Exp}{Exp}
\newclass{\xhog}{XHOG}
\newclass{\xquath}{XQUATH}
\newclass{\maxcut}{MAXCUT}
\newclass{\sat}{SAT}
\newclass{\maxtwosat}{MAX2SAT}
\newclass{\twosat}{2SAT}
\newclass{\threesat}{3SAT}
\newclass{\sharpsat}{\#SAT}

\newtheorem{theorem}{Theorem}
\newtheorem{proposition}[theorem]{Proposition}
\newtheorem{conjecture}[theorem]{Conjecture}
\newtheorem{definition}[theorem]{Definition}
\newtheorem{lemma}[theorem]{Lemma}
\newtheorem{problem}[theorem]{Problem}

\newtheorem{task}[theorem]{Task}

\newtheorem{corollary}[theorem]{Corollary}

\DeclareMathOperator{\supp}{supp}

\DeclareMathOperator{\Var}{Var}
\DeclareMathOperator{\spn}{span}

\newcommand{\gl}{\mathrm{GL}}

\newcommand{\mc}{\mathcal}
\newcommand{\mb}{\mathbb}

\newcommand{\id}{\mathbbm{1}}

\DeclareMathOperator*{\Eb}{\mb E}

\newcommand{\bin}{\{0,1\}}

\newcommand{\proj}[1]{\ket{#1}\bra{#1}}

\DeclareMathOperator{\gap}{gap}
\DeclareMathOperator{\ngap}{ngap}
\DeclareMathOperator{\negl}{negl}

\def\FF{\mathbb{F}}
\def\NN{\mathbb{N}}

\DeclareMathOperator{\Hess}{H}
\DeclareMathOperator{\Quad}{Q}
\DeclareMathOperator{\TF}{T}

\newcommand{\SR}{\operatorname{SR}} 
\newcommand{\symSR}{\operatorname{symSR}} 
\newcommand{\Str}{\operatorname{Str}} 

\newtheorem{remark}[theorem]{Remark}

\newtheorem{fact}[theorem]{Fact}

\begin{document}

\title{Verifiable quantum advantage based on polynomials with planted structures
}
\author{Markus Bl\"aser}
\affiliation{Center for Quantum Technologies (QuTe) and Department of Computer Science, Saarland University}
\author{Michael Gullans}
\affiliation{\quera}
 \affiliation{\quics}
\author{Dominik Hangleiter}
\affiliation{\ethz}
\affiliation{\simons}
\author{Yuxuan Liu}
\affiliation{Wuhan University}
\author{Youming Qiao}
\affiliation{Centre for Quantum Software and Information, University of Technology Sydney}
\newcommand{\quics}{Joint Center for Quantum Information and Computer Science, University of Maryland
}
\newcommand{\quera}{QuEra Computing Inc.
}
\newcommand{\simons}{%
  Simons Institute for the Theory of Computing, University of California at Berkeley}
  \newcommand{\ethz}{%
  Institute for Theoretical Physics, ETH Z\"urich}
\begin{abstract}
  A central question in the theory of quantum advantage is whether there are quantum advantage protocols with similar resource requirements as random circuit sampling that are also verifiable just from the classical outputs of the quantum computation. 
  Here, we develop the idea of simulation secrets for verifiable advantage. 
  A verifier can use a simulation secret to evaluate a cross-entropy test faster than it would take a classical adversary to pass the test. 
  We instantiate this idea using IQP circuits described by cubic polynomials with \emph{planted independent spaces}.
  These correspond to the largest independent set in the orbit of a polynomial under the general linear group and yield a low-rank stabilizer decomposition of the corresponding state. 
  We conjecture that large independent spaces are invisible to a computationally bounded adversary, and therefore they cannot exploit them to pass the protocol. 
  A second conjecture regards the fine-grained complexity of producing samples that pass the cross-entropy test for uniformly random polynomials. 
  Under these conjectures, our scheme results in a polynomial gap between the verification time and the time a classical adversary would need to pass the protocol---both are exponential. 
  It has a potential application to generating classically certifiable randomness, since the output distributions have high min-entropy. 
  We estimate that the planted polynomial scheme is implementable using 100 logical qubits at logical error rates around~$10^{-6}$. 
\end{abstract}

\maketitle

\setcounter{tocdepth}{2}

\tableofcontents
\section{Introduction}
\label{sec:introduction}

\subsection{Motivation}

The experimental demonstration of quantum computational advantage has been a major milestone for the field. 
Since the first demonstration in 2019~\cite{arute_quantum_2019}, experiments demonstrating quantum advantage have achieved higher quality and larger circuit volumes \cite{wu_strong_2021,morvan_phase_2024,decross_computational_2025,gao_establishing_2025}. 
A big drawback of those experiments, however, has been that they are not classically verifiable from the experimental outcomes themselves. 
This means that the validation of the advantage demonstrations has relied on separate data \cite{hangleiter_has_2026}. 
This is both practically and  fundamentally significant; 
it is practically significant, because verification has required immense classical computational efforts to simulate computations just below the advantage threshold. It is fundamentally significant, since the lack of verification of the computational outputs themselves gives room for skeptics to doubt the claims, similar to how the first  violations of Bell inequalities \cite{aspect_experimental_1982-1} left room for loopholes, which were only eventually closed much later \cite{giustina_significant-loophole-free_2015,hensen_loophole-free_2015,shalm_strong_2015}. 

The reason why the current demonstrations of quantum advantage are not verifiable is because they rely on the computational task of random circuit sampling \cite{hangleiter_computational_2023}, which refers to the task of sampling from the output distribution of a quantum circuit chosen uniformly from a large family of circuits that is natural on a given architecture. 
The output distributions of those circuits have the so-called anticoncentration property which prevents sample-efficient verification of the correct distribution \cite{hangleiter_sample_2019}. 
The best available way to benchmark random circuit sampling is to use the so-called \emph{cross-entropy benchmark (XEB)} \cite{boixo_characterizing_2018,arute_quantum_2019}, which is the average of the ideal probabilities corresponding to the observed samples. 
This benchmark is sample-efficiently computable and distinguishes ideal samples from random ones.
More broadly, generating samples that achieve a sufficiently high XEB score appears to be a classically hard task \cite{aaronson_complexity-theoretic_2017,gao_limitations_2024,hangleiter_has_2026}. 
However, computing the XEB requires computational effort on a par with simulating an ideal experiment, rendering verification at least as hard as simulating a potentially noisy experiment.  

Theoretically, the best known approaches to classically verifiable quantum advantage are interactive protocols based on post-quantum-secure cryptography \cite{brakerski_cryptographic_2018,brakerski_simpler_2020} and solving the discrete log problem \cite{shor_algorithms_1994} to show quantum advantage \cite{kahanamoku-meyer_jacobi_2024} or even break existing cryptosystems such as  elliptic curve cryptography and RSA \cite{gidney_how_2025,babbush_securing_2026}.
However, the required circuit sizes for these problems are out of reach for next-generation quantum devices \cite{tripier_fault-tolerant_2026} as they require on the order of $10^8$ to $10^9$ Toffoli gates and on the order of 1000 logical qubits \cite{kahanamoku-meyer_classically_2022,babbush_securing_2026,gidney_how_2025}. 

This raises the question if random circuit sampling can be made classically verifiable by the use of cryptography while retaining small circuit sizes that might be achievable on first-generation logical processors with roughly 100 logical qubits and error rates on the order of~$10^{-6}$ \cite{hangleiter_has_2026,tripier_fault-tolerant_2026}. 
One approach, popularized by Aaronson~\cite{aaronson_verifiable_2023}, is to devise circuits which are computationally indistinguishable from random circuits sampled from a  hard-to-simulate family, but have hidden structure in their output distribution that is efficiently detectable. 
In the simplest case, this structure could be a single \emph{secret peak} in the output distribution, corresponding to a single bitstring that is observed with high probability if the prover is honest, but is hard to find for an adversary~\cite{aaronson_verifiable_2024}. 
The verifier, on the other hand, would devise the circuits in such a way that she would be able to \emph{plant} a peak of her choosing when sampling a circuit. 
Using her private knowledge of the peak location she will be able to distinguish an honest prover from an adversarial one.

A compelling candidate construction for this was proposed already by \textcite{shepherd_temporally_2009}, using a family of commuting quantum circuits known as IQP circuits. 
These circuits have simple classical descriptions in terms of a classical linear code, and it turns out that certain properties of the code are reflected in the output distribution---a property exploited by the Shepherd/Bremner protocol. 
Unfortunately, cryptanalysis of the protocol revealed a classical solution~\cite{kahanamoku-meyer_forging_2023}, and even generalizations of the scheme \cite{bremner_instantaneous_2025} can be broken~\cite{gross_secret-extraction_2025}. 
Recent work has built off this idea to devise circuit families with peaks in certain conditional distributions \cite{deshpande_peaked_2025}, and also investigated the broader question of whether peaked circuits that are indistinguishable from random circuits even exist~\cite{aaronson_verifiable_2024,zhang_complexity_2025}.
It remains open whether peaks can be planted to devise a verifiable quantum advantage scheme. 

In this work, we consider a different paradigm for classically verifiable advantage based on \emph{simulation secrets}.
The idea is to plant secret structure in an otherwise random IQP circuit, knowledge of which simplifies classical simulation of the circuit. At the same time, this structure is invisible to a computationally bounded adversary and does not ostensibly alter the properties of its output distribution. 
In particular, the output distribution of the planted circuits has high entropy, as for the random case. 
In this way, the XEB can be used to verify samples since the verifier's private knowledge of the planted structure allows her to compute probabilities significantly faster than what an adversary without knowledge of the secret would be able to do. 

We instantiate this idea with IQP circuits defined by random degree-3 Boolean polynomials \cite{bremner_average-case_2016}. 
The structure we plant is a generalization of planted cliques.  
The gap between simulation time and verification time we obtain in this way is polynomial: 
Under a set of fine-grained complexity conjectures, passing the verification test without knowledge of the secret requires time $\Omega(2^{\kappa n})$ for some constant $\kappa >  1/2$ using polynomial space, while classical verification using the secret only requires time $\tilde O(2^{n(1/2 + \epsilon)})$ and polynomial space for arbitrarily small $\epsilon > 0$. 
This may seem like a modest gap. 
However, with the right choice of parameters it allows for the verification of samples that are not simulatable classically directly from the output of the computation itself. 
What is more, we believe the gap for practical parameter sizes to be significantly larger than the asymptotic conjectures suggest.
We estimate that a classically hard computation using at most 100 000 transversal gates on around 100 logical qubits or less could be verified using this scheme. 
We further conjecture that planted independent spaces remain computationally hidden from quantum algorithms. 

We anticipate that our protocol can be used to generate classically certified randomness. 
This is because the output distribution of our scheme has high min-entropy, and we conjecture that the planted structure remains invisible even to computationally bounded quantum adversaries.
As a first step, we obtain a single-round entropy bound for quantum provers that pass the classical verification test, based on ideas of \textcite{aaronson_certified_2023}. 
Thus, our protocol marks a first step towards a practically useful cryptographic application of next-generation logical quantum computers with around 100 logical qubits \cite{tripier_fault-tolerant_2026}.

\subsection{Quantum circuits from polynomials with planted independent spaces}
\label{subsec:IQP_planted}

Our main conceptual contribution towards this end is to identify and study the \emph{planted independent space} (\PIS) problem for Boolean polynomials. We work with Boolean polynomials of degree at most three, which we call cubic Boolean polynomials throughout this paper.
An independent space of a cubic Boolean polynomial is a linear algebraic analogue of an independent set of a 3-uniform hypergraph, as will be defined and explained later.
Crucially, given a random cubic  polynomial with a planted independent space of dimension $h$, one can simulate the corresponding Boolean IQP circuits in time $2^{n-h}$. 

More specifically, we consider IQP circuits $C(g)$ which are defined by a cubic Boolean polynomial $g$ on $n$-bit strings $x \in \FF_2^n$ (all arithmetic is modulo $2$) 
\begin{align}
  \label{eq:degree-3 poly}
  g(x) = \sum_{i < j < k} G_{ijk} x_i x_j x_k + \sum_{i < j} G_{ij} x_i x_j + \sum_{i\in[n]} G_{i} x_i\, , 
\end{align}
with $G_{ijk }, G_{ij}, G_{i} \in \FF_2$.  Throughout this article, we work with cubic polynomials with constant term $0$ unless otherwise stated. 
Since $x^2=x$ on $\FF_2$, we identify functions $g:\FF_2^n\to\FF_2$ with their unique multilinear representatives in $\FF_2[x_1, \ldots, x_n]$ (see \cref{sub:preliminaries}). 
The corresponding IQP circuit  acts on a computational basis state as 
\begin{align}
  C(g) \ket x = (-1)^{g(x)} \ket x . 
\end{align}
To define the sampling task, we start from a state preparation in the $\ket {+^n}$ state and measure in the $X$ basis, giving rise to outcome amplitudes 
\begin{align}
\label{eq:outcome probs}
 a_g(x) =  \frac 1 {2^{n}}\sum_{y \in \FF_2^n} (-1)^{g(y) + x \cdot y}  \eqqcolon \ngap(g + \langle x, \cdot \rangle ),
\end{align}
where $ \langle x,y \rangle = x \cdot y$ denotes the inner product. 
As we explain below, in \cref{sub:expxeb intro}, we conjecture that there is a $\kappa > 1/2$ such that there is no algorithm that uses polynomial space and has runtime $o(2^{\kappa n})$. 

To motivate independent spaces, let us begin by observing
that a cubic polynomial $g(x) \in\FF_2[x_1, \dots, x_n]$ as in \cref{eq:degree-3 poly} can be associated naturally to a $3$-uniform hypergraph $H(g)$, where each nonzero coefficient $G_{ijk}$ corresponds to a hyperedge $(i,j,k)$. 
Note that this hypergraph only depends on the degree-$3$ monomial coefficients of $g$. 
Recall that in a $3$-uniform hypergraph~$H$, a vertex subset is an independent set, if no hyperedge in~$H$ is contained in it.
Given an independent set of size $h$ in $H(g)$, 
we can simulate the corresponding IQP circuit in time $\tilde O(2^{n-h})$ and polynomial space. 
This is because the complement of an independent set is a vertex cover~$V$ and we can write the outcome amplitudes as 
\begin{align}
  a_g(x) = \frac{1}{2^{|V|}} \sum_{z \in \FF_2^{|V|}} (-1)^{x_V \cdot z}\ngap(g_z + \langle x_{V^c}, \cdot \rangle),
\end{align}
where $g_z$ is a quadratic polynomial in $|V^c|$ variables whose gap is efficiently computable and $x_S$ is the restriction of $x$ to $S \subset [n]$~\cite{maslov_fast_2024,hangleiter_fault-tolerant_2025}.
Here, an independent set of a hypergraph is a vertex subset that contains no 3-ary hyperedges in the hypergraph.
We can use this decomposition for verification: plant a large independent set of size $h = (1-c)n$ in an otherwise uniformly random hypergraph $H(g)$.
The existence and knowledge of such an independent set reduce the simulation time to $2^{cn}$, giving a $\kappa/c$-fold polynomial gap compared to the best unstructured algorithm.

But what is the complexity of finding a planted independent set? 
Independent sets are in one-to-one correspondence to cliques of the complement graph. 
Therefore, finding the planted independent set then amounts to solving the planted clique problem, which has been extensively studied for graphs.
While the maximum clique problem is \np-complete in the worst case setting, planting a clique in the Erd\H{o}s--R\'enyi random graph model $\mathcal{G}(n, 1/2)$ is conjectured to require $n^{\Theta(\log n)}$ time if the planted clique is of size between $(2+\epsilon) \log n$  and $O(\sqrt{n})$ \cite{jerrum_large_1992,kucera_expected_1995,AKS98,HS24}. 
It is of interest as a cryptographic assumption and it can be used for example as a basis of public-key cryptography schemes \cite{applebaum_public-key_2010}, as well as to plant secret backdoors in machine-learning models \cite{goldwasser_planting_2022}.
However, $n^{\Theta(\log n)}$ is not sufficiently hard for our purposes and moreover the problem is efficient for linear-size cliques. 
This is why we require a more sophisticated approach.

To achieve this, we observe that this approach still works even when there is a large independent set somewhere in the \emph{orbit} of $g$ under $\gl(n, \FF_2)$ transformations, that is, the set of hypergraphs induced by the polynomials $g \circ A $ defined as $g \circ A(x) = g(Ax)$ for $ A \in \gl(n, \FF_2)$ and a given cubic polynomial $g$ over $\FF_2$. 
This leads to the following definition. 
\begin{definition}[Independent space] \label{def:ind_sp}
  Let $g:\FF_2^n\to\FF_2$ be a cubic polynomial. We say that $V\leq \FF_2^n$ is an \emph{independent space} of $g$, if there exists  $A \in \gl(n, \FF_2)$ whose first $\dim(V)$ columns  generate $V$ and $g \circ A$ has an independent set on the first $\dim(V)$ coordinates. 
\end{definition}

\begin{figure}
\includegraphics{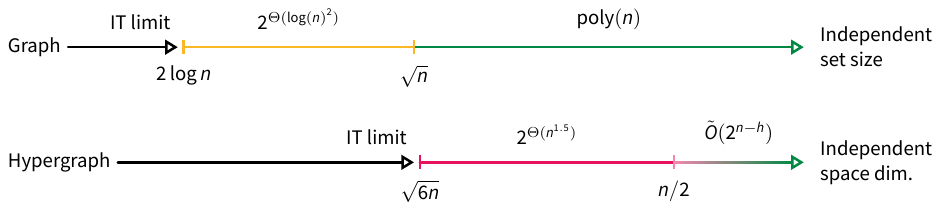}
  \caption{\label{fig:pis vs PIS regimes} Distinguishing regimes for independent sets of random graphs and independent spaces of random hypergraphs. The information-theoretic (IT) limit denotes the typical size of an independent set/subspace. The second transition is the computational boundary where a better algorithm than the trivial one starts to exist. }
\end{figure}

One may think of the problem to decide whether a given cubic polynomial $g$ has a large independent space as an exponentially lifted equivalent of the planted independent set problem wherein the $n$ graph vertices  are replaced by the $2^n$ vectors over $\FF_2^n$. 
In other words, while finding an independent set of size $h$ can be thought of as finding an element $P \in S_n$ of the symmetric group such that $g \circ P$ has an independent set on the first $h$ coordinates, finding an independent space asks for an element of the larger, general linear group $\gl(n, \FF_2)$. 

Analogously to planted independent sets in random graphs, we now consider planted independent spaces in random 3-uniform hypergraphs/random cubic polynomials. 
We find that a uniformly random cubic polynomial has an independent space of dimension $\le \sqrt{6n}$. 
We give evidence that distinguishing a random polynomial from one with a planted (random) $h$-dimensional independent space takes time $2^{\Theta(n^{1.5})}$ for $h \le (1/2 - \epsilon) n$ with $\epsilon > 0 $. 
Just like for planted independent sets, this algorithm checks if there is an independent space of dimension $d = 2 \sqrt{6n}$ by enumerating all these subspaces in time 
$2^{O(n^{1.5})}$
in order to distinguish a given polynomial from a random one. 
For any fixed constant $\epsilon>0$ and all sufficiently large $n$,
when $h\ge(1/2+\epsilon)n$, there exists a nontrivial algorithm with significantly better runtime $\tilde O(2^{n-h})$. 
This algorithm exploits the rank profile of Hessian slices of the degree-3 component, which may be the analogue of the degree distribution of a graph. 
The analogy between (graph) planted independent sets and (hypergraph) planted independent space is shown in \cref{fig:pis vs PIS regimes}. 

Given these tight correspondences between independent sets and independent spaces and the fact that planted independent sets have been extensively studied, we conjecture these limits to be optimal. 
For the purposes of our main result, we make the following conjecture.
\begin{conjecture}[Exponential planted independent space (\Exp\PIS) conjecture]
\label{conj:PIS security intro}
Consider cubic polynomials over $\FF_2$ in $n$ variables.
There is no $o(2^{n})$-time algorithm that distinguishes a random cubic polynomial from a random cubic polynomial with a planted $h$-dimensional independent space, where $h \le (1/2 - \epsilon) n$ for any constant $\epsilon > 0 $.  
\end{conjecture}

By planting an independent space of dimension $\lesssim n/2$ in random cubic polynomials we can obtain a polynomial gap between verification and simulation of samples from the corresponding IQP circuits. 

\subsection{Verifiable advantage from planted independent spaces}
\label{eq:intro verification}

We consider a verifier who chooses a polynomial $g$ uniformly at random with a planted uniformly random independent space of dimension $(1-c)n$ for $ 1/2 < c \le 1 $. 
She now reveals $g$ to an efficient quantum prover, but keeps the planted independent space secret. 
The prover returns samples from running the IQP circuit $C(g)$, and the verifier exploits her secret knowledge of the independent space to verify those samples. 
She then repeats this for polynomially many random choices of $g$ and averages the test outcomes. 

For the verification test, the verifier uses the linear XEB, defined as 
\begin{align}
\label{eq:xeb}
  \chi(q,p) \coloneqq 2^n \sum_{ x \in \FF_2^n} q(x) p(x) - 1, 
\end{align}
for sampled and target probability distributions over $\FF_2^n$,  $q$ and $p$, respectively.
The XEB gives a bona-fide benchmark for \emph{anticoncentrating} probability distributions $p$ whose normalized second moments $2^n\sum_x p(x)^2 = O(1)$ are constant. 
Since the planted independent space keeps the quadratic part of $g$ uniformly random,  this is the case and moreover the variance of the XEB is constant so that it can be sample-efficiently evaluated. 

To evaluate the XEB for a single circuit the verifier therefore computes the ideal probabilities $p(x)$ corresponding to the observed samples $x$ in time $\tilde O(2^{cn})$ using the vertex-cover decomposition algorithm \cite{maslov_fast_2024} and averages the results to obtain an estimate $\hat \chi$ of the XEB. 
She accepts if $\hat \chi > 1$ (the ideal value is $2$) and rejects otherwise. 
An honest prover running a near-ideal version of $C(g)$ will pass this test with high probability. 
On the other hand, we conjecture that there is no algorithm using polynomial space that achieves a high XEB score and has runtime $\tilde O(2^{n/2})$. 
\begin{conjecture}[Exponential XEB conjecture]
\label{conj:exp xeb intro}
 Sampling from a distribution $q$ that achieves $\chi(q,p) \ge 1$ against the ideal output distribution $p$ of a random degree-$3$ IQP circuit with inverse polynomial failure probability requires time $\Omega(2^{\kappa n})$  for some constant $\kappa > 1/2 $ or superpolynomial space. 
\end{conjecture}
We discuss the evidence for \cref{conj:exp xeb intro} below. 
Conjectures~\ref{conj:PIS security intro} and \ref{conj:exp xeb intro} imply our main result, as we show in \cref{sec:verification_of_iqp_circuits}.  
\begin{theorem}[Verifiable advantage from planted independent spaces]
\label{thm:main result}
  Let $g$ be a random cubic polynomial with a planted $(1-c)n$-dimensional independent space for $1/2 < c < \kappa $ and assume Conjectures~\ref{conj:PIS security intro} and \ref{conj:exp xeb intro}.
  Then, 
  \begin{enumerate}[label=(\roman*)]
    \item the XEB of the output distribution of the IQP circuit $C(g)$ can be estimated in time $\tilde O(2^{cn})$ and polynomial space, 
    \item an honest, polynomial-time quantum prover passes the XEB test with high probability, 
    \item there is no classical polynomial-space algorithm with runtime $o(2^{\kappa n})$ that passes the XEB test with high probability. 
  \end{enumerate}

\end{theorem}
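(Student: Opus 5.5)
The plan is to treat the three items separately. Items (i) and (ii) are essentially algorithmic and statistical. Item (iii) is a reduction combining the two conjectures, with a hybrid argument that swaps the planted polynomial for a uniformly random one.

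For (i), the verifier knows the secret: a matrix $A\in\gl(n,\FF_2)$ such that $g\circ A$ has an independent set $S$ of size $h=(1-c)n$ on the first $h$ coordinates. Since $a_g(x)=a_{g\circ A}(A^{T}x)$ up to the change of variables $y\mapsto Ay$ in the Fourier sum (a linear bijection), every amplitude of $C(g)$ is an amplitude of $C(g\circ A)$ at a transformed outcome. We then apply the vertex-cover decomposition displayed before \cref{def:ind_sp}. The complement $V=S^c$ has size $cn$. For each $z\in\FF_2^{cn}$, the restricted polynomial $g_z$ is quadratic in the variables of $S$, and its gap is computable in $\mathrm{poly}(n)$ time (Dickson / rank computation over $\FF_2$). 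This yields each $p(x)=a_g(x)^2$ exactly in time $\tilde O(2^{cn})$ and polynomial space. For the estimator $\hat\chi=\frac{2^n}{m}\sum_{i} p(x_i)-1$, we bound its variance by $2^{2n}\sum_x q(x)p(x)^2$. The planted distribution leaves the quadratic and linear parts of $g$ uniform, so second and fourth moments of $2^n p$ under honest sampling are $O(1)$ (anticoncentration, computed by averaging over the uniform quadratic part, which is independent of the planting). Chebyshev then gives accuracy with $m=O(1)$ or $\mathrm{poly}$ samples. Repeating over polynomially many $g$ and averaging gives concentration, for total time $\tilde O(2^{cn})$.

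For (ii), the honest prover prepares $\ket{+^n}$ and applies $C(g)$, which is a polynomial-size circuit of $CCZ$, $CZ$, $Z$ gates. It then measures in the $X$ basis, producing $q=p$. The expected XEB is $\chi(p,p)=2^n\sum_x p(x)^2-1$. Averaged over the uniform quadratic and linear parts, this equals the second moment, which is $\approx 2$ for degree-3 IQP (the collision probability computation $\mathbb E[2^n\sum_x p^2]=3-o(1)$, or the value $2$ claimed in the text). Combined with the variance bound from (i), the estimate exceeds the threshold $1$ with high probability. A slightly noisy implementation only lowers the value by a small constant.

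For (iii), argue by contradiction. Suppose a polynomial-space classical algorithm $\mathcal A$ with runtime $T=o(2^{\kappa n})$ passes the test with high probability on planted instances. If $\mathcal A$ also achieves $\chi\ge1$ on uniformly random cubic polynomials with inverse-polynomial failure, this directly contradicts \cref{conj:exp xeb intro}. Otherwise, build a distinguisher $\mathcal D$: on input $g$, run $\mathcal A$ to get samples, then compute the ideal probabilities $p(x_i)$ by brute force in $\tilde O(2^n)$ polynomial-space time and estimate $\chi$. This estimate separates planted from random instances with noticeable advantage. Repeating polynomially many times amplifies this. This step is the main obstacle. $\mathcal D$ runs in time $T+\tilde O(2^n)$, which is not $o(2^n)$, so it does not immediately contradict \cref{conj:PIS security intro}. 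I would first try to avoid brute-force evaluation by using the verifier's asymmetry. Note that \cref{conj:PIS security intro} is about distinguishing with no extra information, but the distinguisher cannot plant a secret into a given $g$. So instead I would estimate $\chi$ via a cheaper unbiased estimator: sample random $y$ and use $(-1)^{g(y)+x\cdot y}$ products, i.e.\ $2^n p(x)=\mathbb E_{y,y'}[(-1)^{g(y)+g(y')+x\cdot(y+y')}]\cdot 2^n$. Its variance is exponential, however, so a sub-$2^n$ estimate needs $\approx 2^{n}$ samples in the worst case. If that fails, I would reinterpret the ``$o(2^n)$'' in \cref{conj:PIS security intro} as allowing polynomial-overhead brute-force XEB checks: $T=o(2^{\kappa n})$ with $\kappa\le 1$ leaves the bottleneck at the $2^n$ evaluation, which can be done on only $O(1)$ samples, and that calls for careful bookkeeping of constants. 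Alternatively, I would invoke the conjecture in the form that the planted and random distributions are indistinguishable to any $2^{o(n)}$-overhead test, which suffices for the transfer. Either way the conclusion is that $\mathcal A$ must pass on random instances, and this contradicts \cref{conj:exp xeb intro}.
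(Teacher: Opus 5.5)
Your treatment of (i) and (ii) essentially matches the paper: for (i), the vertex-cover decomposition after the change of basis $A$ (\cref{thm:vertex cover algorithm}); for (ii), completeness from $m_2=3+o(1)$, $m_3=15+o(1)$ and Chebyshev (\cref{lem:completeness xeb}). One small correction: the variance of the honest estimator involves the third normalized moment $m_3$. That is exactly what \cref{lem:moments iqp} provides for planted ensembles. A fourth-moment bound is neither needed nor available there, since that lemma only covers $k\le 3$.

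The genuine gap is in (iii). Your hybrid is the right one: if $\mathcal A$ passes on planted instances but not on uniform ones, then checking whether its output passes the test distinguishes $\mathcal P_3(n)$ from $\mathcal P_3(n,h)$. But you cannot make this distinguisher run in $o(2^n)$, and none of your fallbacks fixes that:
\begin{itemize}
\item the Monte Carlo estimator has exponential variance;
\item brute-force evaluation of even $O(1)$ probabilities already costs $\Omega(2^n)$;
\item replacing \cref{conj:PIS security intro} with indistinguishability against ``$2^{o(n)}$-overhead tests'' is a stronger assumption than the theorem makes.
\end{itemize}
The missing observation is that the polynomial-space restriction applies only to the spoofer $\mathcal A$, through \cref{conj:exp xeb intro}. \Cref{conj:PIS security intro} constrains the distinguisher in time only. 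So you should not insist on polynomial-space brute force. The distinguisher can compute each $p(x_i)=\ngap(g+\langle x_i,\cdot\rangle)^2$ with the exponential-space algorithm of \cref{thm:asymptotic complexity gap}, in time $\tilde O(2^{0.9965n})$. Run $\mathcal A$ with a clock cut off at $o(2^{\kappa n})$, then evaluate the XEB on the $\mathrm{poly}(n)$ returned samples. The total cost is $o(2^{\kappa n})+\mathrm{poly}(n)\cdot\tilde O(2^{0.9965n})=o(2^n)$, because $\kappa\le 1$, and this contradicts \cref{conj:PIS security intro}. This is precisely the paper's argument. Without a sub-$2^n$ gap algorithm, the reduction does not go through under the stated conjectures.
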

In other words, we obtain a polynomial gap of at least $S \sim V^{\kappa/c}$ between verification and simulation times for cubic IQP circuits with $h = (1-c)n$-dimensional planted independent spaces. 
We note that we expect the actual cost differences to be significant in the parameter  regime of $n \lesssim 100$ and $c > 1/2$. 
In this case, verifying samples is feasible using simulation of an at most $50$-qubit IQP circuit, while simulation is beyond the scope of the best classical algorithms on the available hardware.

To the best of our knowledge this constitutes the first example where a significant gap is possible for a random-circuit-like scheme under a plausible cryptographic assumption. 
It is thus the first such example where a quantum computational task that cannot be solved on the best available classical computers can be verified using those same computers.

The key question for the efficacy of our verifiable advantage protocol is the  security of Conjectures~\ref{conj:PIS security intro} and \ref{conj:exp xeb intro}.

\subsection{The planted independent space problem for cubic polynomials}
\label{sub:PIS intro}

The core technical contribution of our work is to define and analyze the complexity of the planted independent space problem. 
We do this in terms of both worst-case and average-case complexity for different parameter regimes.

\begin{problem}[Independent space problem, decision (\decision-\IS)]
Given a cubic polynomial $g\in\FF_2[x_1, \dots, x_n]$ and $h\in[n]$, decide if $g$ admits an independent space of dimension~$h$.
\end{problem}
\begin{problem}[Independent space problem, search (\search-\IS)]
Given a cubic polynomial $g\in\FF_2[x_1, \dots, x_n]$ with the promise that $g$ contains a dimension-$h$ independent space, return a dimension-$h$ independent space. 
\end{problem}

For the definition of the average-case version, we consider the following distributions.
\begin{itemize}
  \item  Let $\mc P_3(n)$ be the uniform distribution over cubic polynomials with constant term $0$ in $\FF_2[x_1, \dots, x_n]$. That is, the coefficients of $x_ix_jx_k$, $x_ix_j$, and $x_k$, are sampled independently and uniformly from~$\FF_2$.
  \item Let $\mc P_3(n,h)$ be the distribution over cubic polynomials obtained by first drawing a uniformly random $g \leftarrow \mc P_3(n)$ and then planting a uniformly random $h$-dimensional independent space~$V$. This can be done by first setting the coefficients of $x_ix_jx_k$, $1\leq i<j<k\leq h$, to be $0$, and then sampling a random $A\in \gl(n, \FF_2)$ and outputting the multilinear representation of $g\circ A$, obtained using $x_i^2=x_i$.
\end{itemize}

\begin{problem}[Planted independent space problem, decision (\decision-$h$-\PIS)]
\label{prob:pis decision}
Given a cubic polynomial $g$ decide whether $g \leftarrow \mc P_3(n)$ or $g \leftarrow \mc P_3(n,h)$. 
\end{problem}

\begin{problem}[Planted independent space problem, search (\search-$h$-\PIS)]
\label{prob:pis search}
Given a cubic polynomial $g \leftarrow \mc P_3(n,h)$, find the $h$-dimensional independent space~$V$. 
\end{problem}

In order to exploit the vertex-cover algorithm for computing outcome probabilities of planted circuits, we need to solve the search version of the \PIS\ problem, and this also solves the decision version.
To give the strongest guarantees, our analysis therefore focuses on the (easier) decision version of the worst- and average-case problems. 

\paragraph{Worst-case hardness and search-to-decision reductions}

We study the complexity of both problems in terms of worst-case hardness as well as in terms of the runtime of concrete algorithms solving either problem. 
As our first result, we determine the worst-case hardness of the decision version of the \IS\ problem. 
\begin{theorem}[Worst-case hardness]\label{thm:np-hard}
\decision-\IS\ is \np-complete.
\end{theorem}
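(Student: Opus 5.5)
The proof has two parts: membership in \np, and \np-hardness by a reduction from independent sets in $3$-uniform hypergraphs. First I would recast \cref{def:ind_sp} in basis-free terms. Write $g_3$ for the degree-$3$ part of $g$. Let $T_g(a,b,c)\coloneqq D_aD_bD_c g$, where $D_a g(x)=g(x+a)+g(x)$. Over $\FF_2$ this is a constant, so $T_g$ is a trilinear form depending only on $g_3$. It is symmetric, and it vanishes whenever two of its arguments coincide, since $D_aD_a g=0$. Hence $T_g$ is alternating.

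For $A\in\gl(n,\FF_2)$ with columns $a_1,\dots,a_n$, the coefficient of $y_iy_jy_k$ ($i<j<k$) in the multilinear representative of $g\circ A$ equals $T_g(a_i,a_j,a_k)$. The reason is that lower-degree terms, including those produced by $y_i^2=y_i$, are killed by three distinct derivatives. Consequently $V$ is an independent space iff $T_g$ vanishes identically on $V\times V\times V$, i.e.\ $V$ is totally isotropic for $T_g$. In particular the choice of $A$ beyond a basis of $V$ is irrelevant.

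\textbf{Membership in \np.} The certificate is a basis $v_1,\dots,v_h$ of $V$. The verifier checks linear independence by Gaussian elimination. It then checks $T_g(v_i,v_j,v_k)=0$ for all $\binom h3$ triples, each in polynomial time, since $T_g$ is given explicitly by the cubic coefficients: $T_g(a,b,c)=\sum_{\{p,q,r\}} G_{pqr}\,\det[a,b,c]_{\{p,q,r\}}$, the determinant of the rows $p,q,r$ of the $n\times3$ matrix $[a\,b\,c]$. Extending the basis to some $A$ is then automatic.

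\textbf{Hardness.} I would reduce from \textsc{Independent Set} in $3$-uniform hypergraphs, which is \np-complete (e.g.\ from graph independent set by adding a fresh vertex to each edge in a padded fashion, or directly via \textsc{3-SAT}). Given $H$, let $g_H=\sum_{\{i,j,k\}\in E(H)}x_ix_jx_k$. An independent set $S$ of $H$ immediately gives the independent space $\spn\{e_i:i\in S\}$. The converse is the main obstacle: a totally isotropic subspace of $T_{g_H}$ need not be spanned by coordinate vectors. For example, for $x_1x_2x_3+x_1x_2x_4$ the space $\spn\{e_1,e_2,e_3+e_4\}$ is independent, and here there happens to be a coordinate independent set $\{1,3,4\}$ of the same size.

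I would first try a leading-term argument. Fix an ordering of the coordinates and put a basis of $V$ in reduced echelon form with pivot set $P$, $|P|=h$. Then try to show that $P$, or $P$ after a suitable re-choice of ordering adapted to $H$, is independent in $H$. The idea is to argue that for a minimal edge in $P$, the monomial $e_a\wedge e_b\wedge e_c$ appears in $T(v_a,v_b,v_c)$ with a coefficient that cannot be cancelled by non-pivot contributions.

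If this fails in general, the fallback is a rigidity gadget: blow up each vertex $i$ into a block of fresh variables and add gadget cubic terms. These terms should make any isotropic subspace of the target dimension project injectively onto a set of blocks that are essentially coordinate directions, for instance by ensuring that any vector with support in two blocks is forced to lie in the radical of a nondegenerate auxiliary alternating form. A dimension count then transfers the isotropic subspace back to an independent set of size $h$ in $H$. Designing and verifying this gadget is where I expect most of the work to lie.
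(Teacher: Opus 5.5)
\textbf{Hardness is not established; membership in \np\ is fine.} Your membership argument matches the paper's remark that a basis of $V$ is a polynomial-size witness. The problem is the hardness direction, and it is more than a missing proof: the reduction $H\mapsto g_H$ with unchanged target $h$ is false for general $3$-uniform hypergraphs over $\FF_2$.

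Take $H=K_4^{(3)}$, so $g_H=x_1x_2x_3+x_1x_2x_4+x_1x_3x_4+x_2x_3x_4$ and the largest independent set has size $2$. Yet $V=\spn\{e_1+e_2,e_3,e_4\}$ is a $3$-dimensional independent space, since $T_g(e_1+e_2,e_3,e_4)=G_{134}+G_{234}=0$. This is the phenomenon behind the paper's remark that $abc+abd+acd+bcd$ becomes $d(ab+ac+bc)$ after a change of variables. So your map sends a no-instance to a yes-instance. Your leading-term argument cannot fix this for any ordering: every $3$-element pivot set is an edge of $K_4^{(3)}$. Here the pivot edge $\{1,3,4\}$ is cancelled by the edge $\{2,3,4\}$, which uses the non-pivot coordinate $2$. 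The fallback gadget would have to suppress exactly this kind of cancellation, and it is not constructed. As written, the proposal therefore proves only membership in \np.

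\textbf{What the paper does instead.} It changes the source problem rather than building gadgets. Start from a trilinear form $T(x,y,z)$ viewed as a cubic form $P_T$ in the $N=n_x+n_y+n_z$ variables $x\cup y\cup z$, so that $H(P_T)=\supp T$ is $3$-partite. Order all $x$-variables before all $y$-variables before all $z$-variables, and require $\supp T$ to be an antichain in the induced product order.

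Under these hypotheses your echelon-pivot argument goes through without any minimality choice (\cref{lem:symSR-tau}). Suppose the pivots $x_a,y_b,z_c$ formed an edge $e$. Every other edge $e'$ has some coordinate strictly before the corresponding pivot, and this forces the permanent of its rows against $v_a,v_b,v_c$ to vanish. Hence $T(v_a,v_b,v_c)=c_e\neq 0$, a contradiction.

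It follows that the maximum independent space dimension equals $N-\tau(\supp T)$, which is $N-\SR(T)$ for antichain supports. \np-hardness is then imported from the slice-rank hardness of \cite{BlaserILPS21}, whose instances already have antichain support. Both structural hypotheses are needed: $K_4^{(3)}$ is not $3$-partite, and without the antichain condition an edge $e'<e$ can contribute to the $d_ad_bd_c$ term and cancel the pivot edge.
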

To prove this result, we show tight equivalences between the maximum independent space of a cubic polynomial, the minimum \emph{vertex cover} of its support, its \emph{strength}, and its \emph{symmetric slice rank}. 
Using these equivalences, we can reduce to the problem of deciding whether the slice rank of a given trilinear form is less than a given $r$, which was shown to be \np-hard in \cite{BlaserILPS21}. 

We remark that the independent set problem admits a natural search-to-decision reduction in the worst-case setting. In the average-case setting, the search-to-decision reduction was recently resolved satisfactorily in \cite{HS24}. 

The independent space problem also admits a search-to-decision reduction in the worst-case setting, though such a reduction is not as natural as in the independent set case. That is, by Theorem~\ref{thm:np-hard}, \decision-\IS\ is \np-hard. As it admits natural polynomial-size witnesses (the transformation matrix in $\gl(n, \FF_2)$), it is \np-complete. This allows us to guess the \np-witness (such as a basis of the independent space) bit by bit, and for a partially fixed basis, reduce to the \decision-\IS\ thanks to its \np-completeness. However, this reduction is not natural as the resulting instance of \decision-\IS\ may not relate to the original input instance clearly. 

The average-case search-to-decision reduction for the independent space problem looks rather non-trivial, and we leave this to future work. 

\paragraph{Average-case setting.}

Second, we constrain the region in which we can hope for a polynomial verification-simulation gap by analyzing the complexity of \cref{prob:pis decision,prob:pis search} in the regime where $h \ge n(1/2 + \epsilon)$ for any constant $\epsilon > 0$. In this regime, we give non-trivial algorithms that solve both problems using the rank profile of the Hessian slice matrix evaluated at random vectors. 
\begin{theorem}[The $h \ge n(1/2 + \epsilon)$ regime]
Whenever the planted independent space has dimension~$h\ge(1/2+\epsilon)n$, where $\epsilon>0$ is a constant, \search-$h$-\PIS\ can be solved in time $\tilde O(2^{n-h})$.
\end{theorem}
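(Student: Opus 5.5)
We first reduce to the homogeneous cubic part. Write $g\circ A^{-1} = f$ with $f$ having no cubic monomials inside the first $h$ coordinates, and let $T$ be the symmetric trilinear form associated with the degree-3 part of $g$. The lower-degree terms never affect independence, so we discard them. For $v\in\FF_2^n$ we consider the Hessian slice $H_v(x,y) = T(v,x,y)$, an $n\times n$ symmetric (alternating) matrix over $\FF_2$ that is linear in $v$. Let $V$ be the planted space of dimension $h$.

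The key structural observation concerns $v\in V$. For such $v$, $T(v,x,y)=0$ whenever $x,y\in V$, since $V$ is totally isotropic for all three slots. Hence $H_v$ vanishes on $V\times V$, and its rank is at most $2(n-h)$. For a generic $v\notin V$, by contrast, $H_v$ restricted to $V\times V$ looks like a random alternating form. Its rank should then be close to $n$, up to parity and $O(1)$ defect, with high probability over the planting. Since $h\ge(1/2+\epsilon)n$, we have $2(n-h)\le(1-2\epsilon)n$. This is a constant-fraction gap, so rank separates $V$ from its complement.

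The algorithm exploits this separation with a cheap search. Take a complement coordinate system. Every $v\in V$ can be written as $v = u + w$, where $u$ ranges over a fixed $h$-dimensional subspace $U$ (chosen, say, by coordinates) and $w$ over a complement $W$ of dimension $n-h$, with $V$ the graph of a linear map $\phi:U\to W$. We do not know $\phi$, but we find it column by column. For each basis vector $e_i$ of $U$ ($h$ of them), we enumerate all $2^{n-h}$ candidates $w\in W$ and test whether $\operatorname{rank} H_{e_i+w}\le 2(n-h)$. This takes $\tilde O(2^{n-h})$ time per column, and $\tilde O(2^{n-h})$ overall since the factor $h$ is polynomial. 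The true value $\phi(e_i)$ always passes. The recovered vectors then span $V$, and we verify the result by checking that $T$ vanishes on the span, which takes polynomial time. (The unknown-coordinate issue is handled by first trying a random choice of $U$; a random $h$-set of coordinates gives a $U$ complementary to $V$ with constant probability. We repeat if the check fails.)

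The main obstacle is the probabilistic claim that no false candidate passes. We need that, with high probability over $g\leftarrow\mc P_3(n,h)$, every $v\notin V$ in the searched affine sets has $\operatorname{rank} H_v>2(n-h)$. We plan to prove this via a union bound over all $2^n$ vectors $v\notin V$. For fixed $v\notin V$, the form $H_v$ on $V$ depends on the uniformly random coefficients of monomials involving the $v$-direction together with two $V$-directions, which are unconstrained by the planting. So $H_v|_{V\times V}$ is a uniformly random alternating $h\times h$ matrix. The probability that such a matrix has rank $\le 2(n-h)$ is at most $2^{-\Omega((h-2(n-h))^2)} = 2^{-\Omega(\epsilon^2 n^2)}$. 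This easily beats the $2^n$ union bound, and the rank of $H_v$ is at least the rank of this restriction. The delicate point is making the "uniformly random restriction" rigorous after the change of basis $A$. We will handle it by working in the planted basis, where the coefficients $G_{ijk}$ with $i>h$ and $j,k\le h$ are independent uniform bits.
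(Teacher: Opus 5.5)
\textbf{Gap: the false-positive bound fails for part of the regime.} Your coset enumeration is a legitimate alternative to the paper's random sampling of $4h\cdot 2^{n-h}$ vectors, but the step that rules out false positives does not cover the regime the theorem is about. You lower-bound $\operatorname{rank} H_v$ for $v\notin V$ by the rank of $H_v|_{V\times V}$, which is a uniformly random alternating $h\times h$ matrix. Such a matrix has rank at most $h$, and $h\le 2(n-h)$ whenever $h\le 2n/3$. So for $(1/2+\epsilon)n\le h\le 2n/3$ (every $\epsilon\le 1/6$), the restriction can never certify $\operatorname{rank} H_v>2(n-h)$. As far as your argument knows, every candidate in every coset could then pass the test. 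The claimed exponent is an arithmetic slip: $h-2(n-h)=3h-2n$, which is negative near $h=n/2$ rather than $\Theta(\epsilon n)$. The quantity that is $\Theta(\epsilon n)$ is $2h-n$. As written, your argument only works for $h\ge(2/3+\delta)n$.

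\textbf{The fix, which is what the paper does.} Use the randomness of the whole slice, not just its $V\times V$ block. For $v\notin V$, extend $v$ to a basis $v,w_1,\dots,w_{n-1}$ with $V=\operatorname{span}\{w_1,\dots,w_h\}$. The planting only constrains $T(w_i,w_j,w_k)$ with $i,j,k\le h$. Hence all $\binom{n-1}{2}$ coefficients $T(v,w_j,w_k)$ are independent uniform bits, and $H_v=0\oplus A$ with $A$ a uniformly random alternating $(n-1)\times(n-1)$ matrix. Then $\operatorname{rank}H_v\le 2(n-h)$ forces $\dim\ker A\ge 2h-n-1$. A fixed $d$-dimensional subspace lies in $\ker A$ with probability $2^{-d(m-d)-\binom d2}$, and there are at most $4\cdot 2^{d(m-d)}$ such subspaces. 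A union bound therefore gives probability at most $4\cdot 2^{-\binom{2h-n-1}{2}}=2^{-\Omega(\epsilon^2n^2)}$, which does survive your union bound over all $2^n$ vectors.

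\textbf{A smaller point on transversality.} For $V$ to be the graph of $\phi:U\to W$ you need $V\cap W=\{0\}$; $U$ cannot be complementary to $V$, since $h>n/2$. Also, for a fixed $V$, repeated random coordinate $h$-sets are not independent trials. For example, if $V=\operatorname{span}\{e_1,\dots,e_h\}$, only one coordinate set works. Precomposing $g$ with a fresh uniformly random $B\in\gl(n,\FF_2)$ in each trial fixes this: each trial then succeeds independently with probability at least $\prod_{i\ge1}(1-2^{-i})>1/4$.

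With these repairs, your enumeration over the cosets $e_i+W$ achieves the same $O(hn^3 2^{n-h})$ running time as the paper. It replaces the paper's argument that the sampled vectors in $V$ span $V$ with a transversality condition.
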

The key observation of the algorithm achieving this runtime is that the Hessian matrix of a cubic form with a large planted independent space typically has a rank deficiency compared to a uniformly random cubic polynomial at any point in the independent space whenever $h\ge(1/2+\epsilon)n$. 
To distinguish against the uniform case, it is therefore sufficient to find a single element of the independent space. 
To actually find the independent space, a constant number of such points is sufficient since those will pin down the subspace completely, analogously to how large planted cliques of size $s \gtrsim \sqrt n$ can be found quickly once a few of its elements have been found \cite{feige_finding_2010}.

On the other hand, we show that the largest independent space of a random cubic polynomial has dimension $h \le \sqrt{6n}$ with high probability. 
This implies that $\mc P_3(n)$ and $\mc P_3(n,h)$ are statistically distinguishable for $h > \sqrt {6n}$ as outlined above. 
\begin{theorem}
  For $h > \sqrt{6n}$, there is an algorithm solving \decision-$h$-\PIS\ in time $2^{O(n^{1.5})}$. 
\end{theorem}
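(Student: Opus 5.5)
The plan is to run an exhaustive statistical test, relying on the fact stated just before the theorem: with high probability the largest independent space of $g\leftarrow\mc P_3(n)$ has dimension at most $\sqrt{6n}$. Since the statement only asks for $h>\sqrt{6n}$, I would work at a threshold dimension $d$ with $\sqrt{6n}<d\le h$. Concretely, take $d=\min(h,\lceil 2\sqrt{6n}\,\rceil)$, or simply $d=h$ if that is below $2\sqrt{6n}$. The algorithm enumerates all $d$-dimensional subspaces $V\le\FF_2^n$. It outputs ``planted'' if some $V$ is an independent space of $g$, and ``random'' otherwise.

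Correctness on the planted side is immediate. A planted $h$-dimensional independent space contains $d$-dimensional independent spaces: restricting the witness $A$ to its first $d$ columns, the first $d$ coordinates still carry no degree-$3$ monomial. So the test always says ``planted'' for $g\leftarrow\mc P_3(n,h)$. On the random side, the earlier bound gives failure probability $o(1)$.

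If I had to reprove that bound, I would use a first-moment argument. Fix $V$ of dimension $d$ and complete a basis of $V$ to some $A\in\gl(n,\FF_2)$. The map $g\mapsto g\circ A$ is a bijection on cubic polynomials with zero constant term, and it preserves the uniform distribution on the degree-$3$ part modulo lower-order terms. Hence the $\binom{d}{3}$ coefficients of $x_ix_jx_k$ with $i<j<k\le d$ in $g\circ A$ are uniform. The probability that they all vanish is $2^{-\binom d3}$. A union bound over the at most $2^{nd}$ subspaces gives failure probability at most $2^{nd-\binom d3}$. This tends to $0$ once $d^3/6\gtrsim nd$, i.e.\ $d\gtrsim\sqrt{6n}$, and with $d=2\sqrt{6n}$ the margin is comfortable.

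For the runtime, there are at most $2^{nd}$ subspaces, which is $2^{O(n^{1.5})}$ since $d=O(\sqrt n)$. Each can be enumerated by a reduced-echelon basis $B\in\FF_2^{n\times d}$. Testing a given $V$ means computing the degree-$3$ part of $g(By)$ for $y\in\FF_2^d$, which is the trilinear form of $g$ restricted to $V$. This can be done in $\mathrm{poly}(n)$ time by substituting $x=By$ into the cubic monomials and reducing with $y_i^2=y_i$. Equivalently, one checks that the symmetric trilinear form associated with the cubic part vanishes on distinct basis triples and that the induced degree-$3$ coefficients vanish. Total time is $2^{O(n^{1.5})}\mathrm{poly}(n)=2^{O(n^{1.5})}$.

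The main obstacle is making the test ``independent set on the first $d$ coordinates of $g\circ A$'' depend only on $V$ and not on the completion $A$. I would handle this by observing that the degree-$3$ coefficients of $g\circ A$ indexed by $[d]$ are exactly the cubic monomials of $(g\circ A)|_{y_{d+1}=\dots=y_n=0}=g(By)$. Lower-degree terms arising from $x_i^2=x_i$ cannot create cubic terms, so the condition is intrinsic to $V$ up to a change of basis within $V$. A change of basis within $V$ preserves the property that the restriction has degree at most $2$.
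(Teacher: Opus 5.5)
Your proposal is correct and follows essentially the paper's route: you enumerate all $d$-dimensional subspaces for some $d$ with $\sqrt{6n}<d\le h$ and $d=O(\sqrt n)$, and you invoke the first-moment lemma that $\alpha(f)\le\lfloor\sqrt{6n}\rfloor$ with high probability for $f\leftarrow\mc P_3(n)$. The paper speaks of $d=2\sqrt{6n}$; your choice $d=\min(h,\lceil 2\sqrt{6n}\rceil)$ handles $h<2\sqrt{6n}$ correctly. The one slip is in your optional reproof: with the crude count $2^{nd}$ the exponent is $\binom{d}{3}-nd=\frac{d}{6}\bigl((d-1)(d-2)-6n\bigr)$, which is negative at $d=\lfloor\sqrt{6n}\rfloor+1$. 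So for $h$ just above $\sqrt{6n}$, where you set $d=h$, you need the paper's sharper count $2^{d(n-d)+2}$, which gives exponent $\binom{d}{3}-d(n-d)-2\ge d^2/2\ge 3n$ for all $d\ge\lceil\sqrt{6n}\rceil$.
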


Based on this, we conjecture that $n/2$ is indeed the computational threshold, yielding \cref{conj:PIS security intro}.

\paragraph{Heuristic algorithms and intuition for Conjecture~\ref{conj:PIS security intro}.}
The \PIS{} problem is yet another addition to the family of planted structure problems. As mentioned, a classical problem in this family is the planted clique problem, which we reviewed above. 

Another related problem is a problem underlying the so-called Unbalanced Oil and Vinegar (UOV) digital signature scheme \cite{patarin_hidden_1996,UOV99}; see \cite{UOV2025} for more recent updates.
This problem concerns the following planted structure. 
Let $\M(n, \FF_q)$ denote the set of $n \times n$ matrices over the field $\FF_q$. 
Given  symmetric matrices $A_1, \dots, A_m\in \M(n, \FF_q)$, plant a random totally-isotropic space $U\leq \FF_q^n$ of dimension $d$; here, $U$ is totally-isotropic if every $A_i$, viewed as a symmetric bilinear form on $\FF_q^n$, evaluates to $0$ on $U$. 

One main approach to attack UOV is to recover this planted totally-isotropic space, and current best algorithms still run in exponential time for $d=m\leq n/3$. This can be done via the Gr\"obner-basis approach by formulating a system of polynomial equations \cite{UOV2025}. To analyze the performance of Gr\"obner-basis attacks, one usually resorts to regularity or semiregularity analyses \cite{BFS15}.
For $h$-\PIS\ in the regime $h \le (1/2 - \epsilon) n$ this prediction is $2^{O(n^{3/2})}$ as suggested by the na\"ive distinguishing algorithm outlined above.

Recently, the planted totally-isotropic space problem was also introduced in another setting in \cite{liu_planted_2026}. 
The work \cite{liu_planted_2026} was motivated by transferring the private simultaneous messages and secret sharing schemes based on planted cliques \cite{abram_cryptography_2023} to a linear algebraic setting. For a better analogy with planted cliques (or equivalently, independent sets), the planted totally-isotropic problem was introduced for alternating matrices. Recall that $A\in\M(n, \FF_q)$ is alternating, if for any $u\in \FF_q^n$, $u^tAu=0$. For $(A_1, \dots, A_m)$, $A_i\in\M(n, \FF_q)$ alternating, the main parameter range considered in \cite{liu_planted_2026} was when $m=\lceil n/\log n\rceil$, and $d\ll n/2$. 

Note that $(A_1, \dots, A_m)$ naturally defines an alternating bilinear map $\phi:U\times U\to W$, where $U=\FF_q^n$ and $W=\FF_q^m$. We can also compare the complexity of solving \PIS\ for polynomials with \PIS\ for alternating bilinear maps $\phi: U \times U \rightarrow W$ satisfying $\phi(V,V) =0$ for a planted subspace $V \le U$ \cite{liu_planted_2026}.
An alternating bilinear map naturally admits an interpretation as a subspace of alternating matrices $\langle M_k \rangle_{k \in [\dim(W)]}$, where $M_k \in V \times V$ are alternating matrices, or, equivalently, a cubic tensor on $U \times U \times W$. 
Whereas an $h$-dimensional independent space of a polynomial corresponds to a basis in which there is an $h\times h \times h$ zero-subcube in the symmetric coefficient tensor, an independent space of a bilinear form corresponds to an $h \times h \times \dim(W)$ \emph{tube} of zeros in the corresponding tensor. 
This interpretation allows for a relatively simple proof of 
\np-hardness of finding the planted subspace by embedding a graph into a tensor using linear independence \cite{bei_independent_2019}.
Correspondingly, in the bilinear-form case, the \PIS\ problem can be solved by asking that $v^T M_k w =0$ for all $v,w \in V$ and $k \in [\dim(W)]$. 
In contrast, in the polynomial/symmetric setting we are in, the condition is the stronger 
\begin{align}
  u^T M_v w = 0 \quad \text{ for all } u,v,w \in V,
\end{align}
where $M_v = \sum_k v_k M_k$ is a certain linear combination of the tensor slices. 
This increase in the degree of the problem makes the na\"ive solution strategies harder. 
Indeed, even in the $h \ge(1/2+\epsilon)n$ setting, the best we can do is an $\tilde O(2^{n-h})$-time algorithm for solving \PIS, while in the same scaling scenario, the alternating bilinear \PIS\ problem admits a polynomial-time solution \cite{liu_planted_2026}.

\subsection{The cubic XEB problem}
\label{sub:expxeb intro}

Let us first consider algorithms for the gap of cubic Boolean polynomials.
This corresponds to an amplitude estimation algorithm, which can be used as a subroutine to generate samples. 
In fact, all known ways to near-exactly sample from IQP circuits involve amplitude estimation as a subroutine \cite{hangleiter_computational_2023}.

Our central piece of evidence for \cref{conj:exp xeb intro} is the fact that there is no known algorithm for computing the gaps of Boolean polynomials that uses runtime $o(2^{n(1-\epsilon)})$ for any $\epsilon >0$ and polynomial space.
The na\"ive and best known polynomial-space algorithm for the Boolean gap just computes the sum of size $2^n$ by going through all $n$-bitstrings in Gray-code order using time $O(2^n n^2)$. 

If we drop the polynomial space requirement, surprisingly, one can improve the na\"ive asymptotic runtime $\Theta(2^n)$ to $\tilde O(2^{0.9965n})$ \cite{lokshtanov_beating_2017,dalzell_how_2020}. 
The key idea of \textcite{lokshtanov_beating_2017} is to find polynomial representations of the partial sums 
\begin{align}
  F_b(y) \coloneqq \sum_{a \in \bin^b}(-1)^{f(y,a)},
\end{align}
for $b = (1-c)n$ computing and storing which requires space scaling strictly better than $2^{n}$. 
Then, in a second step, one can evaluate $F_b$ on all $2^{cn}$ inputs using the compressed representation of $F_b$. 
In upcoming work, we show that this route can be improved significantly, yielding an algorithm with runtime $\tilde O(2^{0.552n + O(\sqrt n)})$ \cite{upcoming}. 
However, there remain barriers to reaching runtime $\tilde O(2^{n/2})$ for arbitrary or random cubic polynomials using this approach: 
we would need to find a size-$2^{n/2}$ representation that lets us {}compute $F_{n/2}$ for every input. 
It also seems as though this approach to computing the gap intrinsically requires a representation of the functions $F_{b}$, and therefore it would not lead to a polynomial-space algorithm. 
From the other side, the best known \ma\ algorithm for computing the gap has runtime $\tilde O(2^{n/2})$ \cite{williams_strong_2016}---and uses a significantly stronger model of computation. 
As we elaborate in the main text, we believe that achieving this runtime in the standard model would be highly surprising. 
This lends further justification to our polynomial-space conjecture.

An alternative route to disprove \cref{conj:exp xeb intro} would be to show that one can pass the XEB using samples from a different distribution that can be simulated more efficiently than the ideal distribution.  
This is the approach that XEB spoofers for noisy circuits take \cite{barak_spoofing_2021,gao_limitations_2024}. 
However, all known spoofers for the XEB exploit noise in the circuit, and there is no known sub-$2^n$ time classical algorithm that achieves an XEB score close to ideal as required in \cref{conj:exp xeb intro}. 
Again, finding such an algorithm would constitute a breakthrough in the theory of XEB.

\subsection{Discussion and outlook}
\label{sub:intro outlook}

Our work opens up several interesting directions of research, both in terms of potential applications and potential improvements. 

\paragraph{Application to classically certified randomness}
An important advantage of our approach to verifiable advantage compared to peaked distributions is that the sample distribution has high entropy. 
This means that they can potentially be used to generate classically certified randomness, an idea that is due to \textcite{aaronson_certified_2023}. 
This yields a useful potential application of our verifiable advantage protocol. 

For the generation of cryptographically certified randomness it is important to take into account that an adversarial prover has access to a quantum computer which they can use to perform efficient quantum computations.
Using a quantum computer, an adversary can pass XEB in time $O(2^{n/2})$ using amplitude amplification while sampling from a distribution with lower min entropy. 
Therefore, the security gap collapses for quantum computers with exponential runtime. 
However, we conjecture that our scheme remains secure against polynomial-time quantum computers, and $O(2^{n/2})$-time and polynomial-space classical computers. 
We take this to be a reasonable adversary model: 
The best a polynomial-time quantum adversary can do is to run the true circuits and return the measurement samples to the classical verifier.
The best a polynomial-space classical adversary can do is to spend time $\tilde O(2^{\kappa n})$ in order to simulate the correct distribution. 
The verifier will then use her knowledge of the planted subspace to verify the samples (in time $\tilde O(2^{cn})$).
This is in contrast with the original proposal using random circuits, where generating certifiable randomness required specific timing assumptions and fast response times of the prover~\cite{liu_certified_2025,liu_certified_2025-1}. 
We sketch this application in \cref{sec:certified randomness}, leaving a full security analysis to future work. 

\paragraph{Towards efficient verification}
One of the central open questions raised by our work is whether our approach can be extended to achieve polynomial-time verification of quantum advantage, that is, an exponential gap between verification and simulation runtimes. 
As we find that polynomials with large independent spaces of dimension $h > n/2$ are distinguishable from random ones in time $\tilde O(2^{n-h})$ and the secret independent space can be recovered in time $\tilde O(2^{n-h})$, 
our scheme hits a barrier at $h = n/2$.
This implies that it can at most yield a polynomial gap between verification and simulation.

Nonetheless, one can think more generally about our approach to plant secret structures in degree-$3$ circuits that simplify simulation of the circuit. 
Several directions remain to be explored.
First, observe that our approach relied on planting a secret low-rank stabilizer decomposition of the output state. 
This makes simulation amenable to stabilizer-rank simulators given knowledge of the secret structure. 
Another natural structure that renders classical simulation more efficient than in the worst case is low entanglement. 
To see how low entanglement manifests in the polynomial coefficients, consider a cubic polynomial $g$ whose coefficients $G_{ijk}=0$ whenever $|k - i| > \log n$  and random otherwise.
The coefficient tensor $G$ then has a \emph{tube} of nonzero entries along the diagonal, and the corresponding circuit has low depth on a 1D line. 
We can again obfuscate the corresponding polynomial as $g = f \circ A$ by a linear map $A \in \gl(n, \FF_2)$. 
We leave it to future work to fully assess the feasibility and security of this approach. 

As a first step, we show in \cref{app:attack_on_block_diagonal_structure} that block-diagonal structure can always be efficiently recovered.
This is a special instance of the low-entanglement structure, where the circuit is unentangled between blocks.
This has applications to a related proposal by \textcite{gheorghiu_obfuscation} for the forrelation problem. The forrelation problem is to decide whether 
\begin{align}
  F(f,g) \coloneqq \langle f, \hat g \rangle  \ge 1/\poly(n), 
\end{align}
given Boolean functions $f,g$ and the Fourier transform 
\begin{align}
  \hat g(s) = \frac 1 {2^n}\sum_x (-1)^{x s}g(x). 
\end{align}
This problem can be solved by a quantum computer since 
\begin{align}
  F(f,g) = \bra {+^n} C(g) H^{\otimes n}C(f) \ket {+^n},
\end{align}
i.e., there is a simple quantum circuit solving the problem. 
\textcite{gheorghiu_obfuscation} consider the forrelation problem for cubic polynomials $f,g:\FF_2^n\to\FF_2$ with a hidden product/block structure. 
They observe that the forrelator $F$ can be efficiently computed classically if the product structure is known, but is classically hard if detecting that structure is hard.
Our recovery algorithm rules out this approach to verifiable quantum advantage. 

\paragraph{Further directions}
Another interesting problem with potential practical applications to verifiable advantage is to consider planted structures under different orbits. Here, we extended the independent set problem corresponding to orbits over $S(n)$, to independent spaces which are orbits over $\gl(n, \FF_2)$. 
Further interesting orbits to consider are Clifford circuits that map cubic polynomial phase states to cubic polynomial phase states, as well as families of efficiently computable and invertible Boolean functions. 

We restricted ourselves to Boolean polynomials over $\FF_2$, but the planted independent space problem is also well defined over arbitrary finite fields $\FF_q$.
The ideas of several results in this article should carry over to fields of higher order, especially for fields of characteristic $>3$ due to the correspondence between symmetric trilinear forms and cubic forms. Still, it would be interesting to examine more carefully how things behave over larger fields.

\subsection{Overview}

The technical part of this paper is structured as follows. 
In \cref{sec:problem statement} we formally introduce polynomials and symmetric tensors, state some basic properties and introduce the planted independent space problem and state our conjecture regarding it. 
In \cref{sec:verification_of_iqp_circuits} we then detail the headline application of the exponential hardness of \PIS, namely, how it can be used for verifiable quantum advantage under an additional conjecture on the exponential difficulty of spoofing the XEB. 
Then, in \cref{sec:worst-case hardness,sec:cryptanalysis} we analyse in detail the worst- and average-case hardness of the \PIS\ problem, respectively.

\section*{AI Statement}
All of the original concepts and proof strategies of this paper are our own. 
We used generative AI of various generations for assistance in proving lemmas, checking technical correctness, and for feedback on earlier versions of the draft. 

\section*{Acknowledgements}

We are grateful for discussions with Michael Walter and Jeroen Zuiddam. 
Part of this work was supported by the Simons Institute for the Theory of Computing, and conducted when MB and YQ were visiting the program ``Complexity and Linear Algebra''.
DH was supported by a Simons postdoctoral fellowship through DOE QSA and NSF QLCI Grant No.\ 2016245, and by the Swiss National Science Foundation through Ambizione Grant No.\ 223764. MJG acknowledges support from NSF QLCI grant OMA-2120757, NSF QLCI grant OMA-2553574, and NSF NQVL:Design:ORAQL grant 2533041.


\section{The polynomial planted independent space problem}
\label{sec:problem statement}

\subsection{Boolean polynomials and tensors}
\label{sub:preliminaries}

We will work over the field $\FF = \FF_2$ and be concerned with cubic Boolean polynomials $f: \FF_2^n \rightarrow \FF_2$ with constant term being $0$ as follows:
\begin{align}
\label{eq:cubic polynomial normal form}
  f(x) = \sum_{1\leq i < j < k \leq n} F_{ijk} x_i x_j x_k + \sum_{1\leq i < j \leq n} F_{ij} x_i x_j + \sum_{1 \le i\le n} F_{i} x_i,
\end{align}
where we understand all addition modulo $2$. 
We decompose the polynomial into unique multilinear degree-$3,2$ and $1$ forms, that is, $f(x)=f^{(3)}(x)+f^{(2)}(x)+f^{(1)}(x)$.
This is natural, as for any $x \in \FF_2$, we have $x = x^k $ for any $k \in \mb N$ and therefore a cubic term of the form $x_i^2 x_j = x_i x_j$ can be reduced to a quadratic term. 
Throughout, we identify every polynomial $f:\FF_2^n\to\FF_2$ with its unique multilinear representative in $\FF_2[x_1,\dots,x_n]$, i.e., we work in $\FF_2[x_1,\dots,x_n]/(x_i^2-x_i)$. 
When referring to degrees, coefficients and (formal) derivatives we refer to this representative, and we reduce compositions such as $f\circ A$ back to this representative.
A homogeneous degree-$3$ polynomial $f$ is also called a \emph{cubic form}. 

In the following, we record some structural results about cubic polynomials over $\FF_2$. While these should be known in the literature, we prove them for completeness.

\paragraph{Hessian matrices of cubic polynomials.}
Let $f\in \mathbb{F}_2[x_1, \dots, x_n]$
be a cubic polynomial as in \cref{eq:cubic polynomial normal form}.
\begin{equation}\label{eq:partials}
\partial_i\partial_j f=f(x+e_i+e_j)+f(x+e_i)+f(x+e_j)+f(x).
\end{equation}
\begin{fact}\label{fact:partials}
\begin{enumerate}
    \item For $i, j\in[n]$, $\partial_i\partial_j f=\partial_j\partial_i f$.
    \item When $i=j$, $\partial_i\partial_j f=0$.
    \item When $i<j$, $\partial_i\partial_j f=\sum_{1\leq k<i}F_{k,i,j}x_k+\sum_{i<k<j}F_{i,k,j}x_k+\sum_{j<k\leq n}F_{i,j,k}x_k+F_{i,j}$. 
\end{enumerate}
\end{fact}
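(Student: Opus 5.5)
The plan is to work directly from the definition in \cref{eq:partials}, which expresses $\partial_i\partial_j f$ as a sum of four evaluations. Because $\partial_i\partial_j f$ is linear in $f$, we can compute it one monomial at a time on the multilinear representative \cref{eq:cubic polynomial normal form} and then add the results.

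Item 1 follows at once, since the right-hand side of \cref{eq:partials} is symmetric under swapping $i$ and $j$. For item 2, set $i=j$. Then $e_i+e_j=0$, so the expression becomes $f(x)+f(x+e_i)+f(x+e_i)+f(x)$, which is $0$ over $\FF_2$.

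For item 3, take $i<j$ and compute the contribution of a general monomial $m(x)=\prod_{l\in S}x_l$ with $|S|\le 3$. The first step is to observe that $m(x+e_i)=m(x)$ whenever $i\notin S$. If $i\notin S$, the four terms therefore pair up and cancel. The same happens if $j\notin S$. Consequently only monomials containing both $x_i$ and $x_j$ contribute.

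For such a monomial, write $m=x_ix_j\,r(x)$, where $r$ does not depend on $x_i$ or $x_j$. Then the four terms equal $r(x)$ multiplied by $(x_i+1)(x_j+1)+(x_i+1)x_j+x_i(x_j+1)+x_ix_j$. Expanding this factor over $\FF_2$ gives $1$, so the contribution of $m$ is exactly $r(x)$.

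It remains to read off the surviving monomials. The monomials containing both $x_i$ and $x_j$ are $x_ix_j$, with coefficient $F_{ij}$, and the cubic monomials $x_ix_jx_k$ for $k\ne i,j$. Each cubic monomial $x_ix_jx_k$ carries the coefficient $F$ indexed by the sorted triple, and the three possible positions of $k$ give the three sums $k<i$, $i<k<j$ and $j<k$. Adding up these contributions yields the formula stated in item 3.

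None of these steps is a real obstacle. The only point needing care is that we work with the multilinear representative, so no monomial has repeated variables, together with the bookkeeping of sorted indices.
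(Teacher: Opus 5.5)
Your proposal is correct and follows the same route as the paper: use linearity of the second difference in $f$, check it monomial by monomial, observe that only monomials containing both $x_i$ and $x_j$ survive, note that each survivor contributes its cofactor, and read off the coefficients. Your uniform factorization $m = x_i x_j\, r(x)$ is a slightly tidier packaging of the paper's separate case analysis for cubic, quadratic and linear monomials.
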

\begin{proof}
    Items (1) and (2) are straightforward. For (3), it is enough to check those monomials in $f$. For degree-$3$ monomials $x_ax_bx_c$ where $1\leq a<b<c\leq n$, we distinguish between 
    \begin{itemize}
        \item $\{i, j\}\subseteq \{a, b, c\}$. For example, for $M=x_ix_jx_c$, then $M(x+e_i+e_j)+M(x+e_i)+M(x+e_j)+M(x)=(x_i+1)(x_j+1)x_c+(x_i+1)x_jx_c+x_i(x_j+1)x_c+x_ix_jx_c=x_c$.  
        \item $i$ or $j$ is in $\{a, b, c\}$. For example, for $M=x_ix_bx_c$, then $M(x+e_i+e_j)+M(x+e_i)+M(x+e_j)+M(x)=0$ as $e_j$ is irrelevant in $M$. 
        \item Neither $i$ nor $j$ is in $\{a, b, c\}$. In this case the evaluation is zero as above.
    \end{itemize}
    Then among quadratic terms, only $x_ix_j$ would contribute $1$, and linear terms all contribute to $0$. The result then follows.
\end{proof}

\begin{definition}\label{def:hessian}
    Let $f\in\FF_2[x_1, \dots, x_n]$ be a cubic polynomial. 
    
    The Hessian of $f$, $\Hess^f$, is an $n\times n$ matrix whose $(i, j)$th entry $\Hess^f(i, j)=\partial_j\partial_if(x)$. 
    
    Let $\Hess^f_v$ be the alternating matrix whose $(i, j)$th entry $\Hess^f_v(i, j)=\partial_j\partial_if(v)$. More specifically, 
    \begin{equation}\label{eq:Mij}
    \Hess^f_v(i, j) = \partial_j \partial_i f(v) = f(v + e_i + e_j) + f(v + e_i) + f(v + e_j) + f(v).
    \end{equation}

    The matrix associated with the quadratic part of $f$ is the $n\times n$ alternating matrix  $\Quad^f$ defined entrywise as $\Quad^f(i,j)= \partial_i f^{(2)}(e_j) $, which evaluates to $\Quad^f(i, j)=F_{i,j}$ for $1\leq i<j\leq n$, $\Quad^f(i, j)=F_{j, i}$ for $1\leq j<i\leq n$, and $\Quad^f(i,i)=0$ for $i\in[n]$.
\end{definition}

\begin{fact}
Let $f$, $\Hess^f$, and $\Quad^f$ be from Definition~\ref{def:hessian}. Then $\Hess^f_{u+v}+\Quad^f=(\Hess^f_u+\Quad^f)+(\Hess^f_v+\Quad^f)$. That is, the matrices $\{\Hess^f_u+\Quad^f \mid u\in\FF_2^n\}$ form a linear space.
\end{fact}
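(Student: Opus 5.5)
The plan is to reduce the statement to linearity of each matrix entry in $v$, using the explicit formula from \cref{fact:partials}(3). Fix $i<j$. By that fact,
\begin{align}
\Hess^f_v(i,j) = \partial_j\partial_i f(v) = \ell_{ij}(v) + F_{i,j},
\end{align}
where $\ell_{ij}(x)=\sum_{k<i}F_{k,i,j}x_k+\sum_{i<k<j}F_{i,k,j}x_k+\sum_{k>j}F_{i,j,k}x_k$ is a linear form with no constant term. By \cref{fact:partials}(1) the same holds for the $(j,i)$ entry. By \cref{def:hessian}, $\Quad^f(i,j)=F_{i,j}$ for $i<j$ and symmetrically for $i>j$. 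Hence, off the diagonal,
\begin{align}
(\Hess^f_v+\Quad^f)(i,j)=\ell_{ij}(v)+F_{i,j}+F_{i,j}=\ell_{ij}(v),
\end{align}
since we work over $\FF_2$. On the diagonal, both $\Hess^f_v(i,i)=0$ (by \cref{fact:partials}(2)) and $\Quad^f(i,i)=0$. Thus $M_v\coloneqq\Hess^f_v+\Quad^f$ has entries $\ell_{ij}(v)$, with $\ell_{ii}\equiv 0$. In other words, $v\mapsto M_v$ is an $\FF_2$-linear map into alternating matrices; it is just the contraction of the cubic coefficient tensor with $v$.

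The claimed identity then follows entrywise from linearity: $\ell_{ij}(u+v)=\ell_{ij}(u)+\ell_{ij}(v)$, so $M_{u+v}=M_u+M_v$. For the "linear space" statement, the set $\{M_u\}$ is the image of the linear map $u\mapsto M_u$, so it is a subspace. In particular, $0=M_0$, and it is closed under addition by the identity just shown; over $\FF_2$ this suffices.

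No step is a real obstacle. The only point needing care is the bookkeeping that the constant term of $\partial_i\partial_j f$ is exactly $F_{i,j}$, with no extra contributions from reducing $x_k^2=x_k$. This is ensured because $f$ is taken in multilinear normal form, so the degree-$3$ monomials have distinct indices. The linear terms and constant term of $f$ vanish under the second difference, as recorded in the proof of \cref{fact:partials}.
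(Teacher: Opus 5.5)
Your proof is correct and is essentially the paper's argument: both read off from \cref{fact:partials}(3) that each entry of $\Hess^f_v$ is a linear form in $v$ plus the constant $F_{i,j}=\Quad^f(i,j)$, so that $v\mapsto\Hess^f_v+\Quad^f$ is linear. The paper states this as $\Hess^f_{u+v}-\Hess^f_u-\Hess^f_v=\Quad^f$, which over $\FF_2$ is exactly the identity you verify entrywise.
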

\begin{proof}
By Fact~\ref{fact:partials}, the entries in $\Hess^f$ are linear polynomials whose constant terms are $F_{i, j}$, the entries of the matrix $\Quad^f$. As a result, $\Hess^f_{u+v}-\Hess^f_u-\Hess_v^f=\Quad^f$.
\end{proof}

\paragraph{From cubic polynomials to trilinear forms} Let $T:\FF_2^n\times\FF_2^n\times\FF_2^n\to\FF_2$ be a trilinear form. We say that $T$ is alternating, if whenever two arguments are the same, $T$ evaluates to $0$; that is, for any $u, v\in\FF_2^n$, $T(u, u, v)=T(u, v, u)=T(v, u, u)=0$.
\begin{definition}\label{def:Tf}
Let $f: \mathbb{F}_2^n \to \mathbb{F}_2$ be a cubic polynomial as in \cref{eq:cubic polynomial normal form}. 
The alternating trilinear form $\TF^f: \mathbb{F}_2^n \times \mathbb{F}_2^n \times \mathbb{F}_2^n \to \mathbb{F}_2$  associated with the cubic part of $f$ is obtained by specifying its evaluations on the standard basis vectors $\{e_i\}_{i=1}^n$ and extending by linearity as follows: 
\begin{enumerate}
    \item For distinct indices $i< j< k$ and any permutation $\pi \in S_3$, we set $\TF^f(e_{\pi(i)}, e_{\pi(j)}, e_{\pi(k)}) = F_{ijk}$. 
    \item If any indices coincide, $\TF^f$ evaluates to $0$. 
\end{enumerate}
\end{definition}

With this formalization, we can establish the basic relationship between the cubic polynomial $f$ and its associated trilinear form $\TF^f$ via the following identity.

\begin{lemma}\label{lem:tensor}
Let $f: \mathbb{F}_2^n \to \mathbb{F}_2$ be a cubic polynomial and $\TF^f$ be its associated alternating trilinear form as in Definition~\ref{def:Tf}. Then for any vectors $u, v, w \in \mathbb{F}_2^n$, 
\begin{equation}\label{eq:Tf}
\TF^f(u, v, w) = f(u+v+w) + f(u+v) + f(v+w) + f(u+w) + f(u) + f(v) + f(w).
\end{equation}
\end{lemma}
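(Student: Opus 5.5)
The right-hand side of \eqref{eq:Tf} is the third-order finite difference $D_uD_vD_w f(0)$. Here $D_a\phi(x)=\phi(x+a)+\phi(x)$, and we use $f(0)=0$ together with characteristic $2$, so all signs disappear. Call the right-hand side $R_f(u,v,w)$. The plan is to show that $R_f$ is additive in $f$, compute it monomial by monomial, and then compare with $\TF^f$.

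\textbf{Step 1: linearity in $f$.} $R_f$ is an $\FF_2$-linear expression in the values of $f$, so it is additive in $f$. Definition~\ref{def:Tf} makes $\TF^f$ depend linearly on the cubic coefficients $F_{ijk}$. It therefore suffices to check the identity for single monomials $f=x_i$, $f=x_ix_j$ and $f=x_ix_jx_k$.

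\textbf{Step 2: degree $\le 2$ monomials give zero.} For $f$ linear, $f(u+v+w)=f(u)+f(v)+f(w)$, and expanding every term of $R_f$ this way shows each of $f(u),f(v),f(w)$ appears exactly four times, so $R_f=0$. For $f=x_ix_j$ we have $f(a+b)=f(a)+f(b)+B(a,b)$ with $B(a,b)=a_ib_j+a_jb_i$ bilinear. Expanding $f(u+v+w)$, $f(u+v)$, $f(v+w)$, $f(u+w)$ gives:
\begin{itemize}
\item each of $f(u),f(v),f(w)$ four times (three from the expansions, once more on its own);
\item each of $B(u,v)$, $B(v,w)$, $B(u,w)$ twice.
\end{itemize}
So $R_f=0$. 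This matches $\TF^f=0$, since $f$ has no cubic part.

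\textbf{Step 3: the cubic monomial.} Take $f=x_ix_jx_k$ with $i<j<k$. Write $f(x)=\ell_1(x)\ell_2(x)\ell_3(x)$ with $\ell_1=x_i$, $\ell_2=x_j$, $\ell_3=x_k$, and expand each $\ell_m(u+v+w)=\ell_m(u)+\ell_m(v)+\ell_m(w)$. Now regard $f(a_1+\dots)$ as the symmetric trilinear polynomial $P(a,b,c)=\ell_1(a)\ell_2(b)\ell_3(c)$ evaluated diagonally. The standard inclusion–exclusion (polarization) identity then yields
\[
R_f(u,v,w)=\sum_{\pi\in S_3}\ell_{\pi(1)}(u)\,\ell_{\pi(2)}(v)\,\ell_{\pi(3)}(w),
\]
because every term that omits one of $u,v,w$ cancels in pairs. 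Concretely, the terms of $\sum_{S\subseteq\{u,v,w\},S\neq\emptyset} f(\sum S)$ that use only variables from a proper subset $S'$ are counted $2^{3-|S'|}$ times, which is even. On basis vectors, this sum equals $1$ exactly when $(u,v,w)$ is a permutation of $(e_i,e_j,e_k)$ and $0$ otherwise; in particular it vanishes when two indices coincide. This is precisely Definition~\ref{def:Tf} with $F_{ijk}=1$.

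\textbf{Step 4: conclude by trilinearity.} The displayed expression is trilinear in $(u,v,w)$, and it agrees with $\TF^f$ on basis triples. Hence the two agree everywhere. Adding over monomials via Step 1 proves the identity for general $f$.

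The only delicate point is the bookkeeping in Step 3, namely the parity count showing that the non-fully-mixed terms cancel. Everything else is routine.
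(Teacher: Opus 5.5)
Your proof is correct and follows essentially the same route as the paper: you reduce to monomials and show by parity counting that only the fully mixed terms $u_a v_b w_c$ from $f(u+v+w)$ survive (your count $2^{3-|S'|}$ is exactly the paper's $4/2/1$ tally), then match the result with $\TF^f$ via trilinearity. You additionally write out the vanishing of the linear and quadratic contributions, which the paper only asserts can be verified.
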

\begin{proof}
We first consider the cubic monomials. In Equation~\ref{eq:Tf}, 
\begin{align*}
RHS = \sum_{1\leq a<b<c\leq n} F_{abc} \Big( 
&(u_a+v_a+w_a)(u_b+v_b+w_b)(u_c+v_c+w_c)+ \\
&(u_a+v_a)(u_b+v_b)(u_c+v_c) + \\
&(v_a+w_a)(v_b+w_b)(v_c+w_c) + \\
&(u_a+w_a)(u_b+w_b)(u_c+w_c)+ \\
&u_a u_b u_c + v_a v_b v_c + w_a w_b w_c \Big).
\end{align*}

For a fixed unordered set $\{a, b, c\}$, we expand the expression and count the terms based on the number of components they contain from the vector $u$:
\begin{itemize}
  \item 
The term $u_a u_b u_c$ appears in the expansion of $(u_a+v_a+w_a)(u_b+v_b+w_b)(u_c+v_c+w_c)$ once; appears in the expansion of $(u_a+v_a)(u_b+v_b)(u_c+v_c)$ once; appears in the expansion of $(u_a+w_a)(u_b+w_b)(u_c+w_c)$ once; appears in the expansion of $u_a u_b u_c$ once. It occurs exactly $4 \equiv 0 \pmod 2$ times and cancels out completely.

\item 
The term $u_a u_b v_c$ (which contains two components of $u$) appears in the expansion of $(u_a+v_a+w_a)(u_b+v_b+w_b)(u_c+v_c+w_c)$ once; appears in the expansion of $(u_a+v_a)(u_b+v_b)(u_c+v_c)$ once. It occurs exactly $2 \equiv 0 \pmod 2$ times and cancels out completely.

\item 
The term $u_a v_b w_c$ (containing exactly one $u$, one $v$, and one $w$ component) only appears in $(u_a+v_a+w_a)(u_b+v_b+w_b)(u_c+v_c+w_c)$.
\end{itemize}
Therefore, after cancellation, we have
$$
RHS=\sum_{a<b<c} F_{abc} ( u_a v_b w_c + u_a w_b v_c + v_a u_b w_c + v_a w_b u_c + w_a u_b v_c + w_a v_b u_c ).
$$
On the other hand, 
$$
LHS=\TF^f(u, v, w) = \sum_{i=1}^n \sum_{j=1}^n \sum_{k=1}^n u_i v_j w_k \TF^f(e_i, e_j, e_k).
$$
Since $\TF^f(e_i, e_j, e_k)$ is non-zero only if the indices are mutually distinct, the triple summation essentially iterates over all triplets of distinct indices, summing the values over their $3! = 6$ permutations. By rewriting the left-hand side using the subset condition $a<b<c$, we obtain:
\[
\TF^f(u, v, w) = \sum_{a<b<c} F_{abc} \left( \sum_{\pi \in S_3} u_{\pi(a)} v_{\pi(b)} w_{\pi(c)} \right).
\]

We then deal with quadratic and linear terms on the right-hand side of Equation~\ref{eq:Tf}, and it can be verified that these contribute $0$, which concludes the proof.
\end{proof}

\begin{theorem}\label{cor:equiv}
    Let $f\in\FF_2[x_1, \dots, x_n]$ be a cubic polynomial. For $u\in \FF_2^n$, let $\TF^f_u:\FF_2^n\times\FF_2^n\to\FF_2$ be defined as $\TF^f_u(v, w)=\TF^f(u, v, w)$. Then $\TF^f_u=\Hess^f_u+\Quad^f$. 
\end{theorem}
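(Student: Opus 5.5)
Both sides of $\TF^f_u=\Hess^f_u+\Quad^f$ are bilinear forms on $\FF_2^n\times\FF_2^n$. The left side is bilinear by trilinearity of $\TF^f$; the right side is the matrix $\Hess^f_u+\Quad^f$ viewed as a bilinear form. It therefore suffices to check equality on pairs of standard basis vectors, i.e.\ to show $\TF^f(u,e_i,e_j)=\Hess^f_u(i,j)+\Quad^f(i,j)$ for all $i,j\in[n]$. Two routes suggest themselves: evaluate both sides via the explicit coefficient formulas, or apply Lemma~\ref{lem:tensor} with $v=e_i$, $w=e_j$ and compare with \cref{eq:Mij}. I will use the coefficient route and keep the second as a cross-check.

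\emph{Diagonal entries.} For $i=j$, the left side vanishes because $\TF^f$ is alternating. On the right, $\Hess^f_u(i,i)=0$ by Fact~\ref{fact:partials}(2), and $\Quad^f(i,i)=0$ by definition.

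\emph{Off-diagonal entries.} By symmetry of both sides in $(i,j)$ it is enough to take $i<j$. Trilinearity gives
\[
\TF^f(u,e_i,e_j)=\sum_k u_k\,\TF^f(e_k,e_i,e_j).
\]
By Definition~\ref{def:Tf}, each term $\TF^f(e_k,e_i,e_j)$ equals $F$ indexed by the sorted triple $\{k,i,j\}$ when $k\notin\{i,j\}$, and is zero otherwise. Splitting the sum into $k<i$, $i<k<j$ and $k>j$ gives exactly the linear part $\sum_{k<i}F_{kij}u_k+\sum_{i<k<j}F_{ikj}u_k+\sum_{k>j}F_{ijk}u_k$ of Fact~\ref{fact:partials}(3) evaluated at $x=u$. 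Fact~\ref{fact:partials}(3) says that $\Hess^f_u(i,j)=\partial_j\partial_i f(u)$ equals this linear part plus the constant $F_{ij}$. Since $\Quad^f(i,j)=F_{ij}$ and we work over $\FF_2$, adding $\Quad^f(i,j)$ cancels the constant and leaves exactly $\TF^f(u,e_i,e_j)$.

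The main obstacle is bookkeeping rather than mathematics. The statement uses $\partial_j\partial_i$, and this must be reconciled with Fact~\ref{fact:partials}(3), which is stated for $\partial_i\partial_j$ with $i<j$; Fact~\ref{fact:partials}(1) resolves this. The sorted-index convention for $F_{abc}$ must also be matched carefully in the three ranges of $k$.

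As a sanity check, Lemma~\ref{lem:tensor} with $v=e_i$, $w=e_j$ gives
\[
\TF^f(u,e_i,e_j)=\Hess^f_u(i,j)+f(e_i+e_j)+f(e_i)+f(e_j),
\]
using $f(0)=0$. The extra term $f(e_i+e_j)+f(e_i)+f(e_j)$ evaluates to $F_{ij}=\Quad^f(i,j)$ for $i\neq j$, which confirms the identity independently.
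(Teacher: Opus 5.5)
Your proof is correct, but your main argument takes a different route from the paper's. You work at the level of coefficients. You expand $\TF^f(u,e_i,e_j)=\sum_k u_k\,\TF^f(e_k,e_i,e_j)$ using Definition~\ref{def:Tf}, recognize the result as the linear part of $\partial_i\partial_j f$ from Fact~\ref{fact:partials}(3), and cancel the constant $F_{ij}$ against $\Quad^f(i,j)$, treating the diagonal separately. Lemma~\ref{lem:tensor} is never used.

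The paper's proof is essentially your closing cross-check. It substitutes $v=e_i$, $w=e_j$ into \eqref{eq:Tf} and discards the linear terms, which cancel. It then uses that the quadratic and cubic parts vanish on vectors of weight at most one. The five surviving evaluations are read off as the Hessian entry \eqref{eq:Mij} plus $f(e_i+e_j)$, which equals $\Quad^f(i,j)$. Your cross-check differs only cosmetically: you keep the linear terms and observe $f(e_i+e_j)+f(e_i)+f(e_j)=F_{ij}$.

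The paper's route is shorter and treats all pairs $(i,j)$ uniformly. The diagonal comes for free, since $f(e_i+e_i)=f(0)=0$, and there is no index bookkeeping. It does lean on the polarization identity of Lemma~\ref{lem:tensor}, whose own proof is a coefficient expansion of the same kind. Your route is self-contained given Fact~\ref{fact:partials} and shows exactly which cubic coefficients make up each entry of the slice $\TF^f_u$. Its cost is the diagonal/off-diagonal case split and the three index ranges for $k$.
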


\begin{proof}
    By substituting $v = e_i$ and $w = e_j$ into Equation~\ref{eq:Tf}, we obtain
    \[
    T^f(u, e_i, e_j) = f(u+e_i+e_j) + f(u+e_i) + f(e_i+e_j) + f(u+e_j) + f(u) + f(e_i) + f(e_j).
    \]

    We first note that in either the above equation or the definitions of $\Hess^f_u$ and $\Quad^f$, the linear term vanishes by linearity over $\FF_2$. This leaves us to consider only cubic and quadratic terms, which in turn indicate that vectors of weights $\leq 1$ evaluate to zero. Therefore, the $(i, j)$th entry of $\TF^f_u$ is 
     \[
    \TF^f(u, e_i, e_j) = f(u+e_i+e_j) + f(u+e_i) + f(e_i+e_j) + f(u+e_j) + f(u).
    \]
    Now note that $f(u+e_i+e_j) + f(u+e_i)  + f(u+e_j) + f(u)$ is equal to the $(i, j)$th entry of $\Hess^f_u$, and $f(e_i+e_j)$ is equal to the $(i, j)$th entry of $\Quad^f$. This concludes the proof. 
\end{proof}

\begin{proposition}
Let $f:\FF_2^n\to\FF_2$ be a cubic polynomial. For $A\in\gl(n, \FF_2)$, let $g=f\circ A$.
Then we have the following. 
    \begin{enumerate}
        \item $(f\circ A)^{(3)}=(f^{(3)}\circ A)^{(3)}=g^{(3)}$.
        \item For all $u,v,w\in \FF_2^n$,
        $\TF^g(u,v,w)=\TF^f(Au,Av,Aw)$.
        \item As matrices of bilinear forms, $\TF^g_u=A^t \TF^f_{Au}A$. Consequently, \[ \Hess^g_u+\Quad^g=A^t(\Hess^f_{Au}+\Quad^f)A. \]
    \end{enumerate}
\end{proposition}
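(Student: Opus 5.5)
The proof proceeds item by item, with item (2) as the backbone: Lemma~\ref{lem:tensor} characterizes $\TF^f$ purely through evaluations of $f$, and composition with $A$ commutes with evaluation.

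\textbf{Item (2).} Apply Lemma~\ref{lem:tensor} to $g=f\circ A$. Since $g$ is the multilinear reduction of $f\circ A$, it agrees with $f\circ A$ as a function on $\FF_2^n$. Using linearity of $A$, for all $u,v,w$ we get
\begin{align*}
\TF^g(u,v,w) &= \sum_{\emptyset\neq S\subseteq\{u,v,w\}} g\Big(\textstyle\sum_{s\in S}s\Big) = \sum_{\emptyset\neq S} f\Big(A\textstyle\sum_{s\in S}s\Big)\\
&= \sum_{\emptyset\neq S} f\Big(\textstyle\sum_{s\in S}As\Big) = \TF^f(Au,Av,Aw).
\end{align*}
The last equality is Lemma~\ref{lem:tensor} applied to $f$. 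One point must be checked: $g$ is again a cubic polynomial with zero constant term, so the lemma applies to it. This holds because $f\circ A$ has degree at most $3$, multilinear reduction does not increase degree, and $g(0)=f(0)=0$.

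\textbf{Item (1).} Write $f=f^{(3)}+f^{(2)}+f^{(1)}$. Then $f\circ A = f^{(3)}\circ A + f^{(2)}\circ A + f^{(1)}\circ A$, and after multilinear reduction the last two summands have degree at most $2$. Hence the degree-$3$ part of $g$ equals the degree-$3$ part of (the reduction of) $f^{(3)}\circ A$. This is just degree bookkeeping.

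An alternative route to item (1) goes through item (2). By Definition~\ref{def:Tf}, the cubic coefficients satisfy $G_{ijk}=\TF^g(e_i,e_j,e_k)$, so they are determined by $\TF^g$. By item (2) they are also determined by $\TF^f\circ(A,A,A)$, which depends only on $f^{(3)}$, since $\TF^f=\TF^{f^{(3)}}$. Running the same argument with $f^{(3)}$ in place of $f$ gives $(f^{(3)}\circ A)^{(3)}=g^{(3)}$.

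\textbf{Item (3).} Evaluate item (2) at basis vectors: $\TF^g_u(e_i,e_j)=\TF^f(Au,Ae_i,Ae_j)=\TF^f_{Au}(Ae_i,Ae_j)$. Expanding $Ae_i=\sum_k A_{ki}e_k$ by bilinearity yields $(A^t\,\TF^f_{Au}\,A)_{ij}$. Theorem~\ref{cor:equiv} identifies $\TF^g_u=\Hess^g_u+\Quad^g$ and $\TF^f_{Au}=\Hess^f_{Au}+\Quad^f$, which gives the stated identity.

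The main subtlety is item (1): it requires that multilinear reduction cannot push lower-degree terms into degree $3$ and cannot lose the identification of $g$ with $f\circ A$ as a function. Both facts hold because reduction only lowers degrees and preserves function values. The remaining steps are direct substitutions.
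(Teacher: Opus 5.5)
Your proposal is correct and follows the paper's own, much terser, argument: item (1) by degree bookkeeping under composition and multilinear reduction, item (2) from the evaluation formula of Lemma~\ref{lem:tensor} together with linearity of $A$, and item (3) by specializing (2) to $e_i,e_j$ and invoking Theorem~\ref{cor:equiv}. Your explicit check that $g(0)=f(0)=0$ is a worthwhile addition, since a nonzero constant term would survive the seven-term sum in Lemma~\ref{lem:tensor} and break the identity.
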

\begin{proof}
For (1), this is because the degree-$3$ part of $f\circ A$ only depends on $f^{(3)}$. For (2), this is due to Lemma~\ref{lem:tensor}, which expresses $\TF^f$ as evaluations by $f$. (3) is a consequence of (2) and Theorem~\ref{cor:equiv}.
\end{proof}

\begin{remark}[Tensor Slices Perspective]
The reason for introducing $\TF^f$ is that a trilinear form allows for specializing one argument to obtain bilinear forms, where ranks and normal forms are usually well-studied. This strategy is in line with associating quadratic forms with bilinear forms in the case of field characteristic not $2$. It can be implemented relatively easily for trilinear forms over fields of characteristic not $2$ or $3$: for a cubic form $f$, the associated symmetric trilinear form is defined as
$$
\TF^f(u, v, w) = \frac{1}{6}\big(f(u+v+w) - f(u+v) -f(v+w) - f(u+w) + f(u) + f(v) + f(w)\big).
$$

What we observed in the above is that over $\FF_2$, a cubic polynomial can be associated with an alternating trilinear form defined in Definition~\ref{def:Tf} or alternatively by Equation~\ref{eq:Tf} which essentially captures the cubic coefficients of $f$ and is unaffected by its quadratic and linear terms.
\end{remark}

\subsection{Planted independent space}

In \cref{subsec:IQP_planted}, independent spaces were defined
using independent sets after a linear change of variables,
where independence concerns only the cubic terms. The following
proposition gives an equivalent characterization in terms of
the associated alternating trilinear form.

\begin{proposition}
Let $g:\FF_2^n\to\FF_2$ be a cubic polynomial and let
$V\leq\FF_2^n$. Then $V$ is an independent space of $g$,
as defined in \cref{def:ind_sp}, if and only if $\TF^g|_{V^3}\equiv 0$, i.e., $\TF^g(u,v,w) = 0$ for all $ u,v,w\in V$, or equivalently, $\deg (g|_V) \le 2$.
\end{proposition}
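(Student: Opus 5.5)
The plan is to prove the chain: $V$ is an independent space in the sense of \cref{def:ind_sp} $\Leftrightarrow$ $\TF^g|_{V^3}\equiv 0$ $\Leftrightarrow$ $\deg(g|_V)\le 2$. Here $g|_V$ means $g\circ B$ for any matrix $B$ whose columns form a basis of $V$, reduced to its multilinear representative. Whether this degree is $\le 2$ does not depend on the choice of $B$, because a change of basis is composition with an invertible linear map. The tool throughout is Proposition (transformation rules), items (1) and (2): $(g\circ A)^{(3)}$ depends only on $g^{(3)}$, and $\TF^{g\circ A}(u,v,w)=\TF^g(Au,Av,Aw)$.

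\textbf{Step 1 (coordinate statement).} First settle the case $V=\spn(e_1,\dots,e_h)$. By \cref{def:Tf}, $\TF^f(e_i,e_j,e_k)$ equals $F_{ijk}$ (up to ordering of the indices) for distinct $i,j,k$ and $0$ otherwise. By trilinearity, $\TF^f$ vanishes on $V^3$ iff $F_{ijk}=0$ for all $i<j<k\le h$. This is exactly the statement that $\{1,\dots,h\}$ is an independent set of the hypergraph $H(f)$.

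\textbf{Step 2 (transport by $A$).} Suppose $V$ is an independent space, witnessed by $A$ whose first $h=\dim V$ columns span $V$ and such that $g\circ A$ has an independent set on the first $h$ coordinates. By Step 1, $\TF^{g\circ A}$ vanishes on $\spn(e_1,\dots,e_h)^3$. By item (2) this means $\TF^g$ vanishes on $(A\,\spn(e_1,\dots,e_h))^3=V^3$. Conversely, given $V$ with $\TF^g|_{V^3}\equiv0$, extend a basis of $V$ to a basis of $\FF_2^n$ and let $A$ have these vectors as columns. Running the same computation backwards gives the independent set on the first $h$ coordinates. Neither direction depends on which such $A$ is chosen.

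\textbf{Step 3 (degree of the restriction).} Let $B=A_{[:,1..h]}$, so $g|_V=g\circ B$ is a polynomial in $h$ variables. The identity of \cref{lem:tensor} uses only evaluations of $g$, so it also gives $\TF^{g\circ B}(u,v,w)=\TF^g(Bu,Bv,Bw)$ for $u,v,w\in\FF_2^h$. Thus $\TF^g|_{V^3}\equiv0$ iff $\TF^{g\circ B}\equiv 0$ on all of $\FF_2^h$. By Step 1 applied with $h=n$, this holds iff every cubic coefficient of the multilinear representative of $g\circ B$ vanishes, that is, iff $\deg(g\circ B)\le2$.

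The only delicate point I expect is Step 3's use of the multilinear reduction. The definition of $\TF^f$ through coefficients must agree with the evaluation formula \cref{eq:Tf} for non-square or reduced compositions. This is handled by \cref{lem:tensor}, which holds for any multilinear cubic representative and depends only on the function. Everything else is routine bookkeeping.
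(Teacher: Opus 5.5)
Your proof is correct, and Steps 1--2 follow the paper's argument: the coefficient of $x_ix_jx_k$ in $g\circ A$ is $\TF^{g\circ A}(e_i,e_j,e_k)=\TF^g(Ae_i,Ae_j,Ae_k)$, and by trilinearity and the alternating property these coefficients vanish for all $i<j<k\le \dim V$ exactly when $\TF^g|_{V^3}\equiv 0$. Your Step 3 applies \cref{lem:tensor} to $g\circ B$, where $B$ is a basis matrix of $V$, and thereby also justifies the equivalence with $\deg(g|_V)\le 2$, which the paper states but does not prove.
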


\begin{proof}
Let $d=\dim V$. By \cref{def:ind_sp}, choose $A\in\gl(n, \FF_2)$ whose first $d$ columns form a basis of $V$. In the multilinear representation of $g\circ A$, the coefficient of $x_i x_j x_k$, for $1\le i<j<k\le d$, is
\[
\TF^{g\circ A}(e_i,e_j,e_k)=\TF^g(Ae_i,Ae_j,Ae_k).
\]
The first $d$ coordinates form an independent set of $g\circ A$ if and only if all these coefficients vanish. Since $Ae_1,\ldots,Ae_d$ form a basis of $V$, and $\TF^g$ is alternating and trilinear, this is equivalent to $\TF^g|_{V^3}\equiv 0$.
\end{proof}


We further say that $V$ is a \emph{maximum independent space} if 
  \begin{align}
    V \in \arg\max_{ W \le \FF_2^n: \TF^g|_{W^3}=0 } \dim(W)
  \end{align}
  We use $\alpha(g)$ to denote the maximum dimension over independent spaces of $g$.

Independent spaces have a natural connection to independent sets in the hypergraph associated with a polynomial. 
\begin{definition}[Polynomial hypergraph]
\label{def:poly hypergraph}
Given a cubic polynomial $f \in \FF_2[x_1, \ldots, x_n]$, the hypergraph $H(f) = (V,E)$ is defined by the trilinear coefficients $F_{ijk}$ as follows: 
Let $V = [n]$ be the vertex set, and let the set of hyperedges be $E = \{\{i,j,k\}: F_{ijk} =1, i < j < k \}$. 
\end{definition}

\begin{remark}[Independent set]
Consider a cubic polynomial $f\in \FF_2[x_1, \dots, x_n]$. 
For $I\subseteq[n]$, $I$ is a (strong) independent set of $H(f)$ if and only if $\spn\{e_i:i\in I\}$ is an independent space of $f$.
\end{remark}
Moreover, notice that finding a maximum independent set is equivalent to finding a maximum clique in the complement graph and therefore computationally equivalent. 

\paragraph{Random cubic polynomials.} A random cubic polynomial in $\FF_2[x_1, \dots, x_n]$ is sampled by sampling the coefficient of every term $x_i x_j x_k$ ($1\leq i<j<k\leq n$), $x_ix_j$ ($1\leq i<j\leq n$), $x_i$ ($i\in[n]$) independently uniformly from $\FF_2$. 
\begin{lemma}
    Let $f\in \FF_2[x_1, \dots, x_n]$ be a random cubic polynomial. Then $\alpha(f)\leq \lfloor\sqrt{6n}\rfloor$ with high probability. 
\end{lemma}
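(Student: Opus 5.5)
We will prove the lemma by a first-moment (union bound) argument over all candidate subspaces of dimension $d=\lfloor\sqrt{6n}\rfloor+1$. Since any subspace of an independent space is again an independent space, it suffices to show that with high probability no $d$-dimensional $V\le\FF_2^n$ satisfies $\TF^f|_{V^3}\equiv 0$.

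\emph{Step 1: the probability for a fixed subspace.} Fix $V$ of dimension $d$ and choose $A\in\gl(n,\FF_2)$ whose first $d$ columns form a basis of $V$. By the proposition characterizing independent spaces, $V$ is independent iff the coefficients of $x_ix_jx_k$, $1\le i<j<k\le d$, in $(f\circ A)^{(3)}$ all vanish; these coefficients equal $\TF^f(Ae_i,Ae_j,Ae_k)$. Next we argue that the map $f^{(3)}\mapsto (f\circ A)^{(3)}$ is a bijection on cubic forms (multilinear degree-$3$ parts). By item (1) of the proposition on $f\circ A$, $(f\circ A)^{(3)}=(f^{(3)}\circ A)^{(3)}$. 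The map is $\FF_2$-linear, since $\TF^{f}$ is linear in the coefficients of $f$ and the coefficient extraction $\TF^g(e_i,e_j,e_k)=\TF^f(Ae_i,Ae_j,Ae_k)$ is linear in $\TF^f$. It is also invertible, with inverse induced by $A^{-1}$. Hence if $f^{(3)}$ is uniform, then so is $(f\circ A)^{(3)}$, and its $\binom d3$ coefficients on $[d]$ are independent uniform bits. Therefore
\[
\Pr[V \text{ independent}] = 2^{-\binom d3}.
\]

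\emph{Step 2: union bound.} The number of $d$-dimensional subspaces of $\FF_2^n$ is the Gaussian binomial coefficient, which is at most $2^{d(n-d)+O(d)}\le 2^{dn}$ (for instance, bound it by the number of $d$-tuples of vectors, $2^{dn}$, divided by $|\gl(d,\FF_2)|$, or simply use $2^{dn}$). The expected number of independent $d$-spaces is therefore at most
\[
2^{dn-\binom d3}=2^{d\left(n-\frac{(d-1)(d-2)}{6}\right)}.
\]

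\emph{Step 3: the exponent computation.} This is the delicate step. With $d=\lfloor\sqrt{6n}\rfloor+1$ we have $d^2>6n$, but $(d-1)(d-2)/6$ falls short of $n$ by $\Theta(\sqrt n)$. So the crude count $2^{dn}$ does not suffice, and we must use the sharper Gaussian-binomial bound $2^{d(n-d)+O(d)}$. Even then the exponent becomes $d\bigl(n-d-(d^2-3d+2)/6\bigr)\approx d\bigl(n-d^2/6-d/2\bigr)$, and since $d^2/6> n$ this is $\le -d\cdot\Omega(\sqrt n)\to-\infty$. We should double-check the lower-order terms carefully: the $+O(d)$ from the Gaussian binomial (its constant is at most $\log_2$ of $\prod_i(1-2^{-i})^{-1}\approx 1.8$) must be dominated by the $d/2$ gain. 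If the margin is too thin at the floor boundary, we will fall back on taking $d=\lceil\sqrt{6n}\,\rceil$ plus a constant. Either way, Markov's inequality then yields $\alpha(f)\le\lfloor\sqrt{6n}\rfloor$ with probability $1-2^{-\Omega(n)}$. The main obstacle is precisely this bookkeeping, together with rigorously justifying the uniformity claim in Step 1.
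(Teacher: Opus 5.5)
Your proof is correct and follows the paper's argument: a union bound over $d$-dimensional subspaces, with $\Pr[V\text{ independent}]=2^{-\binom d3}$ and the Gaussian-binomial count $\le 2^{d(n-d)+2}$. The exponent $\binom d3-d(n-d)-2$ is at least $d^2/2\ge 3n$ once $d\ge\sqrt{6n}$, so the margin you were worried about in Step 3 is comfortably wide, and your explicit uniformity argument for $(f\circ A)^{(3)}$ and your reduction to exactly dimension $d$ supply details the paper leaves implicit.
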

\begin{proof}
    This is a standard probabilistic method. For a dimension-$d$ space $U\leq \FF_2^n$, the probability of $U$ being an independent space for a random $f$ is $1/2^{\binom{d}{3}}$. The number of $d$-dimensional spaces of $\FF_2^n$ is upper bounded by $2^{d(n-d)+2}$. By a union bound, over random choices of $f$, 
    \begin{equation*}
        \Pr[\exists U\leq \FF_2^n, \dim(U)=d, T^f|_{U^3}=0] 
        \leq  \frac{2^{d(n-d)+2}}{2^{\binom{d}{3}}} =\frac{1}{2^{\frac{d(d-1)(d-2)}{6}-d(n-d)-2}}.
    \end{equation*}
    Note that when $d\geq \lceil \sqrt{6n}\rceil$, $d(d-1)(d-2)/6-d(n-d)-2\geq d^2/2\geq 3n$, so 
    $$\Pr[\exists U\leq \FF_2^n, \dim(U) =d, \allowbreak T^f|_{U^3}=0]\ll 1/2^{3n},$$ showing the existence of $f$ without dimension-$d$ independent spaces. 
\end{proof}


\subsection{Statement of independent space problems}
\label{sub:statement}

Recall the distributions $\mc P_3(n)$ and $\mc P_3(n,h)$, as well as the decision and search problems introduced in \cref{sub:PIS intro}. 

We now describe the planted sampling procedure for $\mc P_3(n, h)$ in detail. Then in this section, we formally specify the planted independent space problems and discuss the relations between these problems.

Fix an integer $1\le h\le n$ and let $U=\operatorname{span}\{e_1,\ldots,e_h\}$. 
We first sample a cubic Boolean polynomial $f\leftarrow\mc P_3(n)$. We then plant an independent
space in its cubic component by setting the coefficient
of $x_i x_j x_k$ to zero whenever $1\le i<j<k\le h$,
leaving all other coefficients unchanged.
Write the resulting polynomial as $\widetilde f=c_0+r_0$, where $c_0$ is its homogeneous cubic component and $r_0$ is its independently sampled quadratic and linear
part. Thus, $\TF^{\widetilde f}|_{U^3}=0$.

To hide this coordinate subspace, we independently sample $A\leftarrow\mathrm{GL}(n,\FF_2)$ uniformly and set
\[
    g(x)=\widetilde f(Ax),
    \qquad
    V=A^{-1}U,
\]
and then reduce $g$ to its unique multilinear representation using $x_i^2=x_i$. Then $V$ is a uniformly random $h$-dimensional subspace satisfying $\TF^g|_{V^3}=0$, or equivalently, $\deg(g|_V)\le2$. We denote the resulting distribution of $g$ by $\mc P_3(n,h)$.

The linear substitution may produce quadratic and linear terms from $c_0$. These terms are added to $r_0\circ A$, and all resulting coefficients are retained in the public polynomial $g$. Since composition with $A$ is a bijection on the space of Boolean polynomials of degree
at most two with zero constant term, $r_0\circ A$ is uniformly distributed even after fixing $c_0$ and $A$. Adding the lower-degree contribution from $c_0\circ A$ preserves this uniformity. Consequently, the final lower-degree part $g^{(2)}+g^{(1)}$ is uniformly random and independent of the pair $(g^{(3)},V)$.

The public instance consists of $n$, $h$, and the full coefficient description of $g$, while $V$ and $A$ are kept secret. 
The following computational problems associated with finding maximum planted independent spaces are natural generalizations of finding maximum or planted independent sets or cliques.

\begin{problem}[Independent space problem (\decision-\IS)]
Given a cubic Boolean polynomial $g$ and an integer $h$, decide whether $g$ has an independent space of dimension at least~$h$. 
\end{problem}

Next, we introduce the distributional version of this problem, where one is asked to distinguish between a random polynomial and one with a planted independent space.

\begin{problem}[Planted independent space problem (\decision-$h$-\PIS)]
\label{prob:pis decision-full}
Given a cubic Boolean polynomial $g$ decide whether $g \leftarrow \mc P_3(n)$ or $g \leftarrow \mc P_3(n,h)$, promised one is the case. 
\end{problem}

Finally, we define the associated search problems, which ask to \emph{find} an independent space of a given dimension $h$.

\begin{problem}[Independent subspace problem (\search-\IS)]
Given a cubic Boolean polynomial $g$ with an independent space~$V$ of dimension $h$, find $V$.
\end{problem}

\begin{problem}[Planted independent space problem (\search-$h$-\PIS)]
\label{prob:pis search-full}
Given a cubic Boolean polynomial $g \leftarrow \mc P_3(n,h)$, find the $h$-dimensional independent space $V$. 
\end{problem}

Our central problem is the \search-$h$-\PIS\ problem, and there are some obvious relations between the hardness of these problems 
\begin{align}
  \text{\decision-$h$-\PIS} &\le_p \text{\search-$h$-\PIS} \le_p \text{\search-\IS}\\
 \text{\decision-$h$-\PIS} &\le_p \text{\decision-\IS} \le_p \text{\search-\IS}, 
\end{align}
where $\le_p$ represents polynomial-time reductions. 
However, it remains an interesting open problem whether there are search-to-decision reductions for the independent space problems, analogous to those for planted clique~\cite{HS24}.

\paragraph{Hardness conjecture}

The basis of our cryptographic applications of the \PIS\ problem will be the following conjecture about the computational complexity of finding a large planted independent space. 
\begin{conjecture}[Exponential planted independent space (\Exp\PIS) conjecture] \label{conj:PIS}
For any $\epsilon > 0 $ and $h \le (1/2 - \epsilon) n$, every  classical algorithm $\mc A$ that outputs $1$ for planted instances and $0$ for uniform instances and has runtime $o(2^{n})$ satisfies 
   \begin{align}
     \left|\Pr_{g\leftarrow\mc P_3(n,h)}
       [\mc A(g,h)=1]
    -
    \Pr_{g\leftarrow\mc P_3(n)}
       [\mc A(g,h)=1] \right| = \negl(n),
   \end{align}
    where the probabilities include the internal randomness of $\mc A$. 
\end{conjecture} 

We provide evidence for \cref{conj:PIS} in \cref{sec:worst-case hardness,sec:cryptanalysis} below by analysing its worst- and average-case complexity in different parameter regimes. 
We note that in fact we believe the \PIS\ problem to be exponentially hard even for quantum computers.
This is relevant to our envisioned application of planted independent spaces for classically certifiable randomness. We leave a more detailed analysis to future work, however. 

\section{Verification of IQP circuit sampling from planted independent spaces }
\label{sec:verification_of_iqp_circuits}

In this section, we introduce the family of IQP circuits, and show how the planted independent space conjecture allows us to obtain a verifiable quantum advantage scheme. 
We first introduce IQP circuits and their statistical properties in \cref{sub:iqp_circuits_and_their_properties} and  then show how the XEB of outcome samples of IQP  circuits is used to verify quantum advantage in \cref{sub:verification_via_xeb}. 
In \cref{sub:simulating_iqp_circuits_with_and_without_pis} we show how planted independent spaces are \emph{simulation secrets} that simplify verification. Finally, in \cref{sec:certified randomness}, we sketch an application of planted IQP circuit sampling to classically certifiable randomness. 

\subsection{IQP circuits and their statistical properties}
\label{sub:iqp_circuits_and_their_properties}

Let $f(x) \in \FF_2[x_1, \ldots, x_n]$ be a cubic Boolean polynomial as in \cref{eq:cubic polynomial normal form}. 
The associated degree-$3$ \emph{IQP circuit} $C(f)$ is defined by its action on basis states as
\begin{align}
\label{eq:def iqp circuit}
  C(f) \ket x = (-1)^{f(x)} \ket x. 
\end{align}
Degree-$3$ IQP circuits are therefore diagonal quantum circuits composed of $Z$, $CZ$, and $CCZ$ gates acting on the qubits specified in the coefficients $F_i, F_{ij},$ and $F_{ijk}$, respectively. 
Applying an IQP circuit $C_f$ to an initial $\ket {+^n}$ state yields the \emph{polynomial phase state}
\begin{align}
  \label{eq:phase state}
  \ket f \coloneqq C(f) \ket {+^n} = \frac 1 {\sqrt{2^n}}\sum_x (-1)^{f(x)}\ket x. 
\end{align}
The amplitudes of $\ket f$ in the Hadamard basis 
\begin{align}
\label{eq:def iqp amplitude}
  a_f(x) \coloneqq \bra x H^{\otimes n} \ket {f},
\end{align}
 yield the corresponding probabilities $p_f(x) = a_f(x)^2$ of observing $x$ when measuring in this basis. 

IQP circuits have been extensively studied in the context of quantum advantage demonstrations. 
This is because they are a subuniversal family of quantum circuits whose output distributions are yet hard to simulate classically assuming plausible complexity-theoretic conjectures~\cite{shepherd_temporally_2009,bremner_classical_2010,bremner_average-case_2016,bremner_achieving_2017,hangleiter_fault-tolerant_2025}. 
The key observations leading to this result are that the output amplitudes $a_f(x) = \ngap(f_x)$ are given by the gap of the Boolean function $f_x(y) = f(y) + \langle x, y\rangle $, defined as 
\begin{align}
  \ngap(f) \coloneqq \frac 1 {2^n} \left ( \left| \{ x: f(x) = 0  \}\right| - \left|\{x: f(x) = 1\}\right| \right). 
\end{align}
Computing $a_f(x)$ up to a constant relative error is therefore a \gapp-hard problem, where \gapp\ is the class of counting problems that computes the difference of two \sharpP\ functions \cite{goldberg_complexity_2017}. 

\begin{theorem}[\cite{bremner_average-case_2016}]
  Let $f$ be a uniformly random degree-$3$ polynomial. Then there exists  $\epsilon = O(1) $ such that producing samples from any distribution $q$ that is $\epsilon$-close in total-variation distance to $p_f$ is classically intractable assuming that   
  \begin{enumerate}[label=\roman*.]
    \item \label{conj:1} the polynomial hierarchy does not collapse to its third level, and
    \item \label{conj:2} the outcome probabilities $p_f(x)$ are \gapp-hard to approximate up to constant relative error for a constant fraction of the $f$-instances. 
  \end{enumerate}
\end{theorem}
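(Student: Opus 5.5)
The plan follows the Stockmeyer-counting route of Bremner--Montanaro--Shepherd. We argue by contradiction. Suppose a classical polynomial-time sampler $\mathcal S$ outputs, for a uniformly random cubic $f$, samples from some $q_f$ with $\|q_f - p_f\|_{TV}\le\epsilon$. We will use it to approximate $p_f(x)$ to constant relative error for a constant fraction of pairs $(f,x)$ in $\mathsf{BPP}^{\mathsf{NP}}$. Combined with assumption \ref{conj:2}, this places a $\gapp$-hard (hence $\sharpP$-hard under $\mathsf{BPP}$ reductions) problem in $\mathsf{BPP}^{\mathsf{NP}}\subseteq\Sigma_3$. Toda's theorem then gives $\ph\subseteq\classp^{\sharpP}\subseteq\Sigma_3$, contradicting assumption \ref{conj:1}.

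The first step is a hiding symmetry, which lets us treat the output string as part of the instance. For any $z\in\FF_2^n$, the map $f\mapsto f+\langle z,\cdot\rangle$ preserves the uniform distribution $\mc P_3(n)$, because the linear coefficients are uniform and independent. It also satisfies $p_{f+\langle z,\cdot\rangle}(x)=p_f(x+z)$, since $a_f(x)=\ngap(f+\langle x,\cdot\rangle)$. Consequently, a uniformly random pair $(f,x)$ has the same distribution as $(f+\langle x,\cdot\rangle, 0)$. Average-case hardness over instances is therefore equivalent to hardness over pairs, and the sampler's error can be averaged over $x$.

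The second step is Stockmeyer approximate counting. Viewing $\mathcal S$ as a function of its random seed $r$, the probability $q_f(x)=\Pr_r[\mathcal S(f,r)=x]$ is the fraction of accepting seeds of a polynomial-time predicate. Stockmeyer's algorithm estimates it to within a multiplicative $(1\pm 1/\poly)$ factor in $\mathsf{BPP}^{\mathsf{NP}}$.

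The third step, which I expect to be the main obstacle, converts the total-variation guarantee into a pointwise multiplicative guarantee on a constant fraction of $(f,x)$. By Markov's inequality, $\Eb_x|q_f(x)-p_f(x)|\le 2\epsilon/2^n$, so $\Pr_x\big[|q_f(x)-p_f(x)|\ge 2\epsilon/(\delta 2^n)\big]\le\delta$. Turning this additive error of order $2^{-n}$ into a relative error requires anticoncentration, namely $\Pr_{f,x}[p_f(x)\ge\alpha 2^{-n}]\ge\beta$ for constants $\alpha,\beta$.

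To prove anticoncentration I would compute moments via the symmetry above. First, $\Eb_f\, 2^n p_f(0)=1$. Second, $\Eb_f\, 2^{2n}p_f(0)^2=2^{-2n}\sum_{y_1,\dots,y_4}\Eb(-1)^{f(y_1)+f(y_2)+f(y_3)+f(y_4)}$. The expectation is nonzero only when all monomials of degree at most $3$ cancel, which forces the $y_i$ to pair up; this gives at most $3$ up to $o(1)$ terms. Paley--Zygmund then yields $\Pr[p_f(x)\ge 2^{-n}/2]\ge 1/12-o(1)$.

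Finally, we choose $\delta$ and $\epsilon$ small relative to $\alpha\beta$. On at least a $(\beta-\delta)$ fraction of pairs, the additive error $2\epsilon/(\delta 2^n)$ is a small constant times $p_f(x)$. Together with the Stockmeyer estimate of $q_f(x)$, this gives a constant relative error approximation on a constant fraction of instances, matching assumption \ref{conj:2} and completing the collapse argument.
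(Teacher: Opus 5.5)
Your proposal is correct and is essentially the Bremner--Montanaro--Shepherd argument that the paper cites without reproducing. It uses the same ingredients: Stockmeyer counting, Markov over outcomes, the hiding symmetry under linear shifts, anticoncentration from the second moment $m_2 = 3$ via Paley--Zygmund (with $\delta = 1/24$ this gives their $1/24$ fraction), and Toda's theorem; the one step you state tersely, passing from pairs $(f,x)$ to instances, follows from the bijection $(g_0,x)\mapsto g_0+\langle x,\cdot\rangle$ with $g_0$ the linear-free part of $g$.
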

While condition \ref{conj:1} is widely believed in complexity theory~\cite{arora_computational_2009}, condition \ref{conj:2}---while commonly regarded as highly plausible---is more specific to the literature on quantum advantage. 
The arguments for its truth are reviewed in detail in Ref.~\cite{hangleiter_computational_2023}. In short, they boil down to: 
(a) there is no exploitable structure in a random degree-$3$ polynomial that would simplify the computation compared to the worst case, and 
(b) a large fraction of the outcome probabilities are comparable in size, a property known as anticoncentration, 
\begin{align}
\label{eq:anticoncentration}
  \Eb_{f}\left[\sum_x p_f(x)^2 \right] \le \frac {\alpha}{2^n},
\end{align}
so that roughly the same absolute error must be achieved on a large fraction of the  probabilities.

Since anticoncentration also plays an important role in verification, an important fact about IQP circuits is going to be the first few moments of their output distributions. 
\begin{itemize}
  \item 
Let $\mc P_d(n)$ be the uniform distribution over polynomials with degree at most $d$ in $\FF_2[x_1, \ldots, x_n]$ with constant term $0$, i.e., the coefficient of each multilinear monomial of degree between $1$ and $d$ is sampled independently and uniformly from $\FF_2$.
\item Let $\mc M(n)$ be the distribution over cubic Boolean polynomials in $\FF_2[x_1, \ldots, x_n]$ obtained by defining $f = g + f^{(3)}$, where $g \leftarrow \mc P_2(n)$ and $f^{(3)}$ is drawn from an arbitrary distribution $\mc D$  over cubic polynomials. 
\end{itemize}

\begin{lemma}[Moment bounds for IQP circuits]
\label{lem:moments iqp}
For $k = 1,2,3$, the normalized moments of $\mc M(n)$ are bounded as 
  \begin{align}
    m_k \coloneqq 2^{nk} \Eb_{f \leftarrow \mc M(n)} p_f^{k}(0)= (2k-1)!! + o(1), 
  \end{align}
  where $(2k-1)!! = 1 \cdot 3 \cdot 5  
  \cdots (2k-1)$.
\end{lemma}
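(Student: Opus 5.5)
The plan is to expand $p_f(0)^k$ as a sum over $2k$-tuples of points and average over the uniformly random quadratic and linear part $g$ first. That average turns each term into an indicator of a Reed--Muller-type condition. For $k\le 3$ this condition forces the tuple to cancel in pairs, so the arbitrary cubic part drops out and only a counting problem remains.

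\textbf{Expansion.} Since $p_f(0)=a_f(0)^2$ with $a_f(0)=2^{-n}\sum_{y}(-1)^{f(y)}$, we have
\begin{align*}
2^{nk}\,\Eb_{f\leftarrow\mc M(n)} p_f(0)^k = 2^{-nk}\sum_{y_1,\dots,y_{2k}\in\FF_2^n} \Eb_{f^{(3)}\leftarrow\mc D}\Big[(-1)^{\sum_i f^{(3)}(y_i)}\;\Eb_{g\leftarrow\mc P_2(n)}(-1)^{\sum_i g(y_i)}\Big],
\end{align*}
where $f^{(3)}$ and $g$ are independent. The coefficients of $g$ are independent uniform bits. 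Hence the inner expectation equals $1$ if $\sum_{i=1}^{2k} m(y_i)=0$ for every monomial $m=x_a$ or $m=x_ax_b$, and equals $0$ otherwise.

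\textbf{Key step (where I expect the only real content).} Fix a tuple with $\Eb_g(-1)^{\sum_i g(y_i)}=1$, and let $S\subseteq\FF_2^n$ be the set of points that occur an odd number of times in it. The indicator vector $1_S\in\FF_2^{2^n}$ is orthogonal to the evaluation vector of every monomial of degree $1$ or $2$. It is also orthogonal to the constant function, because $|S|\equiv 2k\equiv 0 \pmod 2$.

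Therefore $1_S$ lies in $\mathrm{RM}(2,n)^\perp=\mathrm{RM}(n-3,n)$, whose minimum nonzero weight is $2^{3}=8$. But $|S|\le 2k\le 6<8$, so $S=\emptyset$: every point occurs an even number of times. Then $\sum_i f^{(3)}(y_i)=0$ identically, so the cubic part contributes a factor $1$ regardless of $\mathcal D$.

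Thus the moment is exactly $2^{-nk}N_k$, where $N_k$ is the number of $2k$-tuples in which every point appears with even multiplicity. This is precisely where $k\le 3$ is used: for $k=4$ a $3$-dimensional affine subspace gives a nonzero $S$ of size $8$, and the cubic part would matter.

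\textbf{Counting.} Choose a perfect matching of the $2k$ positions, in $(2k-1)!!$ ways. Then assign a point to each of the $k$ pairs, in $2^{nk}$ ways. This covers every even-multiplicity tuple, so $N_k\le (2k-1)!!\,2^{nk}$.

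A tuple is counted more than once only if two pairs receive the same point. There are at most $\binom{k}{2}2^{n(k-1)}$ such assignments per matching. Hence
\begin{align*}
(2k-1)!!\,2^{nk}-O(2^{n(k-1)}) \le N_k \le (2k-1)!!\,2^{nk}.
\end{align*}
It follows that $m_k=(2k-1)!!+O(2^{-n})=(2k-1)!!+o(1)$ for $k=1,2,3$.
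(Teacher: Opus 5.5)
Your proof is correct, but it differs from the paper's in the key step and in the counting. Both arguments first average over the uniform quadratic-plus-linear part, which reduces the moment to $2k$-tuples satisfying the degree-$\le 2$ parity constraints.

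The paper then looks at the \emph{columns} of the $2k\times n$ matrix $X$ whose rows are the sampled points. The constraints say the column span is a self-orthogonal code in $\FF_2^{2k}$. The cubic part enters only through the triple products $\langle X(i),X(j),X(k)\rangle$, and a seven-region parity count shows these vanish for self-orthogonal codes of length $\le 6$. The remaining count for $\mc P_2(n)$ is imported from Theorem~8 of Dalzell et al.

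You work dually with the \emph{rows}: the odd-multiplicity set lies in $\mathrm{RM}(2,n)^\perp=\mathrm{RM}(n-3,n)$, whose minimum weight $8$ exceeds $2k$, so the tuple pairs up completely. This buys you several things:
\begin{itemize}
\item a strictly stronger conclusion: the sum of \emph{any} function over the tuple vanishes, so whether $\mc D$ is homogeneous or not is irrelevant;
\item a clean explanation of why $k=3$ is the limit, via weight-$8$ codewords (affine $3$-flats);
\item an elementary matching count with an explicit $O(2^{-n})$ error, so the argument is self-contained.
\end{itemize}

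The paper's column formulation, in turn, isolates exactly which trilinear quantity the cubic coefficients multiply. It also fits the $H^\times\subseteq H^\perp$ framework of Dalzell et al., which is the natural language once $k\ge 4$ and the cubic part genuinely contributes.
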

\begin{proof}
Consider the special case that $\mc D $ is the point measure on a fixed cubic form $f^{(3)}$, i.e., a cubic polynomial with no linear or quadratic terms.  
We follow Eq.~(42) of \textcite{dalzell_how_2020} and find that 
\begin{align}
  2^{2nk} \Eb_{f \leftarrow \mc M(n)} \left[a_f(x) ^{2k} \right] & =  \sum_{x^1, \ldots, x^{2k}} \Eb_{g \leftarrow \mc P_2(n)} (-1)^{(g + f^{(3)})(x^1) + \ldots + (g + f^{(3)})(x^{2k})}\\ 
  & = \sum_{x^1, \ldots, x^{2k}} (-1)^{f^{(3)}(x^1) + \ldots + f^{(3)}(x^{2k})} \Eb_{g \leftarrow \mc P_2(n)} (-1)^{g(x^1)+ \ldots + g(x^{2k})} 
  \\
  & = \sum_{x^1, \ldots, x^{2k}} (-1)^{f^{(3)}(x^1) + \ldots + f^{(3)}(x^{2k})} \prod_{i,j=1}^n \id (x_i^1 x_j^1 + \ldots + x_i^{2k} x_j^{2k} = 0)\\ 
   & =  \sum_{x^1, \ldots, x^{2k}}\prod_{i,j=1}^n \id (x_i^1 x_j^1 + \ldots + x_i^{2k} x_j^{2k} = 0) \label{eq:third order contrib}\\ 
  & = 2^{2nk} \Eb_{g \leftarrow \mc P_2(n)}\left[ a_g(x)^{2k}\right], 
\end{align}
i.e., the first three moments of $\mc M(n)$ are given by the moments of $\mc P_2(n)$. 

To see why \cref{eq:third order contrib} holds, write the matrix $X = (x_i^l)_{l,i}$, and observe that the product of indicators enforces that $X(i)X(j) =0$ for all $i,j \in [n]$, that is, all columns of $X$ are orthogonal and have even Hamming weight. 
We can then write 
\begin{align}
  \sum_{l=1}^{2k}f^{(3)}(x^l) &= \sum_{l=1}^{2k}\sum_{i < j < k}a_{ijk} x^{l}_i x^{l}_jx^{l}_k \eqqcolon \sum_{i < j < k} a_{ijk} \langle X(i),X(j), X(k) \rangle,
\end{align}
where we have defined $\langle u,v,w \rangle = \sum_{l=1}^{m} u_l v_l w_l \mod 2$ for $u,v,w \in \FF_2^{m}$ and notice that $\langle u,v,w \rangle = |\supp(u) \cap \supp(v) \cap \supp(w)| \mod 2$. 
We now observe that any self-orthogonal subspace $C \le \FF_2^{m}$ such that $C \subset C^\perp $ with $m \le 6$ satisfies 
\begin{align}
  \langle u,v,w \rangle \equiv 0. 
\end{align}
To prove this, we partition the set $[m]$ into the seven disjoint regions $t = \supp(u)\cap \supp(v)\cap \supp(w)$, $p_{uv} = \supp(u)\cap \supp(v) \setminus t$, and $s_u = \supp(u) \setminus t \setminus p_{uv} \setminus p_{vw}\setminus p_{uw}$. Self-orthogonality implies 
\begin{align}
  \langle u, v \rangle = |t|+ |p_{uv}| = 0  \quad \Rightarrow \quad |p_{uv}| = |t| \\ 
\langle u, u \rangle = |t| + |p_{uv}| + |p_{uw}| + |s_u|= 0  \quad \Rightarrow \quad| s_{u}| = 3t  ,
\end{align}
and therefore all regions have parity $|t|$. But for $|t| = \langle u,v,w \rangle $ to be odd, $m \ge 7$ is necessary.

We can now apply the bounds from Theorem 8 by \textcite{dalzell_how_2020}, using their subsequent observation for degree-2 polynomials, to obtain the claimed result. 

The generalization to arbitrary distributions $\mc D$ over cubic polynomials is immediate: for arbitrary cubic polynomials, just decompose them into the cubic and the quadratic plus linear terms. The uniform measure $\mc P_2(n)$ is invariant under addition of an arbitrary quadratic polynomial. 
For arbitrary distributions over cubic polynomials, just apply the argument above pointwise. 
\end{proof}
In particular, \cref{lem:moments iqp} implies that 
\begin{align}
\label{eq:second third moment iqp}
  m_2 = 3 + o(1), \quad \text{ and } \quad  m_3 = 15 + o(1). 
\end{align}

Finally, we note that for uniformly random cubic polynomials, \textcite{dalzell_how_2020} proved that all moments match those of random functions. We state a slightly more precise version of the theorem.
\begin{lemma}[\cite{dalzell_how_2020}]
\label{thm:uniform iqp moments}
For all $n,k \in \mb N$
\begin{align}
  \left| 2^{nk} \Eb_{f \sim \mc P_3(n)} \ngap(f)^{2k} - (2k-1)!! \right| \le C_k2^{-n}, 
\end{align}
where $C_k  = (2k-1)!! \, 2^k + k \,2^{k^2}$.  
\end{lemma}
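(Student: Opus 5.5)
The plan is to expand the $2k$-th moment as a sum over $2k$-tuples of inputs and average over the uniform coefficients. Write
\begin{align}
  2^{2nk}\Eb_{f\sim\mc P_3(n)}\ngap(f)^{2k} = \sum_{x^1,\ldots,x^{2k}\in\FF_2^n}\Eb_f (-1)^{f(x^1)+\cdots+f(x^{2k})}.
\end{align}
Since every multilinear monomial of degree $1$ to $3$ has an independent uniform coefficient, the expectation equals the indicator that $\sum_l m(x^l)=0$ for every such monomial $m$. Let $X$ be the $2k\times n$ matrix with rows $x^l$ and columns $X(i)\in\FF_2^{2k}$. The condition then says that every column has even weight, every pair of columns is orthogonal, and every triple satisfies $\langle X(i),X(j),X(k)\rangle=0$. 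After dividing by $2^{nk}$, the moment $2^{nk}\Eb\,\ngap(f)^{2k}$ equals $2^{-nk}N_k$, where $N_k$ counts the matrices $X$ satisfying these conditions.

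Next I split $N_k$ into two parts. The first part collects the \emph{paired} tuples: those where the rows can be matched into $k$ pairs with $x^{l}=x^{l'}$ inside each pair. For these, every column is a sum of pair-indicator vectors, so all three conditions hold automatically. By inclusion--exclusion over the $(2k-1)!!$ perfect matchings, the number of such tuples is $(2k-1)!!\,2^{nk}$ up to overlaps. Overlaps between distinct matchings force coincidences among the $k$ free vectors, so they number at most $(2k-1)!!\,2^{nk}\cdot 2^k\cdot 2^{-n}$ after bounding the number of overlapping matchings crudely. This produces the $(2k-1)!!\,2^k$ term in $C_k$.

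The second part collects the non-paired tuples. Here the columns all lie in a subspace $S\le\FF_2^{2k}$ spanned by the columns of $X$, and $S$ satisfies the even-weight, orthogonality and triple-product conditions. I will argue that whenever the row partition induced by $X$ (rows $l,l'$ are equivalent iff $x^l=x^{l'}$) is not a union of pairs, the space $S$ must avoid a specific vector. The key structural claim is that a subspace of $\FF_2^{2k}$ closed under these degree-$\le 3$ conditions, containing enough columns to separate rows into a non-pairing partition, has dimension at most $k-1$. The number of $X$ whose columns all lie in a fixed $S$ of dimension at most $k-1$ is at most $2^{n(k-1)}$. Summing over at most $2^{k^2}$ candidate subspaces (choices of basis in $\FF_2^{2k}$, bounded crudely), the contribution is at most $k\,2^{k^2}2^{n(k-1)}$, which matches the second term of $C_k$ after normalization.

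The main obstacle is the structural claim about such subspaces $S$. I would try to prove it by Reed--Muller duality. The conditions say that $S$, viewed as the column space, annihilates all degree-$\le3$ evaluations. Equivalently, the multiset of rows $\{x^l\}$ sums to zero under every monomial of degree at most $3$ in the coordinates of $S^*$. So the row multiset, projected to $\FF_2^{\dim S}$, is a design that is orthogonal to $\mathrm{RM}(3,\dim S)$. For $2k$ points this forces either perfect pairing or $\dim S$ small. This follows Dalzell et al.'s argument, and I will lean on their counting. Getting the exact constant in $C_k$ may need care; for that I would fall back to a loose but explicit union bound.
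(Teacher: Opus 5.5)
Your bookkeeping is the paper's, which is Dalzell et al.'s Theorem~8 with the constants made explicit. You split into paired versus non-paired tuples rather than by $\dim S=k$ versus $\dim S<k$; this only moves the tuples whose span is a proper subspace of some matching space from one term to the other. Write $A_M$ for the tuples paired according to a matching $M$. The overlap term is fine as long as you bound, for each $M$, $|A_M\cap\bigcup_{M'\neq M}A_{M'}|\le\binom{k}{2}2^{n(k-1)}\le 2^k2^{n(k-1)}$; a Bonferroni sum over all pairs of matchings would be much too large. The genuine gap is the structural claim, which carries all the mathematical content, and the Reed--Muller route you suggest does not reach it. Duality correctly places the mod-$2$ multiplicity function of the projected rows in $\mathrm{RM}(3,d)^\perp=\mathrm{RM}(d-4,d)$. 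But the minimum distance of that code only says a non-paired configuration has at least $16$ rows of odd multiplicity. It gives no comparison between $d=\dim S$ and $k$.

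The missing idea is closure under coordinatewise products. The form $\tau(u,v,w)=\sum_l u_lv_lw_l$ on $\FF_2^{2k}$ is trilinear, with $\tau(u,u,w)=\langle u,w\rangle$ and $\tau(u,u,u)=|u| \bmod 2$. So your three conditions on the columns say exactly that $\tau$ vanishes on all triples of columns, repetitions allowed, and hence on $S^3$. With $S^\times=\mathrm{span}\{u\odot v:u,v\in S\}$ this reads $S\subseteq S^\times\subseteq S^\perp$, which is the condition $H^\times\subseteq H^\perp$ the paper imports from Dalzell et al. In particular $\dim S\le k$. If $\dim S=k$, then $S=S^\perp$ contains $\mathbf 1$ and is closed under $\odot$. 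Hence $S$ is the space of vectors constant on the $k$ blocks of a partition of $[2k]$. Each block indicator lies in $S\subseteq S^\perp$ and so has even size, which forces every block to be a pair, i.e.\ the rows are paired. The contrapositive is your claim.

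Your subspace count also needs this structure. Counting bases in $\FF_2^{2k}$ gives up to $2^{2k(k-1)}$ choices. Even the number of $(k-1)$-dimensional subspaces of $\FF_2^{2k}$ can exceed $2^{k^2}$; already $\binom{6}{2}_2=651>2^9$. Since $\mathbf 1\in S^\perp$, every admissible $S$ lies in the $(2k-1)$-dimensional even-weight space. So there are at most $\binom{2k-1}{d}_2\le 2^{d(2k-d)}\le 2^{k^2-1}$ such subspaces of each dimension $d<k$. Summing over $d$ gives the factor $k$, as in the paper.
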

\begin{proof}[Proof sketch]
The proof is verbatim that of \cite[Theorem~8]{dalzell_how_2020}, and only needs to make the $k$-dependence explicit. 
The only implicit constant is $c_k'$ appearing in the last paragraph of the proof, the number of subspaces $H \le \FF_2^{2k}$ of dimension $d<k$ satisfying the closure condition $H^\times\subseteq H^\perp$ (see \cite{dalzell_how_2020} for the definitions of $H^\times, H^\perp$), which the original proof only asserts to ``depend on $k$.''
In particular, every $H$ lies in the even-weight subspace of $\FF_2^{2k}$, which has dimension $2k-1$. 
We bound $c_k'$ crudely by the total number of such even-weight subspaces of $\mathbb F_2^{2k}$ of dimension $d<k$, via the Gaussian binomial estimate
\begin{align}
\binom{2k-1}{d}_2 \le 2^{d(2k-d)} \le 2^{k^2-1} \quad\text{for } d<k,
\qquad\text{so}\qquad c_k' \le \sum_{d=0}^{k-1} \binom{2k-1}{d}_2 \le k\cdot 2^{k^2-1}. 
\end{align}
Inserting this into the original error accounting with the $d=k$ correction at most $(2k-1)!!\cdot 2^k 2^{(k-1)n}$ and the $d < k$ terms contributing at most $c_k' 2^{(k-1)n}$ shows the claim.
\end{proof}

\begin{lemma}[Min-entropy bounds for random IQP circuits]
\label{lem:min-entropy}
We have the min-entropy bounds
\begin{align}
  \Pr_{f \sim \mc M(n)}\left[H_\infty(p_f) \ge  2n/3 - O(\log(1/\delta)) \right] &\ge 1- \delta\\
  \Pr_{f \sim \mc P_3(n)}\left[H_\infty(p_f) \ge n - O(\sqrt n) \right] &\ge 1- \exp(-n).
\end{align}
\end{lemma}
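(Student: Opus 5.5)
The plan is to use the moment method combined with a union bound over the $2^n$ outcomes. The min-entropy satisfies $H_\infty(p_f) = -\log_2 \max_x p_f(x)$, so it suffices to upper bound $\max_x p_f(x)$ with high probability. The key observation is that the distribution of $p_f(x)$ does not depend on $x$ for either ensemble. Indeed, $p_f(x) = \ngap(f + \langle x,\cdot\rangle)^2$, and adding the linear form $\langle x,\cdot\rangle$ is a bijection on $\mc P_2(n)$ and on $\mc P_3(n)$, leaving the cubic part untouched. Hence for every $x$, $p_f(x)$ has the same law as $p_f(0)$. By Markov's inequality on the $k$-th moment and a union bound,
\begin{align}
  \Pr_f\Bigl[\max_x p_f(x) \ge t\Bigr] \le 2^n \, \frac{\Eb_f p_f(0)^k}{t^k} = 2^n\,\frac{m_k}{(2^n t)^k}.
\end{align}

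For $\mc M(n)$, I would take $k=3$ and invoke \cref{lem:moments iqp}, which gives $m_3 = 15+o(1)$. Setting $t = 2^{-2n/3}\,(16/\delta)^{1/3}$, the right-hand side becomes $2^n \cdot 16 \cdot 2^{-n}\cdot\delta/16 = \delta$ for large $n$. This yields $H_\infty \ge 2n/3 - \tfrac13\log_2(16/\delta) = 2n/3 - O(\log(1/\delta))$. The exponent $2/3$ is exactly what is available, since only three moments are controlled for an arbitrary cubic part. No higher moment can be used here, because \cref{lem:moments iqp} is restricted to $k\le 3$.

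For $\mc P_3(n)$, I would use \cref{thm:uniform iqp moments}, which controls all moments: $2^{nk}\Eb\,\ngap(f)^{2k} \le (2k-1)!! + C_k 2^{-n}$ with $C_k = (2k-1)!!\,2^k + k\,2^{k^2}$. Choosing $k \approx \sqrt{n}$ keeps $C_k 2^{-n} \le 2^{k^2 - n}\cdot\mathrm{poly} = O(1)\cdot(2k-1)!!$ up to polynomial factors, so $m_k \le 2(2k-1)!! \le 2(2k)^k$. Taking $t = 2^{-n} \cdot 2^{c\sqrt n}$, the bound becomes
\begin{align}
  2^n \cdot 2(2k)^k\, 2^{-ck\sqrt n} = 2^{\,n + 1 + k\log_2(2k) - c k \sqrt n}.
\end{align}
With $k=\lfloor\sqrt n\rfloor$ the exponent is $n + O(\sqrt n \log n) - c n$. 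This is below $-n\log_2 e$ for a suitable constant $c$, but only up to the $\log n$ term, giving $H_\infty \ge n - O(\sqrt n \log n)$.

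The main obstacle is removing this $\log n$ factor to reach the stated $n - O(\sqrt n)$. Here the tradeoff is delicate. The moment growth $(2k-1)!!$ requires $k$ large to beat $2^n$ in the union bound, while the error term $k2^{k^2-n}$ forces $k \lesssim \sqrt n$. My first attempt would be to optimize $t = 2^{-n}\,\lambda$ with $\lambda = C\sqrt{n}\cdot$ const rather than $2^{c\sqrt n}$, working multiplicatively. With $k = \sqrt{n}$ and $(2k-1)!! \le (2k/e)^k\sqrt2$, choosing $\lambda = 2^{\gamma\sqrt n}$ requires $\gamma\sqrt n \cdot k \ge n(1+\log_2 e) + k\log_2(2k/e)$. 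The $\log k$ term then reappears as $\gamma \gtrsim \tfrac12\log_2 n$, so this pure moment approach gives $O(\sqrt n\log n)$.

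To obtain $O(\sqrt n)$ one would need sharper control, for example hypercontractivity or a more refined treatment of the $H$-subspace counting that allows $k$ up to a constant times $n/\log n$. If the intended reading absorbs logarithmic factors, the argument above suffices. Otherwise I would try to improve $C_k$ in \cref{thm:uniform iqp moments} to $2^{O(k\log k)}$, which permits $k = \Theta(n/\log n)$ and gives even $n - O(\log n)$.
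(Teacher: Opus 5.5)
Your approach is the same as the paper's: Markov on the $k$-th moment plus a union bound over $x$ gives exactly the paper's Markov bound on $\sum_x p_f(x)^k$ combined with $\max_x p_f(x)^k\le\sum_x p_f(x)^k$. Your $k=3$ bound for $\mc M(n)$ is correct. The gap is in the second bound. The obstacle you describe there does not exist; it comes from an arithmetic slip. Because of it, your proposal stops at $n-O(\sqrt n\log n)$ and leaves the stated claim unproven.

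With $t=2^{-n}\lambda$, your union bound is at most $e^{-n}$ as soon as
\[
  \log_2\lambda\;\ge\;\frac{n(1+\log_2 e)}{k}+\frac{\log_2 m_k}{k}.
\]
For $k=\lfloor\sqrt n\rfloor$ the first term is $(1+\log_2 e)\sqrt n+O(1)$. By your own estimate $m_k\le 2(2k)^k$, the second term is at most $\log_2(2k)+1/k=O(\log n)$. So the $\log n$ enters \emph{additively}, not multiplied by $\sqrt n$. Hence $H_\infty(p_f)\ge n-(1+\log_2 e)\sqrt n-O(\log n)=n-O(\sqrt n)$ with probability at least $1-e^{-n}$.

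This is exactly the paper's computation $H_\infty(p_f)\ge\frac1k\bigl(\log\delta-\log\Eb_f\sum_x p_f(x)^k\bigr)$ with $\delta=e^{-n}$ and $k=\lfloor\sqrt n\rfloor$. There, $\frac{k-1}{k}n=n-O(\sqrt n)$, $\frac1k\log(1/\delta)=O(\sqrt n)$, and $\frac1k\log m_k=O(\log n)$.

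Both of your attempts already contain this.
\begin{itemize}
\item In the first, the exponent $n+1+k\log_2(2k)-ck\sqrt n\le(1-c)n+O(\sqrt n\log n)$ lies below $-n\log_2 e$ for any constant $c>1+\log_2 e$ once $n$ is large, because $\sqrt n\log n=o(n)$. So $t=2^{-n+c\sqrt n}$ works with a constant $c$.
\item In the second, dividing $\gamma\sqrt n\,k\ge n(1+\log_2 e)+k\log_2(2k/e)$ by $k\sqrt n$ gives $\gamma\ge(1+\log_2 e)\sqrt n/k+\log_2(2k/e)/\sqrt n=1+\log_2 e+o(1)$. The $\tfrac12\log_2 n$ belongs to $\gamma\sqrt n=\log_2\lambda$, not to $\gamma$.
\end{itemize}
No hypercontractivity or sharpening of $C_k$ is needed. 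Once the arithmetic is corrected, the argument you set up proves the lemma as stated.
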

\begin{proof}
We show the lemma using a general min-entropy bound based on moments. 
We follow the proof of Lemma~5 in Ref.~\cite{hangleiter_sample_2019} with the second
moment replaced by the $k$-th. 
We then use that  $H_\infty(p) \ge \frac{k-1}{k}H_k(p)$, where 
\begin{align}
  H_\alpha(p) \coloneqq - \frac{1}{\alpha-1} \log ( \sum_{x \in \FF_2^n} p(x)^\alpha )
\end{align}
is the $\alpha$-R\'enyi entropy. 
This gives us that for any $k,n \in \mb N$, given any distribution $\mc D(n)$ over $n$-bit Boolean polynomials, with probability at least $1-\delta$ over the choice of $f \sim \mc D(n)$,
$$
H_\infty\big(p_{f}\big) \ge \frac{1}{k}
\left( \log\delta - \log\, \Eb_{f \sim \mc D(n)}\left[\sum_{x\in\FF_2^n}
p_{f}(x)^{k}\right] \right).
$$
For the first bound, 
we choose $k=3$ and use \cref{lem:moments iqp}, for the second, 
we let $k=\lfloor\sqrt n\rfloor$, $\delta = \exp(-n)$ and use \cref{thm:uniform iqp moments}.
\end{proof}

\subsection{Verification via XEB}
\label{sub:verification_via_xeb}

Our verifiable quantum advantage scheme is based on sampling from the output distribution of IQP circuits randomly drawn from some distribution $\mc M(n)$ over cubic polynomials as in \cref{lem:moments iqp}. 
Our quantum advantage test is based on the so-called \emph{cross-entropy benchmark (XEB)} defined for two distributions $q,p: \FF_2^n \rightarrow [0,1]$ as follows: 
\begin{align}
\label{eq:single xeb}
   \chi(q,p) \coloneqq 2^n \sum_x q(x) p(x) - 1,  
 \end{align} 
where we think of $q$ as the sampled distribution and $p$ as the target distribution. 
The XEB can be estimated empirically, given samples $x_1, \ldots, x_\ell \leftarrow q$ as 
\begin{align}
  \hat \chi(x_1, \ldots, x_\ell) = 2^{n} \cdot \frac 1 \ell  \sum_{i=1}^\ell p(x_i) -1. 
\end{align}
We can further define the average XEB with respect to the measure $\mc M(n)$ as  
\begin{align}
\label{eq:average xeb}
  \overline \chi \coloneqq \Eb_{f \leftarrow \mc M(n)} \chi(q_f,p_f), 
\end{align}
where for every target distribution $p_f$, we consider an actually sampled distribution $q_f$. 
Again, the average XEB can be estimated given samples $x_1^f, \ldots, x_\ell^f \leftarrow q_f$ for $f= f_1, \ldots, f_L$ sampled from $\mc M(n)$ as 
\begin{align}
  \hat {\overline \chi}(\{ x_i^{f_j} \}_{ij}) = {2^{n}}  \cdot \frac 1{\ell L} \sum_{j=1}^L \sum_{i=1}^\ell p_{f_j}(x_i^{f_j} ) -1 . 
\end{align}
It is easy to see that 
\begin{align}
  \overline \chi = \begin{cases}
    0 & \text{ if  }  \forall f:  q_f = \id/2^n \\
    m_2 -1 & \text{ if  }   \forall f: q_f = p_f,
  \end{cases}
\end{align}
and therefore the average XEB serves as a bona-fide test that distinguishes uniform samples from ideal samples with ideal score $\overline \chi = 2 + o(1)$ by \cref{eq:second third moment iqp}. 
In fact, it has been conjectured that this is already sufficient for a quantum advantage test in that no classical polynomial-time algorithm will be able to achieve a high average XEB value for random quantum circuits \cite{aaronson_complexity-theoretic_2017,aaronson_classical_2020}. We formalize this task as follows. 

\begin{task}[$\xeb_{b,\ell}$]
  Given polynomials $f_1, \ldots, f_L \in \FF_2[x_1, \ldots, x_n]$, output samples $x_1^{f_1}, \ldots, x_\ell^{f_L}$ satisfying 
  \begin{align}
    \hat {\overline \chi}(\{ x_i^{f_j} \}_{i \in [\ell], j \in [L]}) \ge b.
  \end{align}
\end{task}

We state here the hardness of achieving high XEB for random cubic polynomials as a conjecture analogous to the arguments of \textcite{aaronson_classical_2020}\footnote{Strictly speaking, \textcite{aaronson_classical_2020} reduce the hardness of \xeb\ to a task they dub XQUATH, which shall not concern us here, since its hardness is a stronger conjecture and has been proven false for certain low-depth settings, where achieving high XEB is conceivably still hard \cite{gao_limitations_2024,aharonov_polynomial-time_2023}.}. 
\begin{conjecture}[Computational soundness of \xeb]
  Let  $f_1, \ldots, f_L$ be drawn independently from $\mc P_3(n)$, and let $L, \ell = \poly(n)$. 
  Then there is no $\poly(n,L,\ell)$-time classical algorithm that solves $\xeb_{1,\ell}(f_1, \ldots, f_L)$ with probability at least $ 1- 1/\poly(L)$ over the choice of polynomials and the algorithm's randomness. 
\end{conjecture} 

Moreover, for completeness of the XEB test, we would like to ensure that a sampler from the ideal distribution $p_f$ indeed passes this test with high probability for sufficiently large $L, \ell$. To see why this is true, consider the estimator obtained for $\ell = 1$. 
\begin{lemma}[Completeness for the ideal sampler]
\label{lem:completeness xeb}
For $\ell =1$ and $q_f = p_f$, $\hat{\overline \chi} \ge 1$ with probability at least $1- 10/L$. 
\end{lemma}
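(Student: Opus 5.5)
With $\ell = 1$ the estimator is an average of independent random variables, so I will prove the lemma by Chebyshev's inequality using the moment bounds of \cref{lem:moments iqp}. Concretely, the estimator is
\[
\hat{\overline\chi} = \frac1L\sum_{j=1}^L Z_j - 1, \qquad Z_j \coloneqq 2^n p_{f_j}(x^{f_j}),
\]
where $f_j \leftarrow \mc M(n)$ are drawn independently and $x^{f_j} \leftarrow p_{f_j}$. The pairs $(f_j, x^{f_j})$ are independent across $j$, so the $Z_j$ are i.i.d.

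First I compute the mean of a single term:
\[
\Eb Z_j = 2^n\, \Eb_f \sum_x p_f(x)^2 .
\]
To express this through $m_2$, I use the shift invariance of $\mc M(n)$. Replacing $f$ by $f + \langle x,\cdot\rangle$ leaves $\mc M(n)$ invariant, because the linear part is uniform and independent of the rest. This gives $p_f(x) \overset{d}{=} p_f(0)$, and hence $\Eb Z_j = 2^{2n}\Eb p_f(0)^2 = m_2 = 3 + o(1)$ by \cref{eq:second third moment iqp}.

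Next I bound the second moment in the same way:
\[
\Eb Z_j^2 = 2^{2n}\, \Eb_f \sum_x p_f(x)^3 = m_3 = 15 + o(1).
\]
Therefore $\Var Z_j = m_3 - m_2^2 = 6 + o(1) \le 10$ for large enough $n$, or a crude bound of $\le 15 + o(1)$ would also suffice. By independence, $\Var\bigl(\frac1L\sum_j Z_j\bigr) \le 10/L$.

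Finally I apply Chebyshev. The event $\hat{\overline\chi} < 1$ means $\frac1L\sum_j Z_j < 2$, which requires a deviation of at least $m_2 - 2 = 1 + o(1) \ge 1 - o(1)$ below the mean. Hence
\[
\Pr\bigl[\hat{\overline\chi} < 1\bigr] \le \frac{\Var(\frac1L\sum_j Z_j)}{(1-o(1))^2} \le \frac{6+o(1)}{L},
\]
which is at most $10/L$ for $n$ sufficiently large.

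The only delicate point is the shift invariance that lets me pass from $\sum_x p_f(x)^k$ to $2^n p_f(0)^k$. I need to check that \cref{lem:moments iqp} applies with $\mc M(n)$'s lower-degree part uniform, so that adding a linear form preserves the distribution; this holds by construction. The $o(1)$ slack is then absorbed by the gap between $6$ and $10$.
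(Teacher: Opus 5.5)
Your proposal is correct and follows the paper's proof essentially verbatim: the hiding property (invariance of $\mc M(n)$ under adding linear forms) turns $\Eb Z_j$ and $\Eb Z_j^2$ into $m_2$ and $m_3$, so $\Var Z_j = m_3 - m_2^2 = 6+o(1)$, and Chebyshev with deviation $m_2 - 2 = 1+o(1)$ gives $(6+o(1))/L \le 10/L$ for large $n$. One small slip: your parenthetical claim that the crude bound $\Var Z_j \le \Eb Z_j^2 = 15+o(1)$ ``would also suffice'' is wrong, since it only yields roughly $15/L$, so you do need the subtraction of $m_2^2$.
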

\begin{proof} Let $Z = 2^n p_f(X)-1$ with $f \sim \mc M(n)$ and $X \sim p_f$. Then $\Eb[Z] = \overline \chi$. Using the fact that $\mc M(n)$ is invariant under additions of arbitrary linear terms, which correspond to the different outcomes (the hiding property), 
the single-shot variance is given by 
  \begin{align}
  \Var[Z] = 2^{3n}\Eb_{f \leftarrow \mc M(n)}  p_f(x)^3 - \left(2^{2n} \Eb_{f \leftarrow \mc M(n)} p_f(x)^2\right)^2= m_3 - m_2^2 = 6 + o(1), 
\end{align}
and therefore $\Var[
  \hat {\overline \chi}
] = ( m_3 - m_2^2 )/L$. 
By Chebyshev's inequality (choosing $\epsilon = m_2 -2$) we therefore find 
\begin{align}
   \Pr[\hat{\overline \chi} < 1] \le  \frac{\Var[\hat{\overline \chi}]}{(m_2 - 2)^2}, 
 \end{align} 
 which completes the proof.
\end{proof}

\subsection{Simulating IQP circuits with and without planted independent spaces}
\label{sub:simulating_iqp_circuits_with_and_without_pis}

In this section, we will establish that knowledge of a large (planted) independent space in a polynomial $f$ as in \cref{eq:cubic polynomial normal form} reduces the complexity of (strongly) simulating the corresponding IQP circuit in the sense of computing the outcome amplitudes $a_f$ up to exponentially small error.  
This is the meaningful task to consider, since all known sampling algorithms for random IQP circuits involve computing the outcome amplitudes as a subroutine \cite{hangleiter_computational_2023}.

\paragraph{Simulating random polynomials}
Let us first consider the complexity of simulating IQP circuits corresponding to random cubic polynomials. The best known algorithm has been shown by \textcite{dalzell_how_2020} building on the breakthrough results of \textcite{lokshtanov_beating_2017}. 
Surprisingly, this algorithm achieves a runtime that is asymptotically better than $2^n$. In other words, 
to compute the gap of a Boolean polynomial, one need not na\"ively evaluate the polynomial on all inputs. 
\begin{theorem}[Asymptotic complexity \cite{dalzell_how_2020}]
\label{thm:asymptotic complexity gap}
  There is an algorithm that computes $\gap(f)$ for a cubic polynomial $f \in \FF_2[x_1, \ldots, x_n]$ in time $\tilde O(2^{0.9965n})$. 
\end{theorem}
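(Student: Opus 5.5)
The plan is to follow the polynomial-method approach of \textcite{lokshtanov_beating_2017}, in the form used by \textcite{dalzell_how_2020}. The first step is a reduction from the gap to exact counting. Since $2^n\ngap(f) = 2^n - 2N(f)$ with $N(f) = |\{x : f(x)=1\}|$, it suffices to compute $N(f)$ exactly. Fix a split $x=(y,a)$ with $y\in\FF_2^{n-b}$ and $a\in\FF_2^{b}$, where $b=\delta n$ for a small constant $\delta$. Define the partial counts
\begin{align}
 N(y) \coloneqq |\{a\in\FF_2^b : f(y,a)=1\}| \in [0,2^b],
\end{align}
so that $N(f)=\sum_y N(y)$. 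The algorithm will compute $N(y)$ for all $y$ at once, in time noticeably less than $2^{n}$, using a compact polynomial representation of the map $y\mapsto N(y)$.

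The second step builds that representation. Because $N(y)\le 2^b$, it is determined by its residue modulo $2^{b+1}$. The indicator $[f(y,a)=1]$ is already a degree-$3$ polynomial over $\FF_2$. To lift it to $\mathbb{Z}/2^{b+1}$ while keeping low degree, I would do the following:
\begin{enumerate}[label=(\alph*)]
 \item apply a Valiant--Vazirani-style isolation, i.e., add random affine constraints on $a$, so that for each fixed $y$ only few $a$ survive;
 \item compose with a modulus-amplifying (Beigel--Tarui/Toda-type) polynomial of degree $O(b)$, which turns the $\FF_2$-valued indicator into a $\{0,1\}$-valued quantity correct modulo $2^{b+1}$.
\end{enumerate}
Summing over $a$ then yields a polynomial $P(y)$ in the $n-b$ variables $y$, of degree $O(b)$, whose sum over $y$ reproduces $N(f)$ modulo $2^{b+1}$, up to a controlled, repeatable error from the randomized isolation. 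The number of monomials of $P$ is at most $\binom{n-b}{\le O(b)}\le 2^{H(O(\delta))n}$. The coefficients of $P$ can be computed by interpolation/enumeration in roughly $2^{b}\cdot\binom{n-b}{\le O(b)}\cdot\poly(n)$ time.

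The third step is evaluation and balancing. Evaluate $P$ on all $2^{n-b}$ points using the fast zeta/Möbius (Yates) transform over the Boolean cube, which takes $\tilde O(2^{n-b})$ time provided the monomial count is at most $2^{n-b}$. Then sum the resulting values. The total cost is
\begin{align}
 \tilde O\big(2^{(1-\delta)n} + 2^{\delta n + H(c\delta)n}\big),
\end{align}
and choosing a small enough constant $\delta$ makes both exponents at most $0.9965n$.

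The main obstacle is step two: obtaining \emph{exact} counts rather than just parity, without letting the degree of the amplified polynomial blow up. A degree larger than $\Theta(b)$ would push the monomial count above $2^{n-b}$ and destroy the saving, so the constants have to be tracked carefully through both the isolation and the amplification. That constant-tracking is exactly what fixes the exponent $0.9965$.
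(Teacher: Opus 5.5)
Your skeleton is the Lokshtanov et al./Dalzell et al.\ polynomial method that the paper cites and sketches. You split $x=(y,a)$, show that $y\mapsto N(y) \bmod 2^{b+1}$ agrees with a polynomial of degree $O(b)$ in the remaining $n-b$ variables, and get its coefficients by brute-force evaluation at the $\binom{n-b}{\le O(b)}$ low-weight points plus M\"obius inversion. You then evaluate it everywhere with a Yates/zeta transform. That part is fine.

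The genuine problem is step (a). Valiant--Vazirani isolation is a device for \emph{decision}, not counting. Once you restrict $a$ to a random affine subspace, the quantity you sum is the number of surviving solutions, not $N(y)$. Nothing in your argument recovers $N(y)$ exactly from such restricted counts. Repetition cannot average this ``error'' away, because you need the exact integer $N(f)$. The indicator of the affine constraints also raises the degree. As written, step two does not deliver the congruence $P(y)\equiv N(y) \pmod{2^{b+1}}$ that step three relies on.

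The repair is to delete (a). Isolation belongs to the decision algorithm of Lokshtanov et al., and probabilistic polynomials to handling an AND of many equations; for exact counting with the single cubic $f$ you need neither. Either of the following works:
\begin{enumerate}[label=(\roman*)]
  \item Let $\tilde f\in\mathbb{Z}[x]$ have the same $0/1$ coefficients as $f$, so that $\tilde f(x)\equiv f(x) \pmod 2$ on Boolean inputs. Compose it with the Beigel--Tarui polynomial of degree $2b+1$.
  \item More simply, write $[f(x)=1]=\sum_{S\neq\emptyset}(-2)^{|S|-1}\prod_{i\in S}M_i(x)$ over the monomials $M_i$ of $f$. Discard the terms with $|S|>b+1$, which vanish modulo $2^{b+1}$.
\end{enumerate}
Either way you get, deterministically and for \emph{every} $y$, $P(y)\equiv N(y) \pmod{2^{b+1}}$ with $\deg P\le 6b+3$ (resp.\ $3b+3$).

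Two smaller corrections. First, it is this pointwise congruence that you need, followed by lifting each value to $[0,2^{b+1})$ and summing over $\mathbb{Z}$. A statement that the sum over $y$ reproduces $N(f)$ modulo $2^{b+1}$, as you wrote, would not determine $N(f)$. Second, the balancing is not ``$\delta$ small enough'': the $2^{(1-\delta)n}$ term forces $\delta\ge 0.0035$. With degree $6b+3$ and $\delta=0.05$, the interpolation cost is about $2^{0.905n}$ and the evaluation cost about $2^{0.95n}$, both below $2^{0.9965n}$. So $0.9965$ is just the constant reported in the cited work, not something your constant-tracking must hit.
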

The central idea underlying this algorithm is the observation that the partial sum 
\begin{align}
  F_b(y) = \sum_{x \in \FF_2^b} (-1)^{f(y,x)},
\end{align}
can be expressed as a low-degree polynomial modulo a certain power of $2$. 
The running time depends on the cost of constructing this modular
representation and recovering the partial sums from it.
The polynomial for $b = cn$ for some $c > 0$ can then be evaluated in time strictly less than $2^{(1-c)n}$ and therefore the overall sum $\sum_y F_b(y)$ can be evaluated in time less than $2^n$. 
The fundamental limit of this variable-splitting approach appears to be runtime $\tilde O(2^{n/2})$ since the best one can hope for is to choose $b = n/2$, i.e., equipartition the variables, and find an algorithm that computes $F_{n/2}(y)$ for each $y$ in time at most $\tilde O(2^{n/2})$ \cite{dalzell_how_2020}.
This limit is achieved by an \ma\ algorithm for computing the gap due to \cite{williams_strong_2016}.

In upcoming work, we build on this idea and develop an algorithm with a significantly improved exponent, approaching the limit of this approach at $n/2$. 
\begin{theorem}[Improved gap algorithms \cite{upcoming}]
\label{thm:upcoming gap theorem}
  Given a cubic polynomial $f(x) \in \FF_2[x_1, \ldots, x_n]$, $\gap(f)$ can be computed exactly in expected time $2^{0.552n + O(\sqrt n)}$. 
\end{theorem}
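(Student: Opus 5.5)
Since this result is quoted from forthcoming work, what follows is only a reconstruction of how I would aim for the exponent $0.552$; it is not the authors' argument. The starting point is the variable-splitting framework behind \cref{thm:asymptotic complexity gap}. Split the variables as $(y,x)$ with $y\in\FF_2^{n-b}$ and $x\in\FF_2^{b}$, and write $\gap(f)=\sum_y F_b(y)$ with $F_b(y)=\sum_{x}(-1)^{f(y,x)}$. The plan is to compute a compressed representation of the whole table $y\mapsto F_b(y)$ and then read off all $2^{n-b}$ values at amortised cost $\poly(n)$ each. The tuning parameter $b=\beta n$ is optimised at the end, and $\beta$ should come out close to $1/2$.

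The first step is to represent $F_b(y)$, as in \textcite{lokshtanov_beating_2017,dalzell_how_2020}, as an integer polynomial $P(y)$ reduced modulo a suitable power $2^{m}$, where $m$ exceeds $\log_2$ of the largest possible value of $|F_b|$. We need the $\FF_2$-valued $f(y,x)$ and the $\pm1$ sign as integer-valued functions of $y$. Each cubic, quadratic and linear term of $f$ is a product of variables, so I would encode the sign by replacing each $\FF_2$ sum with the Razborov--Smolensky/modulus-amplifying polynomials used there, chosen so that its degree in $y$ is $D=\delta n$ rather than the trivial bound. The number of monomials is then at most $\binom{n-b}{\le D}\approx 2^{(1-\beta)H(\delta/(1-\beta))n}$. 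The coefficients are obtained by summing, over the $2^{b}$ choices of $x$, the contributions to each monomial, at cost about $2^{b}\binom{n-b}{\le D}$. The evaluation on all of $\FF_2^{n-b}$ is a single Möbius/Yates transform costing $\tilde O(2^{n-b})$.

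The improvement over $0.9965$ has to come from exploiting that $f$ is cubic, so that $D$ can be made much smaller. Inside the $x$-block I would choose a subset $J$ of $O(\sqrt n)$ coordinates that is an independent set of the hypergraph $H(f)$. After fixing $y$ and the remaining $x$-coordinates, the polynomial restricted to $J$ is quadratic, so its gap is computable in $\poly(n)$ time via the rank/Dickson normal form of $\Quad$. Summing over $J$ in closed form therefore costs nothing exponential, and this is the source of the $2^{O(\sqrt n)}$ slack in the statement. A random change of basis (a random $A\in\gl(n,\FF_2)$, or random splitting choices) would be used to make the rank profile of the Hessian slices $\Hess^f_v$ generic. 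That in turn controls the degree in $y$ of the representation of the partial sums, which is why the bound is stated as an expected running time.

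The final step is to balance the three costs: $2^{n-b}$ for evaluation, $2^{b}\binom{n-b}{\le D}$ for interpolation, and the cost of the modulus-$2^m$ arithmetic. I would then minimise the maximum over $\beta,\delta$ numerically, expecting the optimum near $0.552$. The main obstacle I anticipate is the degree-reduction step: proving that for a (randomised) cubic $f$ the partial sums admit a representation modulo $2^m$ of degree $\delta n$ with $\delta$ small enough that $\beta+(1-\beta)H(\delta/(1-\beta))$ falls to about $0.552$. My first attempt would express $F_b(y)$ through the ranks of the $y$-dependent quadratic forms $x\mapsto f(y,x)$, using the Dickson gap formula in the regime where the cubic part in $x$ has been killed, and bound the degree of rank-indicator polynomials in $y$.
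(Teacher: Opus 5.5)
Your outline has a genuine gap, and it sits exactly where the exponent is decided. The paper itself does not prove this statement. It is quoted from forthcoming work, and the text only says that it refines the Lokshtanov et al.\ variable-splitting route, with coefficient tables for $F_b$ of size $2^{0.552n+O(\sqrt n)}$. Your framework matches that description, but it does not supply the missing step.

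\textbf{The degree bound is missing.} You want $F_b(y)$ exactly and $|F_b(y)|\le 2^b$, so you need $m\ge b+2$. Your own balance, $\max\{2^{n-b},\,2^{b}\binom{n-b}{\le D}\}\le 2^{0.552n}$ (with $H$ the binary entropy), forces $\beta\ge 0.448$. It then forces $(1-\beta)H(\delta/(1-\beta))\le 0.552-\beta\le 0.104$. So you need $D\lesssim 0.016n$ while $m>0.448n$, and nothing you name produces this.
\begin{itemize}
\item Writing $(-1)^{f(y,x)}=\prod_t(1-2M_t)$ over the monomials $M_t$ of $f(\cdot,x)$ and truncating modulo $2^m$ gives $y$-degree up to $3(m-1)$. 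Modulus-amplifying polynomials likewise have degree linear in $m$.
\item For $\beta\approx 0.45$ that bound exceeds $n-b$, so the ``compressed'' representation is the whole truth table. This route is optimised around $\beta\approx 0.09$ and gives an exponent near $0.9$.
\item Razborov--Smolensky polynomials do not help. In Lokshtanov et al.\ they approximate, with error, the conjunction of many equations; here there is a single polynomial and you need an exact count.
\end{itemize}
A proof must exhibit a drastic degree reduction coming from cancellation in $\sum_x$. Your outline never does this, so the final numerical optimisation has no valid input.

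\textbf{The structural ideas behind that reduction do not apply as stated.}
\begin{itemize}
\item The Dickson/rank expression for $F_b(y)$ needs $x\mapsto f(y,x)$ to be quadratic. That means the whole $x$-block must be an independent space of dimension $b=\Theta(n)$. By the paper's lemma, a random cubic has $\alpha(f)\le\sqrt{6n}$ with high probability, and the theorem must cover such $f$. So at most an $O(\sqrt n)$-dimensional piece $J$ can be summed in closed form.
\item $J$ cannot be a set of $\Theta(\sqrt n)$ coordinates as you wrote. For random $f$, $H(f)$ has independent sets of only $\Theta(\sqrt{\log n})$ vertices, and an $f$ with all $F_{ijk}=1$ has none larger than $2$. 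What does work is a greedy independent \emph{space} of dimension about $\sqrt{2n}$, followed by a change of basis. Extending $\langle u_1,\dots,u_k\rangle$ by $v$ imposes only the $\binom k2$ linear conditions $\TF^f(v,u_i,u_j)=0$.
\item Either way this saves a factor $2^{O(\sqrt n)}$ per evaluation of $F_b$. It cannot move the constant $0.552$. A saving also cannot be the source of an additive $+O(\sqrt n)$ in the exponent.
\item A random change of basis cannot make the slice-rank profile ``generic''. Since $\TF^{f\circ A}_u=A^t\TF^f_{Au}A$, the multiset of ranks of the cubic slices is $\gl(n,\FF_2)$-invariant. Randomisation can only randomise how the variables are split.
\item The statement is worst case in $f$, with expectation only over the algorithm's coins. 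A degree bound that holds only for typical $f$ would not suffice anyway.
\end{itemize}
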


The counting version of the strong exponential time hypothesis (\#SETH) implies that for any $\epsilon >0$ there is a constant $d >2$ such that the gap of degree-$d$ Boolean polynomials cannot be computed in time $(2-\epsilon)^n$ \cite{williams_counting_2018} and the same statement holds true for any field $\FF_q$ \cite{dell_solving_2024}. 
While these results do not rule out a specific exponent for $d=3$, they rule out algorithms that are not sensitive to the polynomial degree. 
For these reasons, we believe that even achieving runtime $\tilde O(2^{n/2})$ without a space constraint would constitute a breakthrough.

Contrasting with the standard approach of evaluating the sum over $2^n$ phases, this asymptotically improved approach has the drawback that it needs to store explicit coefficient tables for $F_b$ in the middle of the computation. These will be exponentially large, and in fact for the algorithm of \cref{thm:upcoming gap theorem}, they have size $2^{0.552n + O(\sqrt n)}$, matching the time bound.

However, in practice, space is a more limited resource than time, in particular so for parallelizable algorithms such as the ones used for computing the gap. 
This motivates considering only algorithms that use polynomial space. 
To the best of our knowledge, the best such algorithm has runtime $O(n^2 2^n)$ compared to the na\"ive $O(n^32^n)$. 
This algorithm is based on using Gray codes to efficiently evaluate the gap. 
Moreover, there is no known algorithm that achieves high XEB values in the absence of noise that circumvents full-circuit simulation. We therefore conjecture that in fact achieving a high constant XEB score requires genuinely exponential time for some constant $\kappa > 1/2$. 

\begin{conjecture}[Exponential XEB (\Exp\xeb) conjecture]
\label{conj:exp xeb full}
  There exists a $\kappa > 1/2$ such that any classical polynomial-space algorithm which, given uniformly random $f_1, \ldots, f_L \leftarrow \mc P_3(n)$ solves $\xeb_{1,\ell}(f_1, \ldots, f_L)$ for $\ell, L = \poly(n)$ with at most inverse polynomial failure probability, requires time $\Omega(2^{\kappa n} \cdot \poly(n))$. 
\end{conjecture}

\paragraph{Simulating planted polynomials}

The runtime reduction for planted polynomials is possible due to an observation of \textcite{maslov_fast_2024}. 
\begin{definition}[Vertex cover]
  Let $H = (V,E)$ be a hypergraph. Then a \emph{vertex cover} is a set $C \subset V$ satisfying 
  $\forall e \in E \, \exists v \in e: v \in C$,
  i.e., every hyperedge overlaps with $C$. 
\end{definition}

The key observation is that given $f$, a vertex cover of the hypergraph $H(f)$ defined in \cref{def:poly hypergraph} yields a low-rank stabilizer decomposition of $a_f$. For notation, given a set $S \subset[n]$ and a bit string $x \in \FF_2^n$ write $x_S = (x_i)_{i \in S} \in \FF_2^{|S|}$ for the substring corresponding to the indices in $S$, and let $S^c = [n]\setminus S$. 

\begin{lemma}[Vertex cover decomposition of amplitudes]
\label{lem:vertex cover amplitudes}
  Let $f \in \FF_2[x_1, \ldots, x_n]$ be a cubic polynomial. Given a vertex cover $C \subset [n]$ of $H(f)$, we can write 
  \begin{align}
    a_f(x) = \frac 1 {2^{|C|}}\sum_{z \in \FF_2^{|C|}} \ngap(f^C_{z} + \langle x_{C^c}, \cdot \rangle ) (-1)^{\langle x_C, z\rangle}, 
  \end{align}
  for quadratic polynomials $f^C_z \in \FF_2[x_1, \ldots,x_{|C^c|}]$ specified in the proof. 
\end{lemma}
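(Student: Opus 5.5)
The plan is to split the Hadamard-basis sum defining $a_f(x)$ into the variables indexed by the cover $C$ and those indexed by its complement $C^c$. Once the cover variables are fixed, $f$ becomes quadratic in the remaining ones. Concretely, write $y=(y_C,y_{C^c})$ and start from
\begin{align}
  a_f(x) = \frac{1}{2^n}\sum_{y\in\FF_2^n}(-1)^{f(y)+\langle x,y\rangle}
  = \frac{1}{2^{|C|}}\sum_{z\in\FF_2^{|C|}}(-1)^{\langle x_C,z\rangle}\,\frac{1}{2^{|C^c|}}\sum_{w\in\FF_2^{|C^c|}}(-1)^{f(z,w)+\langle x_{C^c},w\rangle},
\end{align}
using $\langle x,y\rangle=\langle x_C,z\rangle+\langle x_{C^c},w\rangle$. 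Here $f(z,w)$ denotes $f$ evaluated with $y_C=z$ and $y_{C^c}=w$.

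Next I define $f^C_z(w)\coloneqq f(z,w)+f(z,0)$ and use the standard fact that $\ngap$ of a shifted function picks up a sign: $\ngap(g+c)=(-1)^c\ngap(g)$ for any constant $c$. The inner normalized sum is then exactly $(-1)^{f(z,0)}\ngap(f^C_z+\langle x_{C^c},\cdot\rangle)$. The constant sign $(-1)^{f(z,0)}$ must be kept, either absorbed into the definition of $f^C_z$ or written explicitly. I will simply define $f^C_z(w)\coloneqq f(z,w)$ with its constant term retained. The statement works with polynomials that may have a constant term ("unless otherwise stated"), and $\ngap$ is defined for arbitrary Boolean functions, so this matches the claimed formula verbatim.

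The substantive step is to show that $w\mapsto f(z,w)$ has degree at most $2$ for every fixed $z$. I will take the multilinear form \cref{eq:cubic polynomial normal form} and look at each cubic monomial $x_ix_jx_k$ with $F_{ijk}=1$. By the definition of $H(f)$ in \cref{def:poly hypergraph}, $\{i,j,k\}$ is a hyperedge, so by the vertex cover property at least one of $i,j,k$ lies in $C$. Substituting the constant $z$ for that variable lowers the degree in the $w$-variables to at most $2$. The quadratic and linear monomials become polynomials of degree $\le 2$ in $w$ plus constants. Since every variable substituted is set to a constant in $\FF_2$, multilinearity is preserved, so $f^C_z$ is a quadratic polynomial in $|C^c|$ variables, as the statement requires.

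I expect no real obstacle. The only points needing care are the bookkeeping of the constant term or sign $(-1)^{f(z,0)}$ and the normalization factors $2^{-|C|}$ and $2^{-|C^c|}$, which multiply to $2^{-n}$. I will also remark, as motivation rather than as part of the claim, that each $\ngap$ of a quadratic polynomial is computable in polynomial time~\cite{maslov_fast_2024}. That gives the $\tilde O(2^{|C|})$ polynomial-space simulation.
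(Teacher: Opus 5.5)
Your proposal is correct and follows the paper's proof essentially verbatim: split the sum over $y$ into cover and non-cover variables, note that fixing $y_C=z$ leaves $f^C_z(w)=f(z,w)$ of degree at most two (inhomogeneous, with its constant term retained, exactly as in the paper) because every cubic monomial contains a cover index, and recombine the normalizations $2^{-|C|}\cdot 2^{-|C^c|}=2^{-n}$. The only cosmetic difference is that the paper first treats the homogeneous case by explicitly grouping cubic monomials according to how many of their indices lie in $C$ and then adds the lower-degree terms, whereas you handle all terms at once by substitution.
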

\begin{proof}
For simplicity, let us assume that $f$ is homogeneous. 
Then we can write 
\begin{multline}
  f(x)= \sum_{j < k \in C^c} \left( \sum_{i \in C} (F_{ijk}  + F_{jik} + F_{jki} ) x_i  \right) x_j x_k + \sum_{k \in C^c} \left( \sum_{i< j \in C} (F_{ijk}  + F_{ikj} + F_{kij} )x_i x_j  \right) x_k  \\+ \left(\sum_{i< j < k \in C} F_{ijk}x_i x_j x_k\right)  \eqqcolon f^C_{x_C}(x_{C^c}),
\end{multline}
which defines an (inhomogeneous) quadratic polynomial over $\FF_2$, given an assignment of the cover variables $x_i, i \in C$. 
This means that we can write the outcome amplitudes as 
\begin{align}
  2^n a_f(x) & =  \sum_{y_C} (-1)^{\langle x_{C},y_{C}\rangle} \left(  \sum_{y_{C^c}} (-1)^{f^C_{y_C}(y_{C^c}) + \langle x_{C^c},y_{C^c}\rangle} \right) \\
  & = \sum_{z \in \FF_2^{|C|}} \gap(f^C_{z} + \langle x_{C^c}, \cdot \rangle ) (-1)^{\langle x_C, z\rangle}. 
\end{align}
The inhomogeneous case follows from the fact that the remaining terms are quadratic to begin with. 
\end{proof}
The vertex-cover decomposition directly implies an algorithm for strongly simulating---i.e., computing the amplitudes of---cubic IQP circuits. 
\begin{theorem}
\label{thm:vertex cover algorithm}
  Let $f$ be a cubic polynomial with $h$-dimensional independent space $V$. There is an algorithm that exactly computes $a_f(x)$ in time $O(2^{n-h} n^3)$ given $f$ and a basis of $V$. 
\end{theorem}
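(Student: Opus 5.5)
Since the basis of $V$ is known, the plan is to change coordinates so that $V$ becomes a coordinate subspace, read off a vertex cover of size $n-h$, and then apply \cref{lem:vertex cover amplitudes}. Concretely, we complete the given basis $v_1,\dots,v_h$ of $V$ to a basis of $\FF_2^n$ by Gaussian elimination in $O(n^3)$ time, and let $A\in\gl(n,\FF_2)$ have these vectors as columns, with $v_1,\dots,v_h$ first. Set $g=f\circ A$ and reduce it to its multilinear representative; this costs $\poly(n)$ time (at most $O(n^3)$ cubic monomials, each expanded under a linear substitution, for an $O(n^4)$–$O(n^6)$ preprocessing that is negligible against $2^{n-h}$ once $n-h=\Omega(\log n)$, and in any case polynomial). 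By the proposition characterizing independent spaces through $\TF^g$, together with part (2) of the composition proposition, we have $\TF^g(e_i,e_j,e_k)=\TF^f(Ae_i,Ae_j,Ae_k)=0$ for all $i,j,k\le h$. Hence $[h]$ is an independent set of $H(g)$ and $C=\{h+1,\dots,n\}$ is a vertex cover with $|C|=n-h$.

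Next we relate the amplitudes of $f$ and $g$. Substituting $y=Az$ in $2^n a_f(x)=\sum_y(-1)^{f(y)+\langle x,y\rangle}$ gives $\sum_z(-1)^{g(z)+\langle A^t x,z\rangle}$, so $a_f(x)=a_g(A^tx)$. This reduction is exact because $A$ is a bijection of $\FF_2^n$.

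We then evaluate $a_g(x')$ with $x'=A^tx$ using \cref{lem:vertex cover amplitudes}. We loop over $z\in\FF_2^{n-h}$ in Gray-code order. For each $z$ we form the quadratic polynomial $g^C_z+\langle x'_{C^c},\cdot\rangle$ in $h$ variables and compute its gap exactly in $O(h^3)$ time, using the standard polynomial-time formula for gaps of quadratic forms over $\FF_2$: bring the alternating matrix to symplectic normal form by elimination, check consistency of the linear part on the radical, and read off $\pm 2^{h-r/2}$ or $0$, where $r$ is the rank. Summing the results with signs $(-1)^{\langle x'_C,z\rangle}$ costs $O(2^{n-h}n^3)$ in total, and everything is exact integer arithmetic with polynomial space.

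The one step that needs care is verifying that each $g^C_z$ can be obtained in $O(n^3)$ time (or less, incrementally) and that the quadratic gap routine is exact and cubic. Its coefficients are linear and quadratic in $z$, as in the proof of \cref{lem:vertex cover amplitudes}; recomputing them directly from the coefficient list costs $O(n^3)$ per $z$, which already meets the bound. The quadratic gap routine is the point to check, in particular the handling of the linear term over the radical in characteristic $2$, where the diagonal of the associated matrix vanishes, so the diagonal is tracked through the linear coefficients. I would cite the known Dickson-form algorithm (as used in \cite{maslov_fast_2024}) for this step rather than reprove it.
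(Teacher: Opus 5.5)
Your proposal is correct and follows the paper's proof essentially step for step. You change basis by $A$ so that $V$ becomes $\langle e_1,\dots,e_h\rangle$, use $a_f(x)=a_{f\circ A}(A^tx)$, take $\{h+1,\dots,n\}$ as a vertex cover of $H(f\circ A)$, and apply \cref{lem:vertex cover amplitudes} with an exact polynomial-time gap routine for quadratic polynomials on each of the $2^{n-h}$ branches. Your remark that computing $f\circ A$ costs $\poly(n)$, which is dominated by $2^{n-h}n^3$ only when $n-h=\Omega(\log n)$, is a point the paper's proof passes over silently, so here you are slightly more careful than the original.
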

\begin{proof}
To start, let $V = \langle e_1, \ldots, e_{h} \rangle $ be the subspace spanned by the first $h$ coordinates. Then $C = [h+1, \ldots n]$ is a vertex cover. 
The proof follows directly from applying the polynomial-time algorithm for computing the gap of quadratic polynomials \cite{ehrenfeucht_computational_1990} to the decomposition of  \cref{lem:vertex cover amplitudes}. 

For an arbitrary subspace, let $A \in \gl(n, \FF_2)$ be the change of basis that maps the first $h$ coordinates to $V$, i.e., $V = A(\langle e_1, \ldots, e_h \rangle )$. Then $a_f(x) = a_{f\circ A}(A^Tx)$ and $f \circ A$ has vertex cover $[h+1, \ldots, n]$, so we can apply the algorithm above. 
\end{proof}

\begin{remark}[Quantum estimation of $\gap$]
The gap of cubic polynomials can be estimated on a quantum computer in time $O(2^{n/2})$ for typical instances. To see this, let $\ket {\psi_f} = H^{\otimes n } C(f) H^{\otimes n} \ket{0^n} $ be the output state of the IQP circuit. 
Now, observe that applying phase estimation on the Grover operator $(\id - 2\proj 0)(\id - 2 \proj{\psi_f})$ can estimate the angle $\theta = 2 \arcsin(\ngap(f))$ up to precision $\epsilon$ in time $O(1/\epsilon)$  \cite{brassard_quantum_2002}. But  $\ngap(f) \sim 2^{-n/2}$ for typical $f$ which shows the claim. 
\end{remark}

\paragraph{Proof of main result}
We are now ready to prove our main result (\cref{thm:main result}), which we here state formally. 
\begin{theorem}
  [Verifiable advantage from planted independent spaces]
\label{thm:main result full} 
  Assume the \Exp\PIS\ and \Exp\xeb\ conjectures (\ref{conj:PIS} and \ref{conj:exp xeb full}). Let $h = (1-c)n$ with $1/2 < c < \kappa \le 1 $ and $\ell, L = \poly(n)$. 
  The verifier chooses $f_1, \ldots, f_L \leftarrow \mc P_3(n,h)$ independently and asks the prover to solve $\xeb_{1, \ell}(f_1, \ldots, f_L)$. Then, 
  \begin{enumerate}[label=(\roman*)]
    \item given samples $\{x_i^{f_j}\}_{i \in [\ell], j \in [L]}$ from the prover, using knowledge of the planted independent spaces, the verifier can compute $\hat{\overline \chi}(\{x_i^{f_j}\}_{i \in [\ell], j \in [L]})$ in time $O(L\ell  n^3 2^{cn})$ and polynomial space, 
    \item an honest, polynomial-time quantum prover passes $\xeb_{1, \ell}(f_1, \ldots, f_L)$ test with probability at least $1- 10/L$, 
    \item there is no classical polynomial-space algorithm with runtime $o(2^{\kappa n})$ that passes the XEB test with failure probability $1/\poly(n)$. 
  \end{enumerate}
\end{theorem}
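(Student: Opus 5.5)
We will prove the three items separately, each by combining one earlier result with the structure of the planted distribution $\mc P_3(n,h)$.

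For (i), the verifier knows, for each $f_j$, the secret matrix $A_j$ (equivalently a basis of the planted space $V_j = A_j^{-1}U$) with $\dim V_j = h=(1-c)n$. By \cref{thm:vertex cover algorithm}, each amplitude $a_{f_j}(x)$ can be computed exactly in time $O(2^{n-h}n^3)=O(2^{cn}n^3)$. The algorithm enumerates the $2^{cn}$ assignments $z$ of the cover variables in \cref{lem:vertex cover amplitudes}. For each $z$ it computes the gap of a quadratic polynomial in polynomial time and space, and it accumulates a running sum, so only polynomial space is used. The verifier squares each amplitude to get $p_{f_j}(x_i^{f_j})$. Since $\hat{\overline\chi}$ is an average of $\ell L$ such probabilities, the total cost is $O(L\ell n^3 2^{cn})$.

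For (ii), we check that $\mc P_3(n,h)$ is an instance of the distribution $\mc M(n)$ from \cref{lem:moments iqp}. As argued in the description of the planting procedure, the lower-degree part $g^{(2)}+g^{(1)}$ is uniform and independent of $g^{(3)}$. Hence $g = g^{(3)} + (\text{uniform element of }\mc P_2(n))$, with $g^{(3)}$ drawn from some distribution $\mc D$ over cubic forms. This gives $m_2 = 3+o(1)$ and $m_3=15+o(1)$. The invariance under adding linear terms, used in \cref{lem:completeness xeb}, holds because the linear part is uniform and independent. The honest prover runs $C(f_j)$ on $\ket{+^n}$ and measures in the $X$ basis, which takes polynomial time. \Cref{lem:completeness xeb} with $\ell=1$ then gives acceptance with probability at least $1-10/L$. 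For general $\ell$ we would either use just one sample per circuit, or rerun the same Chebyshev bound. The variance only decreases when we average more samples per instance, since the $\ell$ samples are conditionally i.i.d. given $f$.

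For (iii), we argue by contradiction, and this hybrid step is the main obstacle. Suppose a polynomial-space classical algorithm $\mc B$ runs in time $o(2^{\kappa n})$ and solves $\xeb_{1,\ell}$ on $L$ independent instances from $\mc P_3(n,h)$ with failure probability $1/\poly(n)$. By \Exp\xeb\ (\cref{conj:exp xeb full}), $\mc B$ must fail with more than inverse-polynomial probability on instances from $\mc P_3(n)$, since otherwise it would violate the $\Omega(2^{\kappa n})$ lower bound. We then build a distinguisher $\mc A$ for a single instance $g$. It places $g$ at a random position $j^*$ and fills the other positions with fresh samples: planted ones for $j<j^*$ and uniform ones for $j>j^*$. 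It runs $\mc B$ and outputs $1$ iff the resulting XEB estimate is at least $1$. A standard hybrid argument then gives a distinguishing advantage of at least $1/(L\cdot\poly(n))$, which is non-negligible. The subtlety is that $\mc A$ must evaluate the acceptance predicate itself. For the planted hybrids it sampled itself, it knows the secrets and can use (i) in time $\tilde O(2^{cn})$. For the uniform hybrids and for the challenge $g$, it has no secret, so it must compute the probabilities by brute force in time $O(n^2 2^n)$ with polynomial space via Gray codes. To stay within the $o(2^n)$ budget of \cref{conj:PIS}, we would avoid computing the full XEB. Instead, $\mc A$ would run $\mc B$ and use only $\poly(n)$ Gray-code evaluations of a few sampled amplitudes, for instance the contribution of a single randomly chosen instance, and apply a Markov/averaging argument on the per-instance contributions. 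If this brute-force cost is unavoidable, we would settle for a distinguisher of time $O(\poly(n)\,2^{\kappa n}) + \poly(n)\cdot O(2^{cn})$, and work with the XEB contributions directly. In that version the hypothesis $c<\kappa\le 1$ keeps everything $o(2^n)$ apart from the challenge term, which we handle by the averaging trick. Once this is done, the hybrid gives a non-negligible distinguishing advantage in time $o(2^n)$, contradicting \cref{conj:PIS} since $h=(1-c)n\le(1/2-\epsilon)n$ for $\epsilon = c-1/2>0$.
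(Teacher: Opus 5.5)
\textbf{Items (i) and (ii) are fine, but (iii) has a genuine gap.} In (i) and (ii), your check that $\mc P_3(n,h)$ is an instance of $\mc M(n)$ and your remark on $\ell>1$ make explicit what the paper leaves implicit.

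The gap in (iii) is that you never show your distinguisher runs in time $o(2^n)$, and that is exactly what \cref{conj:PIS} requires. To evaluate the acceptance predicate you need the probabilities for the challenge $g$ and for the uniform hybrid instances. None of these come with a secret, and you propose computing them by Gray-code enumeration. A single such evaluation already costs $\Theta(n^2 2^n)$, which is not $o(2^n)$. So restricting to ``$\poly(n)$ Gray-code evaluations'' or to ``a single randomly chosen instance'' does not help. The averaging trick meant to dispose of the challenge term in your fallback is never specified. It is also doubtful it could be: the guarantee on $\mc B$ concerns the event that the total XEB exceeds $1$, and Markov-type bounds on expected per-instance contributions do not control that event. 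Separately, the fallback bound $O(\poly(n)\,2^{\kappa n})$ is not $o(2^n)$ when $\kappa=1$; keep $\mc B$'s runtime as $o(2^{\kappa n})$.

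The missing observation is that the space restriction applies only to the XEB spoofer $\mc B$, not to the distinguisher. \Cref{conj:PIS} rules out every $o(2^n)$-time distinguisher, with no space bound. The distinguisher can therefore compute every required probability with the exponential-space algorithm of \cref{thm:asymptotic complexity gap}, in time $\tilde O(2^{0.9965n})$ each. The total time is then $o(2^{\kappa n}) + L\ell\cdot\tilde O(2^{0.9965 n}) = o(2^n)$. This is precisely the paper's step ``in either case, the XEB can be computed in time $O(2^{0.9965n})$.'' With that replacement your argument goes through. Your explicit hybrid over the $L$ positions is more careful than the paper's version, which runs the time-truncated spoofer on an $L$-tuple and appeals directly to the single-instance \cref{conj:PIS}.
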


\begin{proof}
(i) follows from \cref{thm:vertex cover algorithm} and (ii) follows from \cref{lem:completeness xeb}. To show (iii), consider the following hybrid conjecture. 
\begin{conjecture}
\label{conj:hybrid}
  There exists a $1/2 < \kappa \le 1$ such that any classical polynomial-space algorithm which, given random planted polynomials $f_1, \ldots, f_L \leftarrow \mc P_3(n,h)$ solves $\xeb_{1,\ell}(f_1, \ldots, f_L)$ for $\ell, L = \poly(n)$ with at most inverse polynomial failure probability, requires time $\Omega(2^{\kappa n} \cdot \poly(n))$. 
\end{conjecture}
\cref{conj:hybrid} directly implies (iii). We now argue that it is implied by \cref{conj:exp xeb full} and \ref{conj:PIS}. 
To this end, suppose \cref{conj:hybrid} was false. Then there exists an algorithm $\mc A$ with runtime $o(2^{\kappa n})$ using polynomial space which passes $\xeb_{1,\ell}$ with inverse polynomial failure probability for planted polynomials. 
We can use $\mc A$ as a distinguisher between $\mc P_3(n)$  and $\mc P_3(n,h)$ by running $\mc A$ on random  $f_1, \ldots, f_L \leftarrow  \mc P_3(n,h)$ and outputting \texttt{planted} if it terminates in time $o(2^{\kappa n})$ and the returned samples pass $\xeb_{1, \ell}$. 
On the other hand, \cref{conj:exp xeb full} implies that for uniformly random polynomials, any polynomial-space algorithm passing $\xeb_{1, \ell}$ requires runtime $\Omega(2^{\kappa n})$. 
In either case, the XEB can be computed in time $O(2^{0.9965 n})$ by \cref{thm:asymptotic complexity gap}. 
Therefore $\mc A$ yields a distinguisher with overall runtime $o(2^{n})$, contradicting \cref{conj:PIS}. 
\end{proof}

\subsection{Towards certified randomness from planted IQP circuits}
\label{sec:certified randomness}

In this section, we sketch the application of IQP circuit sampling with planted independent spaces to generate classically verifiable randomness. 
The idea is to apply the framework of \textcite{aaronson_certified_2023} for verifiable randomness from random circuit sampling to our setting of planted IQP circuits.
This is possible since as shown in \cref{lem:min-entropy}, the output distributions of random IQP circuits---like random circuits---have high min-entropy with high probability. 
But in contrast to random circuits, there are instance sizes of planted IQP circuits that are verifiable but not classically simulatable. 
This overcomes the key obstacles present in the protocol by \textcite{aaronson_certified_2023,liu_certified_2025}: Their protocol requires that a quantum server quickly respond to queries in order to prevent it from spending classical resources on spoofing XEB, while the classical client is allowed to spend a long time verifying the obtained samples post-hoc. 
In contrast, when basing the certifiable randomness protocol on planted IQP circuits, 
we showed that \Exp\xeb\ and \Exp\PIS\ imply a gap between verification and simulation time, allowing us to certify samples more efficiently. This means that the timing requirement can at least be relaxed, if not removed from the protocol since spoofing  the protocol using a classical computer will be out of reach for any available resources.

Crucially, however, certifiable randomness requires stronger guarantees on the hardness of the solved task than quantum advantage, since we must ensure that no  \emph{efficient quantum adversary} can pass the protocol using samples with low entropy. 
We conjecture that this is true. 
While a na\"ive application of the \cite{aaronson_certified_2023} protocol concerns polynomial-time quantum provers, we discuss stronger quantum hardness conjectures and attacks with a characteristic exponent $1/2$. 
We defer a detailed analysis of the implications of $\tilde O(2^{n-h})$-time classical and $O(2^{n/2})$-time quantum spoofing algorithms for the security of the protocol to future work.

\subsubsection{The certified randomness protocol}

We consider the protocol of \textcite{aaronson_certified_2023} for random cubic IQP circuits with planted independent spaces. 
We showed in \cref{lem:min-entropy} that both planted cubic IQP circuits and uniformly random cubic IQP circuits generate $2n/3 - O(\log 1/\delta)$ and $n - O(\sqrt n)$ bits of min entropy with high and overwhelming probability, respectively.
The idea of Aaronson and Hung is that---under their hardness assumption---any efficient quantum adversary that passes \xeb\ (for random circuits, in their case) with high probability must have generated samples with a linear amount of average entropy. 

For the purpose of this work, we restrict ourselves to their 
single-round entropy analysis, without adversarial side information.
Their reduction turns a sufficiently successful low-entropy prover into a protocol for the following task.

\begin{task}[Long List Quantum Supremacy Verification $\llqsv(\mc U)$ \cite{aaronson_certified_2023}]
  We are given oracle access to $M=2^{3n}$ quantum circuits $C_1,\ldots,C_M$, each on $n$ qubits, which are promised to be drawn independently from the distribution $\mc U$. We're also given oracle access to $M$ strings $s_1,\ldots,s_M\in\{0,1\}^n$. Then the task is to distinguish the following two cases:
  \begin{enumerate}[label=(\arabic*)]
    \item \textbf{No-Case}: Each $s_i$ is sampled independently, uniformly from $\{0,1\}^n$. 
    \item \textbf{Yes-Case}: Each $s_i$ is sampled independently from $p_{C_i}$, the output distribution of $C_i$.
  \end{enumerate}
\end{task}
They make the following conjecture. \begin{conjecture}[$\llha_B(\mc U)$]
\label{conj:llha}
  For some parameter $B<n$:
  \begin{align}
    \llqsv(\mc U)\notin\qcam\Time(2^Bn^{O(1)})/\textsf{q}(2^Bn^{O(1)}),
  \end{align}
  where \qcam\ is the class of problems that admit an \am\ protocol with classical communication and a quantum verifier, and $\qcam\Time(T)$ is a generalization where the verifier can use running time $T$ but communication is still polynomial in $n$. Finally, in $\qcam\Time(T)/\textsf{q}(A)$ we additionally give the verifier $A$ qubits of advice that depend only on $n$.
\end{conjecture}

Based on this conjecture, they show the following theorem.
\begin{theorem}[{\cite[Theorem~5.10]{aaronson_certified_2023}}]
\label{thm:single-round-entropy}
Let $\mc U$ be an ensemble of $n$-qubit circuits satisfying
$\llha_B(\mc U)$, and let $b>0$ be fixed.
Let $\mc A$ be a quantum polynomial-time algorithm that solves $\xeb_{b, \ell}(C)$ for $C \leftarrow \mc U$ with probability at least $q$ over the choice of circuit and $\mc A$'s internal randomness. Then $\mc A$'s output $s_1, \ldots, s_\ell$ satisfies
\[
  H(s_1, \ldots, s_\ell \mid C)\ge
  \frac B2\left(\frac{(b+1)q-1}{b}-o(1)\right),
\]
where $H$ denotes the average Shannon entropy $H(\, \cdot\, | C ) = \Eb_{c \leftarrow \mc U}[H(\, \cdot \,| C = c )]$.
\end{theorem}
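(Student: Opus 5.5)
The plan is to follow the Aaronson--Hung reduction: turn any quantum polynomial-time prover whose output has low conditional entropy into a \qcam\ protocol for $\llqsv(\mc U)$ with running time $2^B n^{O(1)}$ and $2^B n^{O(1)}$ qubits of advice. That would contradict \cref{conj:llha}. Everything is contrapositive. Suppose $H(s_1,\ldots,s_\ell\mid C) < \frac B2\bigl(\frac{(b+1)q-1}{b}-o(1)\bigr)$. We will show that $\mc A$ then yields a distinguisher for the yes- and no-cases of $\llqsv$ with noticeable advantage inside the stated resource bound.

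\textbf{Step 1: from low entropy to concentrated outputs.} First relate the XEB condition to a score on the list. Write $\hat\chi = 2^n\frac1\ell\sum_i p_C(s_i)-1$. The prover succeeds, meaning $\hat\chi\ge b$, with probability at least $q$. Since $\hat\chi \ge -1$, the expected score $\Eb[2^n p_C(s)]$ of a uniformly chosen output string is at least $(b+1)q$. This is where the quantity $\frac{(b+1)q-1}{b}$ comes from: it is the excess score over the uniform baseline of $1$, normalized by $b$. By Markov's inequality (or a Shannon-to-min-entropy conversion in the style of a typical-set argument), low average Shannon entropy implies the following. For a constant fraction of circuits $C$, there is a set $S_C$ of at most about $2^{B}$ strings that carries most of $\mc A$'s output mass, roughly the fraction $\frac{(b+1)q-1}{b}$ of the excess. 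The factor $B/2$ arises because the entropy bound splits between $\ell$ samples and the two-sided conditioning. I will choose the thresholds so that the excess score is concentrated on $S_C$.

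\textbf{Step 2: the \qcam\ protocol.} Given the oracles $(C_i,s_i)_{i\le M}$, the verifier uses public randomness to choose a random index $i$. It runs $\mc A$ on $C_i$ about $2^B n^{O(1)}$ times, using the quantum advice to prepare any nonuniform state $\mc A$ needs, and thereby collects the heavy set $S_{C_i}$.

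The verifier then tests whether $s_i\in S_{C_i}$.
\begin{itemize}
\item In the no-case, $s_i$ is uniform, so $\Pr[s_i\in S_{C_i}]\le 2^{B-n}$.
\item In the yes-case, $s_i\sim p_{C_i}$, and $\Pr[s_i\in S_{C_i}] = \sum_{x\in S_{C_i}}p_{C_i}(x)$. This quantity is exactly the large XEB mass guaranteed by Step 1, so it is at least about $2^{B-n}(1+\Omega(\frac{(b+1)q-1}{b}))$.
\end{itemize}

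To amplify this small multiplicative gap, the verifier repeats over many random indices $i$ from the exponentially long list. It then uses a Goldwasser--Sipser set-lower-bound \am\ subprotocol with classical messages to certify that the number of indices $i$ with $s_i\in S_{C_i}$ is large. That subprotocol is why the class is \qcam\ rather than plain \bqp.

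\textbf{Step 3: accounting, and the main obstacle.} The remaining work is to check that the running time is $2^B n^{O(1)}$ and that the communication stays polynomial. The main obstacle is Step 1. The argument has to get from an average Shannon entropy bound, with the $o(1)$ slack, to a heavy set of size $2^B$ that captures enough of the XEB excess, and it has to do so uniformly over circuits. I would handle this by partitioning the circuits according to their conditional entropy. Markov's inequality then shows that a $\frac{(b+1)q-1}{b}$ fraction of circuits have entropy below $B$. On that fraction I would use the bound $\Pr[x]\ge 2^{-B-t}$ for all but a $1/t$ fraction of the mass. Since the statement is quoted from \cite[Theorem~5.10]{aaronson_certified_2023}, the final proof can simply invoke it, once we check that the planted ensemble $\mc P_3(n,h)$ satisfies the hypotheses: it is efficiently samplable, and its moments are the same as in \cref{lem:moments iqp}.
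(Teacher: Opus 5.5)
You should know that the paper gives no proof of this statement. It imports \cite[Theorem~5.10]{aaronson_certified_2023} as a black box, so your closing remark (just invoke the citation) is exactly what the paper does. One correction to that remark: the only hypothesis is $\llha_B(\mc U)$ itself, so no samplability or moment condition needs checking; \cref{cor:planted-randomness} simply assumes $\llha_B(\mc P_3(n,h))$.

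Read as a proof, however, your Steps 1--3 have a genuine gap in the yes-case bullet. The XEB guarantee controls the \emph{weighted} quantity $\sum_x \pi_C(x)p_C(x)$, where $\pi_C$ is $\mc A$'s output distribution. Your test instead compares the \emph{unweighted} mass $p_C(S_C)$ with $|S_C|2^{-n}$, and $p_C(S_C)$ is not ``the XEB mass''.

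Here is a concrete failure. Take $b=1$, $\ell=1$, and a hypothetical prover that with probability $7/8$ outputs a single string $x_C$ with $2^np_C(x_C)=2$, and otherwise outputs a uniform element of a set $T_C$ of $2^B$ typical strings (total $p_C$-mass $2^{B-n}$). Then $q\ge 7/8$, so the claimed bound is at least $3B/8$, while the entropy is only $B/8+O(1)$. Yet running $\mc A$ $2^Bn^{O(1)}$ times collects $S_C=\{x_C\}\cup T_C$, and $p_C(S_C)/(|S_C|2^{-n}) = 1+O(2^{-B})$. Your distinguisher therefore has negligible advantage, and no contradiction with $\llha_B$ arises. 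The statistic that does separate the cases is the weighted one, $\pi_{C_i}(s_i)$; in this example its means are $\tfrac{15}{8}2^{-n}$ versus $2^{-n}$. It should be restricted to heavy outputs so that it is either $0$ or bounded below. The Goldwasser--Sipser count must then be run on (a discretization of) this weighted quantity, not on a membership indicator.

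Your account of the constants is also off. The quantity $\frac{(b+1)q-1}{b}$ is not a fraction of circuits with entropy below $B$, and $B/2$ does not come from splitting the bound between samples. The shape of the bound indicates the right decomposition:
\begin{itemize}
\item Call an output tuple $\vec s=(s_1,\ldots,s_\ell)$ heavy if $\Pr[\mc A(C)=\vec s\mid C]\ge 2^{-B/2}$, and light otherwise. This is a threshold at which heaviness can be tested within the $2^Bn^{O(1)}$ budget.
\item Apply the weighted reduction to $x\mapsto\frac1\ell\sum_j\Pr[\vec s\ \text{heavy},\,s_j=x]$. Its yes-case mean is $2^{-n}\Eb[(1+\hat\chi)\mathbf 1_{\mathrm{heavy}}]$ and its no-case mean is $2^{-n}\Pr[\mathrm{heavy}]$. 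So $\llha_B$ gives $\Eb[(1+\hat\chi)\mathbf 1_{\mathrm{heavy}}]\le\Pr[\mathrm{heavy}]+o(1)$.
\item Since $1+\hat\chi\ge0$, Markov's inequality yields $\Pr[\mathrm{heavy}\wedge\mathrm{pass}]\le(\Pr[\mathrm{heavy}]+o(1))/(1+b)$.
\item Hence $q\le 1-\frac{b}{1+b}\Pr[\mathrm{heavy}]+o(1)$, that is, $\Pr[\mathrm{light}]\ge\frac{(b+1)q-1}{b}-o(1)$.
\item Every light tuple has surprisal at least $B/2$, so $H(s_1,\ldots,s_\ell\mid C)\ge\frac B2\Pr[\mathrm{light}]$. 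This is exactly the stated bound.
\end{itemize}
No Shannon-to-min-entropy conversion or partition of circuits is needed. The real work is the weighted $\llqsv$ reduction, which your proposal does not supply.
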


The key idea of the reduction is that there is an \am\ protocol which uses an adversary who passes \xeb\ with too little average entropy in order to solve \llqsv. 
Intuitively, the assumption (\cref{conj:llha}) is plausible because the output distributions of the considered circuit ensemble have very high min entropy.

\subsubsection{Application to cubic IQP circuits}

We can straightforwardly apply the single-round result of \textcite{aaronson_certified_2023} to the uniformly random cubic IQP ensemble under $\llha_B(\mc P_3(n))$. 
For that ensemble, the verifier has no planted-space simulation secret and the protocol is completely analogous to the random-circuit one, where verification can be performed after receiving a fast response. 
We want to apply the certified randomness argument to instances with planted independent spaces that ease verification. 
To this end, we assume $\llha_B(\mc P_3(n,h))$ with $B=\Theta(n)$ and apply Aaronson and Hung's reduction to this planted ensemble.\footnote{Here, in some abuse of notation, we denote the distribution over circuits $H^{\otimes n } C(f) H^{\otimes n}$ for $f \leftarrow \mc P_3(n,h)$ by $\mc P_3(n,h)$ as well.} 
This will rule out quantum polynomial-time algorithms that pass $\xeb_{b,\ell}$ without producing high-entropy samples. A simple result we can directly infer is an entropy bound obtained against polynomial-time quantum provers. 

\begin{corollary}[Entropy bound for planted IQP circuits]
\label{cor:planted-randomness}
Assume $\llha_B(\mc P_3(n,h))$ with $B=\Theta(n)$.
Let $\mc Q$ be a polynomial-time quantum prover that, given
$f\leftarrow\mc P_3(n,h)$, passes the single-circuit test
$\xeb_{1,\ell}(f)$ with probability at least $1-\varepsilon$.
Then $\mc Q$'s output $s_1,\ldots,s_\ell$ satisfies
\[
  H(s_1, \ldots, s_\ell \mid f)\ge
  \frac B2\left(1-2\varepsilon-o(1)\right).
\]
\end{corollary}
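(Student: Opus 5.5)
This should follow by specializing \cref{thm:single-round-entropy} to the planted ensemble $\mc U = \mc P_3(n,h)$ (circuits $H^{\otimes n}C(f)H^{\otimes n}$) with $b=1$. The only real work is checking the hypotheses and simplifying the resulting bound.

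First, the hypotheses. $\llha_B(\mc P_3(n,h))$ is assumed directly. The prover $\mc Q$ runs in quantum polynomial time and solves $\xeb_{1,\ell}(f)$ with probability $q \ge 1-\varepsilon$ over $f\leftarrow\mc P_3(n,h)$ and its internal randomness. We should note that \cref{thm:single-round-entropy} is stated for the single-circuit task $\xeb_{b,\ell}(C)$, which is exactly the test in the corollary, so no averaging over $L$ circuits is needed. Its conclusion is an average conditional entropy over $C\leftarrow\mc U$. Since $C$ and $f$ determine each other, $H(s_1,\ldots,s_\ell\mid C)=H(s_1,\ldots,s_\ell\mid f)$.

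Second, the arithmetic. With $b=1$ the theorem gives
\[
H(s_1,\ldots,s_\ell\mid f)\ \ge\ \frac B2\left(\frac{2q-1}{1}-o(1)\right).
\]
The right-hand side is monotone increasing in $q$, so we may substitute $q\ge 1-\varepsilon$. This yields $2q-1\ge 1-2\varepsilon$, which is the claimed bound $\frac B2(1-2\varepsilon-o(1))$. The $o(1)$ term is inherited unchanged, since $b$ is a fixed constant and $B=\Theta(n)$.

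The main point to check is that the Aaronson--Hung reduction places no requirement on $\mc U$ beyond $\llha_B(\mc U)$, for example Haar-like or anticoncentration properties used inside their proof. I would revisit their Theorem~5.10 to confirm that the dependence on the ensemble enters only through the $\llha$ assumption. I would also confirm that any needed anticoncentration-type facts hold for the planted ensemble. These follow from \cref{lem:moments iqp}, since $\mc P_3(n,h)$ is an instance of $\mc M(n)$: its lower-degree part is uniform and independent of the cubic part, as established in \cref{sub:statement}. Granting this, the corollary is a direct instantiation.
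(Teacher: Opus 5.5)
Your proposal is correct and is the paper's argument: instantiate \cref{thm:single-round-entropy} for the planted ensemble with $b=1$ and $q\ge 1-\varepsilon$, so that $\frac{(b+1)q-1}{b}=2q-1\ge 1-2\varepsilon$. Your extra checks go beyond what the paper spells out, namely identifying conditioning on $C$ with conditioning on $f$, monotonicity in $q$, and whether the ensemble needs further properties. The paper needs none of these, since the cited theorem's only hypothesis on the ensemble is $\llha_B(\mc U)$.
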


\begin{proof}
Apply \cref{thm:single-round-entropy} to the planted IQP
circuit ensemble with $b=1$ and $q\ge1-\varepsilon$. 
\end{proof}
For a full security analysis of our protocol, we also want to rule out classical polynomial-space provers with a time budget comparable to the verifier's $2^{n-h}\poly(n)$.
This requires a separate classical hardness assumption for the same planted ensemble, since  the polynomial-time restriction in \cref{thm:single-round-entropy} does not cover exponential-time classical provers.
Specifically, for $h=\lfloor\alpha n\rfloor$ with fixed $0<\alpha<1/2$, assume there exist constants $\kappa>1-\alpha$ and $\varepsilon_{\rm cl}\in(0,1/4)$ such that every randomized classical polynomial-space algorithm of runtime $o(2^{\kappa n})$ passes this test with probability at most $1-\varepsilon_{\rm cl}$ for all sufficiently large $n$.
Such algorithms cannot attain the acceptance probability in \cref{cor:planted-randomness} for fixed $\varepsilon<\varepsilon_{\rm cl}$.
Deriving this assumption from \Exp\xeb\ and \Exp\PIS\ requires a reduction that preserves the test's success parameters and fits the assumed time and space bounds.
We leave these details of a complete protocol with the appropriate security guarantees for future work.

\section{Worst-case hardness of the independent space problem}
\label{sec:worst-case hardness}

For the moment, we are working over arbitrary fields $\FF$ in this section. In this section, we will prove that it is $\NP$-hard in the worst-case to decide whether a given cubic form has an independent subspace of a given size. This is done by showing that a cubic form has an independent subspace of large size iff its symmetric slice rank is low. We then show that it is $\NP$-hard to decide whether the symmetric slice rank of a cubic form is bounded. \textcite{brand2026partitionrankalgebraiccircuit} have recently shown the same hardness result for symmetric slice rank with a different proof, independently of our work. We still keep our proof here, since our main result is that for a trilinear form whose support is an antichain, slice rank as a trilinear form and symmetric slice rank as a cubic form coincide (Corollary~\ref{cor:symSR-SR}). This is a structural insight of independent interest.

\subsection{Slice rank (trilinear case)}

Let $x$, $y$, and $z$ be three sets of variables.
\begin{definition}[\textcite{taoblogSR}]
Let $P \in \FF[x,y,z]$ be a trilinear form.
\begin{enumerate}
    \item A slice decomposition is a collection of linear forms $\ell_{x,i}(x)$ and bilinear forms $B_{x,i}(y,z)$, $1 \le i \le r_x$,  linear forms $\ell_{y,j}(y)$ and bilinear forms $B_{y,j}(x,z)$, $1 \le j \le r_y$, as well as  linear forms $\ell_{z,k}(z)$ and bilinear forms $B_{z,k}(x,y)$, $1 \le k \le r_z$, such that
    \[
       P(x,y,z) = \sum_{i = 1}^{r_x} \ell_{x,i}(x) B_{x,i}(y,z) + \sum_{j = 1}^{r_y} \ell_{y,j}(y) B_{y,j}(x,z) + \sum_{k = 1}^{r_z} \ell_{z,k}(z) B_{z,k}(x,y).
    \]
    \item The slice rank $\SR(P)$ is the minimum $r$ such that there is a slice decomposition as above with $r_x + r_y + r_z \le r$.
\end{enumerate}
\end{definition}

\begin{theorem}[\textcite{BlaserILPS21}]
    Deciding whether the slice rank of a given trilinear form is smaller than a given $r$ is $\np$-hard.
\end{theorem}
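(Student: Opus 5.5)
The plan is a reduction from a hypergraph vertex cover problem to slice rank. The bridge is the following fact: for \emph{support-structured} trilinear forms, slice rank equals the minimum vertex cover of the support. Given a $3$-partite $3$-uniform hypergraph $H$ with parts $X,Y,Z$ and hyperedge set $E\subseteq X\times Y\times Z$, I associate the trilinear form
\[
P_H(x,y,z)=\sum_{(i,j,k)\in E} x_i y_j z_k .
\]
The upper bound $\SR(P_H)\le \tau(H)$, where $\tau(H)$ is the minimum vertex cover size, is immediate. Given a cover $C=C_X\cup C_Y\cup C_Z$, I assign each hyperedge to one covering vertex and group terms: for $i\in C_X$ the assigned terms form $x_i B_{x,i}(y,z)$, and similarly for $C_Y$ and $C_Z$. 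This gives a slice decomposition with $|C|$ summands.

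For the matching lower bound I would invoke the Sawin--Tao lemma: if there are total orders on $X$, $Y$, $Z$ such that $E$ is an antichain in the product order, then $\SR(P_H)=\tau(H)$. The lemma's proof goes via a linear-algebra/dimension argument on the coordinate slices and works over any field. Hence, for antichain hypergraphs, ``$\SR(P_H)<r$'' is equivalent to ``$\tau(H)<r$'', and the reduction is polynomial time.

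What remains, and what I expect to be the main obstacle, is showing that vertex cover stays $\np$-hard when restricted to $3$-partite hypergraphs whose edge set is an antichain for suitable orders. My first attempt is a padding construction starting from an $\np$-hard cover problem on a bipartite-type structure. Give every original edge $(u,v)$ a private third coordinate $z_e$. Order $Z$ reverse-lexicographically in the pair $(u,v)$. Then, if $e$ precedes $e'$ in $Z$, the pair of $e$ is lexicographically larger than that of $e'$, so $e\le e'$ coordinatewise is impossible for distinct edges. The resulting hyperedge set is therefore automatically an antichain.

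The price of the padding is that each private $z_e$ can cover its own edge at cost $1$. The cover problem thus becomes: minimize, over $S\subseteq X\sqcup Y$, the quantity $|S|+\#\{\text{edges not hit by } S\}$. I would then have to show this ``partial cover with unit penalty'' problem is $\np$-hard for the instances produced. One option is to replace each original edge by many parallel copies, each with its own private $z$-vertex. Paying for the $z$-vertices then becomes too expensive, which forces an honest cover of $X\sqcup Y$. The difficulty is that the $X\sqcup Y$ part is bipartite, where cover is easy, so the source instance must carry a genuine third dimension. I would therefore apply the same trick one level up, to an $\np$-hard $3$-partite cover instance such as one obtained from $3$-dimensional matching. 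Concretely, I would add a fourth ``private'' coordinate merged into $Z$ via a product ordering, and check that the antichain property and the penalty argument both survive. Verifying that this combined gadget is simultaneously an antichain and preserves the cover threshold exactly is the step I expect to need the most care.
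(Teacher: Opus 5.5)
Your overall framework is the right one: reduce from vertex cover on $3$-partite hypergraphs whose edge set is an antichain, and use $\SR=\tau$ for antichain supports. This is consistent with how the paper treats the result. The paper does not reprove the theorem; it cites Bl\"aser et al.\ and remarks that their hard instances already have antichain support.

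\textbf{The gap.} The real content of the hardness proof is the step you leave open: NP-hardness of vertex cover for such hypergraphs. The route you sketch cannot supply it.
\begin{enumerate}
\item If every hyperedge has a private vertex in one fixed part $Z$, then in any cover you may replace a private $z_e$ by an endpoint of $e$ without increasing the size. So $\tau(H)$ equals the vertex cover number of the bipartite shadow graph on $X\sqcup Y$, which is polynomial-time computable by K\"onig's theorem.
\item Parallel copies do not help, since using $z_e$ was never advantageous. They even destroy the antichain property: two copies of the same edge agree on $X$ and $Y$ and are therefore comparable in $Z$.
\item Your ``one level up'' plan, merging a fourth private coordinate into $Z$, again makes every $Z$-vertex private. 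It lands in exactly the same polynomial-time class.
\end{enumerate}
As written, the reduction produces no hard instances.

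\textbf{A fix.} Keep your private-vertex and reverse-lexicographic idea, but distribute the private vertices over all three parts.
\begin{enumerate}
\item Start from vertex cover on a graph $G$ with a given proper $3$-coloring $A\sqcup B\sqcup C$. This is NP-hard: subdividing every edge of an arbitrary graph twice increases $\tau$ by exactly the number of edges, and yields such a coloring with all original vertices in $A$.
\item Put $A\subseteq X$, $B\subseteq Y$, $C\subseteq Z$. For each edge of $G$, add a private vertex in the part its two endpoints miss.
\item Endpoint replacement gives $\tau(H)=\tau(G)$.
\item Order each part with all private vertices below all original ones. Order each block of private vertices reverse-lexicographically by the endpoints of its edge.
\item Two hyperedges of the same type are incomparable by your argument.
\item Two hyperedges of different types have their private vertices in different parts $P_1\neq P_2$. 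In $P_1$ the first has a private vertex and the second an original one, so the first is strictly smaller there. Symmetrically, the first is strictly larger in $P_2$.
\end{enumerate}
Hence $\supp P_H$ is an antichain, $\SR(P_H)=\tau(G)$, and the reduction is complete.
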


\begin{definition}
A trilinear form $P(x,y,z) = \sum_{i = 1}^\ell \sum_{j = 1}^m \sum_{k = 1}^n  p_{i,j,k} x_i y_j z_k$ has an independent triple of spaces of dimension $h$ if there are spaces $V_x \le \FF^\ell$, $V_y \le \FF^m$, and $V_z \le \FF^n$ of dimensions $h_x$, $h_y$, and $h_z$ with $h_x + h_y + h_z = h$ such that $P(V_x,V_y,V_z) = \{0\}$.
\end{definition}

If there is an independent triple as above, then there are $A \in \gl(\ell, \FF)$, $B \in \gl(m, \FF)$, and $C \in \gl(n, \FF)$ such that in $P(Ax,By,Cz) = \sum_{i = 1}^\ell \sum_{j = 1}^m \sum_{k = 1}^n  p'_{i,j,k} x_i y_j z_k$, $p'_{i,j,k} = 0$ for all $1 \le i \le h_x$, $1 \le j \le h_y$, and $1 \le k \le h_z$.

\begin{remark}
The notion of an independent triple is less intuitive for trilinear forms than independent spaces for cubic forms, as there is always an independent triple of dimension $m + n$ by setting $h_x = 0$ or any of the other two parameters. But this is in accordance with the fact that the slice rank is at most $\ell$ and this trivial triple corresponds to the slice decomposition in which all linear forms depend only on $x$.
\end{remark}

There is a one-to-one correspondence between the slice rank of a trilinear form and the dimension of an independent triple of spaces, which immediately follows from \cite[Lemma~1(iv)]{taoblogSR}.

\begin{theorem}
For any trilinear form $P$, $\SR(P) \le r$ iff there is an independent triple of spaces of dimension $\ge \ell + m + n - r$.
\end{theorem}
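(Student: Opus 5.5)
We prove the two directions separately; both are linear algebra on the coefficient tensor after a change of basis in each factor.

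\emph{($\Leftarrow$)} Suppose $V_x,V_y,V_z$ of dimensions $h_x,h_y,h_z$ satisfy $h_x+h_y+h_z\ge \ell+m+n-r$ and $P(V_x,V_y,V_z)=\{0\}$. By the remark after the definition of an independent triple, choose $A,B,C$ so that, in new coordinates $x'=A^{-1}x$ etc., $V_x$ is spanned by the first $h_x$ coordinates, and likewise for $V_y,V_z$. Then every monomial $x'_iy'_jz'_k$ with nonzero coefficient has $i>h_x$, or $j>h_y$, or $k>h_z$. Assign each monomial to the first factor in which its index exceeds the threshold, and group the terms by that index. This gives
\[
P=\sum_{i>h_x} x'_i B_{x,i}(y,z)+\sum_{j>h_y} y'_j B_{y,j}(x,z)+\sum_{k>h_z} z'_k B_{z,k}(x,y),
\]
where each $x'_i$ is a linear form in $x$ (a row of $A^{-1}$), and similarly for $y'_j,z'_k$. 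This is a slice decomposition with $(\ell-h_x)+(m-h_y)+(n-h_z)\le r$ terms, so $\SR(P)\le r$.

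\emph{($\Rightarrow$)} Take a slice decomposition with $r_x+r_y+r_z\le r$. Define
\[
V_x=\bigcap_{i\le r_x}\ker \ell_{x,i},\qquad V_y=\bigcap_{j\le r_y}\ker\ell_{y,j},\qquad V_z=\bigcap_{k\le r_z}\ker\ell_{z,k},
\]
each an intersection of at most $r_x$, $r_y$, $r_z$ hyperplanes respectively. Hence $\dim V_x\ge \ell-r_x$, and similarly for the others. For $x\in V_x$, $y\in V_y$, $z\in V_z$, every summand of the decomposition contains a linear form vanishing at the corresponding argument, so $P(x,y,z)=0$. Thus $P(V_x,V_y,V_z)=\{0\}$ and $\dim V_x+\dim V_y+\dim V_z\ge \ell+m+n-r$.

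One bookkeeping point remains. The definition of an independent triple fixes total dimension exactly $h$, so a triple of dimension at least $\ell+m+n-r$ should be shrunk to the exact target by passing to subspaces, which preserves $P(V_x,V_y,V_z)=\{0\}$. Likewise, the definition of slice rank allows a decomposition to be padded with zero terms to reach exactly $r$. The only mildly delicate step is the monomial-assignment in ($\Leftarrow$): assigning each monomial to a single factor ensures no term is counted twice, so the number of slices is exactly $(\ell-h_x)+(m-h_y)+(n-h_z)$. Alternatively, this whole correspondence can simply be cited as \cite[Lemma~1(iv)]{taoblogSR}.
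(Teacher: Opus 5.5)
Your proof is correct. The paper gives no argument of its own for this statement. It only remarks that the correspondence ``immediately follows from'' \cite[Lemma~1(iv)]{taoblogSR}.

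You instead give the standard elementary verification directly. In one direction, the common kernels of the linear forms $\ell_{x,i}$, $\ell_{y,j}$, $\ell_{z,k}$ form an independent triple of total dimension at least $\ell+m+n-r$. In the other, you adapt bases to $V_x,V_y,V_z$ and group each surviving monomial by the first coordinate that exceeds its threshold. This yields a slice decomposition with exactly $(\ell-h_x)+(m-h_y)+(n-h_z)$ terms.

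What your version buys is self-containment. It also makes plain that the equivalence is pure linear algebra and holds verbatim over any field, including $\FF_2$. The citation leaves the reader to translate Tao's dual formulation into the paper's notion of an independent triple.

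Your closing remarks about shrinking the triple or padding the decomposition are unnecessary but harmless. Both the statement and the definition of $\SR$ are already phrased with inequalities, so the triple you construct in ($\Rightarrow$) and the decomposition you build in ($\Leftarrow$) meet the conditions as they stand.
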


For a trilinear form $P(x,y,z) = \sum_{i = 1}^\ell \sum_{j = 1}^m \sum_{k = 1}^n p_{i,j,k} x_i y_j z_k$, let $\supp(P) = \{(x_i,y_j,z_k) \mid p_{i,j,k} \not= 0 \}$. $(x \times y \times z, \supp(P))$ is a 3-uniform, 3-partite hypergraph, which, by a slight abuse of notation, we also denote by $\supp(P)$. 
For a hypergraph $H$, let $\tau(H)$ denote the vertex cover number, that is, the minimum size of a vertex cover.

\begin{lemma}[\cite{taoblogSR}]
    $\SR(P) \le \tau(\supp(P))$.
\end{lemma}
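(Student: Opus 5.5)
Given a minimum vertex cover of $\supp(P)$, we turn it directly into a slice decomposition of the same size. Let $C$ be a vertex cover of $\supp(P)$ with $|C| = \tau(\supp(P))$. Split it into its parts in the three colour classes, $C_x = C \cap \{x_1,\dots,x_\ell\}$, $C_y = C \cap \{y_1,\dots,y_m\}$ and $C_z = C \cap \{z_1,\dots,z_n\}$. The target is a slice decomposition with $r_x = |C_x|$, $r_y = |C_y|$ and $r_z = |C_z|$. Since $r_x + r_y + r_z = |C|$, this gives $\SR(P) \le \tau(\supp(P))$.

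To build the decomposition, assign each monomial $p_{i,j,k} x_i y_j z_k$ with $p_{i,j,k} \neq 0$ to a cover vertex. Because $(x_i,y_j,z_k)$ is a hyperedge of $\supp(P)$ and $C$ covers it, at least one of $x_i, y_j, z_k$ lies in $C$. Use the following fixed rule:
\begin{itemize}
\item if $x_i \in C_x$, assign the monomial to $x_i$;
\item otherwise, if $y_j \in C_y$, assign it to $y_j$;
\item otherwise assign it to $z_k$, which must then lie in $C_z$.
\end{itemize}
This partitions the nonzero terms of $P$.

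Now define the forms. For each $x_i \in C_x$, set $\ell_{x,i}(x) = x_i$ and let $B_{x,i}(y,z)$ be the sum of $p_{i,j,k} y_j z_k$ over the monomials assigned to $x_i$. Define $\ell_{y,j}(y) = y_j$ with $B_{y,j}(x,z)$ for $y_j \in C_y$ in the same way, and likewise $\ell_{z,k}(z) = z_k$ with $B_{z,k}(x,y)$ for $z_k \in C_z$. Each $B$ is bilinear in the appropriate pair of variable sets, since every monomial of $P$ is trilinear with one variable from each set. Summing the products $\ell \cdot B$ reproduces every monomial of $P$ exactly once, because the assignment is a partition. So this is a valid slice decomposition.

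The argument is essentially bookkeeping and has no real obstacle. The only point needing care is the precedence rule for monomials covered by more than one vertex of $C$: it ensures no term is counted twice. This matters over a general field $\FF$, where a double-counted coefficient would not cancel.
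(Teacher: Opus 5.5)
Your argument is correct and is the standard one. The paper gives no proof of its own and simply cites Tao's blog, where the bound is obtained exactly this way: assign each element of $\supp(P)$ to one covering vertex, then group the corresponding monomials into slices $x_i B_{x,i}(y,z)$, $y_j B_{y,j}(x,z)$, $z_k B_{z,k}(x,y)$. One small aside: the partition is needed in every characteristic, not only over general fields, since over $\FF_2$ a doubly counted term would vanish rather than survive.
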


Since the slice rank is invariant under invertible linear transformations of the variables, that is, under $\gl(\ell, \FF) \times \gl(m, \FF) \times \gl(n, \FF)$, we can take the minimum over all supports under this action on the right hand side.

Assume that there are three total orderings on the variable sets $x$, $y$, and $z$, respectively. Consider the induced (partial) product order on $x \times y \times z$ defined by
\[
(x,y,z)\ge (u,v,w)
\quad\Longleftrightarrow\quad
x\ge u,\quad y\ge v,\quad z\ge w.
\]
We moreover say $(x,y,z) > (u,v,w)$ if $(x,y,z) \ge (u,v,w)$ and $(x,y,z) \neq (u,v,w)$. 
Let $\max \supp(P)$ denote all triples in the support that are maximal in the product order.

\begin{lemma}[\cite{taoblogSR}]
   $\SR(P) \ge \tau(\max \supp(P))$. 
\end{lemma}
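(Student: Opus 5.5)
The plan is to take an optimal slice decomposition and turn it into a vertex cover of $\max\supp(P)$ of the same size, using Gaussian elimination adapted to the given total orders. Fix a slice decomposition of $P$ with $r_x+r_y+r_z=\SR(P)$. The goal is to exhibit sets $X'\subseteq x$, $Y'\subseteq y$, $Z'\subseteq z$ with $|X'|\le r_x$, $|Y'|\le r_y$, $|Z'|\le r_z$ whose union covers every maximal triple of $\supp(P)$.

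\emph{Construction of the cover.} Let
\[
U_x=\{v\in\FF^\ell : \ell_{x,i}(v)=0 \text{ for all } 1\le i\le r_x\}.
\]
This subspace has dimension at least $\ell-r_x$. Identify the coordinates of $\FF^\ell$ with the variables $x$, which are totally ordered. Bring a basis of $U_x$ into echelon form so that each basis vector has a distinct \emph{leading} coordinate, namely its smallest coordinate in the order, with coefficient $1$. Let $L_x$ be the set of leading coordinates, so that $|L_x|=\dim U_x\ge \ell-r_x$, and put $X'=x\setminus L_x$, so that $|X'|\le r_x$. For every $a\in L_x$ this gives a vector
\[
u_a=e_a+\sum_{a'>a}c_{a'}e_{a'}\in U_x .
\]
Do the same for $y$ and $z$, obtaining $L_y,Y'$ with vectors $u_b$ for $b\in L_y$, and $L_z,Z'$ with vectors $u_c$ for $c\in L_z$.

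\emph{Key step: the cover works.} Suppose some maximal triple $(a,b,c)\in\max\supp(P)$ is not covered, so that $a\in L_x$, $b\in L_y$, $c\in L_z$. Evaluate $P(u_a,u_b,u_c)$ in two ways.

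First, through the decomposition. Each summand contains a factor $\ell_{x,i}(u_a)$, $\ell_{y,j}(u_b)$ or $\ell_{z,k}(u_c)$, and each such factor vanishes by the definition of $U_x,U_y,U_z$. Hence $P(u_a,u_b,u_c)=0$.

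Second, by expanding trilinearly in the coordinates. The expansion gives $p_{a,b,c}$ plus a linear combination of coefficients $p_{a',b',c'}$ with $a'\ge a$, $b'\ge b$, $c'\ge c$ and $(a',b',c')\neq(a,b,c)$. Each such triple satisfies $(a',b',c')>(a,b,c)$ in the product order. Since $(a,b,c)$ is maximal in $\supp(P)$, no such triple lies in the support, so all these coefficients are zero. Hence $P(u_a,u_b,u_c)=p_{a,b,c}\neq 0$.

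This contradicts the first evaluation. Therefore $X'\cup Y'\cup Z'$ is a vertex cover of $\max\supp(P)$, and
\[
\tau(\max\supp(P))\le r_x+r_y+r_z=\SR(P).
\]

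\emph{Main obstacle.} The crux is the echelon step: the leading coordinate of each $u_a$ must be its \emph{smallest} coordinate in the given order. This orientation is what makes every extra term in the expansion correspond to a triple strictly above $(a,b,c)$ in the product order, which is exactly where maximality applies. The rest is routine linear algebra and works over any field.
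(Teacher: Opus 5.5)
Your proof is correct, and it is essentially the standard argument from Tao's notes, which the paper cites without reproducing: annihilate the linear forms, take an echelon basis of each kernel, and let the non-pivot coordinates form the cover. It is also the trilinear analogue of the paper's own pivot argument in Lemma~\ref{lem:symSR-tau}. Taking the pivot to be the smallest coordinate is the one orientation-sensitive step, and you handle it correctly, since it places every other surviving term strictly above $(a,b,c)$. Because there are three separate kernels, you can evaluate directly at $(u_a,u_b,u_c)$, whereas the symmetric case has a single kernel and must extract the $d_ad_bd_c$ coefficient instead.
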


If $\supp P$ is an antichain, that is, all elements are maximal, then the slice rank of $P$ equals $\tau(\supp P)$.

\begin{definition}
A trilinear form $P(x,y,z)$ is called \emph{oblique} if there are linear transforms $A$, $B$, $C$ such that $\supp P(Ax,By,Cz)$ is an antichain.
\end{definition}

The hardness proof by \textcite{BlaserILPS21} only constructs instances of oblique trilinear forms. Even stronger, all instances are presented in such a way that the support of the constructed trilinear form is already an antichain.

\subsection{Symmetric slice rank (polynomial case)}

Let $P(x)$ be a cubic form and let us write $P$ in the standard form 
\begin{align}
   P(x) = \sum_{i \le j \le k} p_{i,j,k} x_i x_j x_k.
 \end{align} 
Over the binary field, the $=$ cases capture lower-degree terms since $x_i^2 = x_i$.

\begin{definition}
Let $P \in \FF[x]$ be a cubic form. 
\begin{itemize}
    \item A symmetric slice decomposition is a collection of linear forms $\ell_i(x)$ and quadratic forms $Q_i(x)$, $1 \le i \le r$ such that
\[
    P(x) = \sum_{i = 1}^r \ell_i(x) Q_i(x)
\]
    \item $P$ has symmetric slice rank $\le r$, denoted by $\symSR(P) \le r$, if there is a symmetric slice decomposition with $\le r$ terms. 
\end{itemize}
\end{definition}

\begin{remark}
We have $\symSR(P) = \symSR(P \circ A)$ for any $ A \in \gl(n, \FF)$ as formal polynomials. Functionally, however, this might not be true over $\FF_2$, the field in which we are most interested. Consider the elementary symmetric polynomial
\[
   E = abc + abd + acd + bcd.
\]
It is easy to check that $\symSR(E) = 2$. However, if we transform $d \to a+b+c+d$ and keep all other variables fixed and reduce all resulting squares, then we get 
\[
   E' = d(ab + ac + bc),
\]
which has slice rank 1. However, our $\NP$-hardness proof also works over $\FF_2$, since we consider trilinear forms whose support forms an antichain and we can work with vertex cover numbers, see Lemma~\ref{lem:symSR-tau}.
\end{remark}

The relation of symmetric slice rank to regular slice rank seems to be rather unexplored.
\textcite{OnetoVenturaRanks} show: The monomial $x_1 x_2^2$ obviously has symmetric slice rank $1$, but the corresponding symmetric trilinear form $\frac 13(x_1 y_2 z_2 + x_2 y_1 z_2 + x_2 y_2 z_1)$ has slice rank $2$. 

We will here do the opposite. We start with a trilinear form $T(x,y,z)$ and view it as a cubic form $P_T(x \cup y \cup z)$ in the variable set $x \cup y \cup z$. In the former case, the natural action on $T$ is $\gl(n_x, \FF) \times \gl(n_y, \FF) \times \gl(n_z, \FF)$, while on $P_T$, it is $\gl(n_x + n_y + n_z, \FF)$. If we order the variables in $x \cup y \cup z$ as $x_1,\dots,x_{n_x}, y_1, \dots,y_{n_y},z_1,\dots,z_{n_z}$, then the support $\supp(T)$ of $T$ and the hypergraph $H(P)$ of $P$ are the same (up to renaming of vertices). 
As a main result we show that for certain trilinear forms $T(x,y,z)$, the slice rank of $T$ and the symmetric slice rank of $P_T(x \cup y \cup z)$ coincide. 

\textcite{ananyan2020small} define a related concept to symmetric slice rank called the \emph{strength}. The strength $\Str(P)$ of a form $P$ of degree $d$ is the minimum $r$ such that there exist $F_i, G_i$, $1 \le i \le r$,  with $1 \le \deg F_i, \deg G_i < d$ such that 
\[
   P(x) = \sum_{i = 1}^r F_i(x) G_i(x).
\]

\begin{remark}
For cubic forms $P$, $\symSR(P) = \Str(P)$.    
\end{remark}

\begin{theorem}
For any cubic form $P$, $\symSR(P) \le r$ iff $P$ has a formally independent space of dimension $\ge n-r$. 
\end{theorem}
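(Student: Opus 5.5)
We prove both directions by passing through a linear change of coordinates that makes the independent space a coordinate subspace. Here a \emph{formally independent space} of the cubic form $P$ is a subspace $V \le \FF^n$ on which $P$ vanishes identically as a formal polynomial. Equivalently, there is $A \in \gl(n,\FF)$ whose first $h=\dim V$ columns span $V$ and such that $P\circ A$, viewed formally (no reduction $x_i^2=x_i$), contains no monomial supported only on $x_1,\dots,x_h$. The key fact used throughout is the remark that $\symSR$ is invariant under $\gl(n,\FF)$ when polynomials are treated formally. This is why the statement concerns \emph{formally} independent spaces.

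\emph{($\Leftarrow$)} Suppose $V$ is a formally independent space of dimension $h \ge n-r$, and pick $A$ as above. Every monomial of $P\circ A$ contains at least one variable $x_i$ with $i>h$, so we can group the monomials by their largest such index:
\[
P\circ A = \sum_{i=h+1}^{n} x_i\, Q_i(x),
\]
with each $Q_i$ a quadratic form. This is a symmetric slice decomposition with $n-h \le r$ terms. Transforming back by $A^{-1}$ turns each $x_i$ into a linear form and each $Q_i$ into a quadratic form, so $\symSR(P)\le r$.

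\emph{($\Rightarrow$)} Suppose $P=\sum_{i=1}^{r}\ell_i Q_i$. Let $V=\bigcap_{i}\ker \ell_i$. Then $\dim V \ge n-r$, and every $\ell_i$ vanishes on $V$, so $P|_V\equiv 0$ formally. To see that $V$ is an independent space in the sense used here, choose $A$ whose first $\dim V$ columns span $V$. The linear forms $\ell_i\circ A$ involve only the variables $x_{\dim V+1},\dots,x_n$, so every monomial of $P\circ A$ contains one of those variables. Hence the first $\dim V$ coordinates form an independent set.

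The main subtlety is making sure that ``formally'' matches the slice-rank notion. Over $\FF_2$, reducing squares can create or destroy vanishing, as the $E$ versus $E'$ example shows. So the proof should work entirely in $\FF[x]$ and never reduce modulo $x_i^2-x_i$. The step that needs the most care is the $(\Leftarrow)$ grouping: over a general field, monomials such as $x_i^2x_j$ with $i>h$ must still be assigned to a single factor $x_i$, and the largest-index grouping does this consistently.
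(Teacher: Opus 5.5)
Your proof is correct and follows the paper's argument. For ($\Rightarrow$) you take the common kernel of the linear forms $\ell_i$; the paper first assumes the $\ell_i$ are linearly independent to get dimension exactly $n-r$, whereas you simply note $\dim V \ge n-r$. For ($\Leftarrow$) you move $V$ to the first $h$ coordinates and factor each surviving monomial of $P\circ A$ by some variable $x_i$ with $i>h$, which gives at most $r$ slices, exactly as the paper does.
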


Here formally independent means that $P$ is the zero polynomial after plugging the equations of the subspace into $P$. Since formally independent implies functionally independent as defined in Definition~\ref{def:ind_sp}, the ``$\Rightarrow$''-direction is also true functionally. However, we do not need the above theorem for the hardness proof, but will work with \cref{lem:symSR-tau} instead.

\begin{proof}
Let 
\[
    P(x) = \sum_{i = 1}^r \ell_i(x) Q_i(x)
\]
be a symmetric slice decomposition. We can assume that the $\ell_i$ are linearly independent. The kernel of $\ell_1,\dots,\ell_r$ is an independent space and has dimension $n-r$.

For the other direction, assume that $P$ has an independent space $V$ of dimension $n-r$. 
Let $A\in\gl(n,\FF)$ be an invertible matrix whose first
$n-r$ columns form a basis of $V$. 
Then all monomials $x_i x_ j x_ k$ in $P(Ax)$ have an index 
$> n-r$. Thus we can write   
\[
    P(Ax) = \sum_{i = 1}^r x_{n-i+1} Q_i(Ax).
\]
for appropriate quadratic forms. Therefore, $\symSR(P) \le r$.
\end{proof}

We now prove the main lemma of this section, which, given a trilinear form $T$, connects the symmetric rank of the corresponding cubic form $P_T$ to the vertex cover number of $H(P_T)$. 

\begin{lemma} \label{lem:symSR-tau}
    Let $T(x,y,z)$ be a trilinear form. Assume that there is a total ordering on the variables such that every $x$-variable is larger than every $y$-variable and every $y$-variable is larger than every $z$-variable. Furthermore, assume that $\supp T$ is an antichain with respect to the product ordering induced by the total ordering restricted to the three variable sets. Then $\symSR(P_T) \ge \tau(H(P_T))$. 
\end{lemma}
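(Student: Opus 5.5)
The plan is to imitate Tao's argument for $\SR(P)\ge\tau(\max\supp(P))$, with the three variable blocks merged into one totally ordered set. Fix a symmetric slice decomposition $P_T=\sum_{i=1}^r \ell_i Q_i$ with $r=\symSR(P_T)$. From it I will extract a vertex cover of $H(P_T)=\supp T$ of size at most $r$.

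\textbf{Step 1 (cover candidate).} Perform Gaussian elimination on $\ell_1,\dots,\ell_r$, ordering columns by the total order on $x\cup y\cup z$ from largest to smallest. This makes the forms echelon, so each nonzero $\ell_i$ has a distinct leading variable $v_i$, namely the largest variable occurring in it. Row operations only change the $Q_i$, so the span of the $\ell_i$ is unchanged. Let $C=\{v_1,\dots,v_{r'}\}$ with $r'\le r$. On the common kernel $K$ of the $\ell_i$, each leading variable $v_i$ equals a linear combination of variables strictly smaller than $v_i$. The remaining variables are free coordinates on $K$. Since every $\ell_i$ vanishes on $K$, the formal substitution $\sigma$ that eliminates the leading variables gives $P_T\circ\sigma=0$ as a polynomial in the free variables. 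I will work formally throughout, which is legitimate because $x_ay_bz_c$ involves three distinct variables even over $\FF_2$.

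\textbf{Step 2 ($C$ is a vertex cover).} Suppose some hyperedge $x_ay_bz_c\in\supp T$ avoids $C$. Then $x_a,y_b,z_c$ are all free, and I will compute the coefficient of $x_ay_bz_c$ in $P_T\circ\sigma$. Every monomial of $P_T$ has the form $x_{a'}y_{b'}z_{c'}$. Under $\sigma$ each variable is either kept or replaced by a sum of strictly smaller variables. Because every $x$-variable exceeds every $y$-variable, which exceeds every $z$-variable, the following holds:
\begin{itemize}
\item only $x_{a'}$ can produce $x_a$, and then $x_{a'}\ge x_a$;
\item given that, $y_b$ must come from $y_{b'}$, since $z_{c'}$ only produces $z$-variables; hence $y_{b'}\ge y_b$;
\item finally $z_{c'}\ge z_c$.
\end{itemize}
So every monomial contributing to $x_ay_bz_c$ is $\ge(x_a,y_b,z_c)$ in the product order. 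Since $\supp T$ is an antichain, the only contributor is $x_ay_bz_c$ itself, with all three variables unchanged. Its coefficient is therefore $p_{abc}\ne0$, which contradicts $P_T\circ\sigma=0$. Hence $C$ is a vertex cover, and $\tau(H(P_T))\le r'\le\symSR(P_T)$.

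\textbf{Main obstacle.} The delicate point is the bookkeeping in Step 2. One must rule out cross-block contributions, for instance a substituted $x$-variable producing $y_b$ while some other variable produces $x_a$. I expect to handle this with the case argument above, which relies on substitutions only ever decreasing variables and on the block ordering $x>y>z$. A second point to check is that formal vanishing on $K$ is the correct notion over $\FF_2$. It is, because the decomposition is an identity of formal polynomials and the target monomial is multilinear in distinct variables, so no reduction $x_i^2=x_i$ can create or cancel it.
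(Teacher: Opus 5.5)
Your proof is correct and is essentially the paper's pivot argument. You echelonize with respect to the total order, read off at most $r$ coordinates as the cover, and show that an uncovered hyperedge $x_ay_bz_c$ leaves the uncancellable coefficient $p_{abc}$ in the restriction of $P_T$ to the common kernel of the $\ell_i$, using the block ordering $x>y>z$ and the antichain hypothesis.

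The only difference is one of duality. The paper puts a basis of the kernel in row echelon form and takes the non-pivot positions as the cover. It restricts to the span of the three kernel vectors pivoting at $x_a,y_b,z_c$ and rules out every monomial not $\le(x_a,y_b,z_c)$. You instead echelonize the $\ell_i$ themselves, take their leading variables as the cover, parametrize the whole kernel by the free variables, and rule out every monomial not $\ge(x_a,y_b,z_c)$. The antichain assumption covers both directions equally well.
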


\begin{proof}
Note that $\supp T$ and $H(P_T)$ are isomorphic as hypergraphs. For simplicity, we will simply write $P$ instead of $P_T$ during the rest of the proof.

Let the variables be $x_1,\dots,x_{n_x}$, $y_1,\dots,y_{n_y}$ and 
$z_1,\dots,z_{n_z}$ and assume that this is the ordering, that is, $x_1$ is the largest element, etc. Let $N = n_x + n_y + n_z$.
Let 
\[
   P(x,y,z) = \sum_{i = 1}^r \ell_i(x,y,z) Q_i(x,y,z)    
\]
be a symmetric slice decomposition with linear forms $\ell_1,\dots,\ell_r$. Let $u_1,\dots u_{N-r}$ be a basis of the kernel of the span of $\ell_1,\dots,\ell_r$. Think of $u$ as row vectors and order the entries of $u$ with respect to the total ordering from left to right. That is, the entries $1,\dots,n_x$ correspond to the $x$-variables, the entries $n_x + 1,\dots, n_x + n_y$ to the $y$-variables, and the entries $n_x + n_y + 1,\dots,n_x + n_y + n_z$ to the $z$-variables. Write $u_1,\dots,u_{N-r}$ as a matrix and bring it to row echelon form, that is, each $u_i$ looks like $(0,\dots,0,1,*,\dots,*)$. This can be achieved purely by invertible row operations which just means that we choose a different basis. Assume that $u_i$ has the first $1$ in column $j_i$. Let $D = \{j_1,\dots,j_{N-r}\}$. By the properties of row echelon form, $|D| = N - r$. Let $C = \{1,\dots, N\} \setminus D$. We claim that $C$ is a vertex cover of $H(P)$.

Let $e \in H(P)$ be a hyperedge. Assume that $e$ corresponds to the monomial $c_e \cdot x_a y_b z_c$. Assume for a contradiction that $e$ is not covered, that is, the three indices $a, n_x + b, n_x + n_y + c \in D$, that is, they correspond to a $1$ in the row echelon form. Let $v_a, v_b, v_c$ be the corresponding vectors (each of them is one of the $u_1,\dots,u_{N-r}$). Consider the three-dimensional space $V$ generated by $d_a v_a + d_b v_b + d_c v_c$, where $d_a,d_b,d_c$ are variables. Since $u_1,\dots,u_{N-r}$ is a basis of the kernel of $\langle \ell_1,\dots,\ell_r \rangle$, $P$ vanishes on $V$.   

We now study the behavior of the monomials of $P$ on $V$. 
The entry of an element $v  \in V$ at the position corresponding to $x_a$ in $V$ is $v_{x_a} = d_a$, at the position $y_b$ it is $\lambda d_a + d_b$ for some fixed $\lambda \in \FF$ independent of $v$ and at the position $z_c$ it is $\lambda' d_a + \lambda'' d_b + d_c$ for $\lambda', \lambda '' \in \FF$. This is due to the fact that the indices $a, n_x + b, n_x + n_y + c$ correspond to $1$'s in the row echelon form. Thus, the monomial corresponding to $e$ has the value
\[
   c_e \cdot d_a (\lambda d_a + d_b)(\lambda' d_a + \lambda'' d_b + d_c)
       = c_e \cdot d_a d_b d_c + O(d_a^3 + d_a^2 d_b + d_a^2d_c + d_a d_b^2)
\]
on $V$. 

We will now derive a contradiction to $P$ vanishing on $V$ by showing that the term $d_a d_b d_c$ in the polynomial $P$ on $V$ cannot be generated by any other monomial of $P$ and therefore the term $d_a d_b d_c$ cannot be canceled by any other monomial. 
To this end, let $e' =(a',b',c') \in \supp P$ be an arbitrary different element corresponding to the monomial $x_{a'} y_{b'} z_{c'}$ and consider its behavior on $V$.
We will show that the resulting polynomial does not contain the monomial $d_a d_b d_c$. 

We do this by using that $\supp P$ forms an antichain, 
and that $e'\ne e$, that is, at least one of $x_{a'}>x_a$, $y_{b'}>y_b$, or $z_{c'}>z_c$ must be true.

Case 1: $x_{a'} > x_a$. Then the value at the position corresponding to $x_{a'}$ of any vector in $V$ is $0$. Thus, the monomial corresponding to $e'$ evaluates to $0$ on $V$.

Case 2: $y_{b'} > y_b$. Then the value at the position corresponding to $x_{a'}$ is $\mu d_a$ and the value at the position corresponding to $y_{b'}$ is $\mu' d_{a}$ for some $\mu,\mu' \in \FF$.   Thus the monomial on $V$ takes the form $O(d_a^2(d_a + d_b + d_c))$  

Case 3: $z_{c'} > z_c$. Then the value at the position corresponding to $x_{a'}$ is $\mu d_a$, to $y_{b'}$ is $\mu' d_a + \mu'' d_b$, to $z_{c'}$ is $\mu''' d_a + \mu'''' d_b$ for $\mu, \mu', \mu'', \mu''', \mu '''' \in \FF$. Thus the monomial $x_{a'}y_{b'}z_{c'}$ on $V$ takes the form $O(d_a^3 + d_a^2 d_b + d_a d_b^2)$.

Altogether, we have
\[
  0 = P(V) = c_e \cdot d_a d_b d_c + \text{other terms} 
\]
which implies $c_e = 0$ contradicting the fact that $e \in H(P)$. Therefore, $C$ is a vertex cover of size $r$.
\end{proof}

\begin{corollary} \label{cor:symSR-SR}
If $T(x,y,z)$ is a trilinear form such that $\supp T$ is an antichain, then 
$\SR(T) = \symSR(P_T)$.
\end{corollary}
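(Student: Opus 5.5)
We prove the corollary by a chain of inequalities. On one side, the slice rank of $T$ is bounded by the vertex cover number of its support, and that cover yields a symmetric slice decomposition of $P_T$. On the other side, Lemma~\ref{lem:symSR-tau} bounds $\symSR(P_T)$ below by the same vertex cover number. Concretely, we aim for
\[
\SR(T) \le \symSR(P_T) \le \tau(H(P_T)) = \tau(\supp T) = \tau(\max\supp T) \le \SR(T),
\]
although the cleanest route is to show the two sandwiches $\SR(T)=\tau(\supp T)$ and $\symSR(P_T)=\tau(H(P_T))$ separately.

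\textbf{Step 1: $\SR(T)=\tau(\supp T)$.} This is immediate from the two lemmas of \textcite{taoblogSR} quoted above: $\SR(T)\le\tau(\supp T)$, and $\SR(T)\ge\tau(\max\supp T)$. Since $\supp T$ is an antichain, $\max\supp T=\supp T$.

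\textbf{Step 2: $\symSR(P_T)\le\tau(H(P_T))$.} Take a minimum vertex cover $C$ of $H(P_T)$. Every monomial $x_ay_bz_c$ of $P_T$ contains at least one variable from $C$. Assign each monomial to one such covering variable and group the monomials by their assigned variable. This writes $P_T=\sum_{v\in C} v\cdot Q_v$ with $Q_v$ quadratic forms, which is a symmetric slice decomposition with $|C|$ terms.

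\textbf{Step 3: $\symSR(P_T)\ge\tau(H(P_T))$.} Here we apply Lemma~\ref{lem:symSR-tau} with the total order placing all $x$-variables above all $y$-variables, and those above all $z$-variables. The antichain hypothesis on $\supp T$ has to be with respect to the product ordering induced by the orderings on the three variable sets, and that is exactly the setting of the lemma. The only point needing care is that the corollary's antichain hypothesis refers to the same orderings used in the lemma. We fix those orderings once, and the lemma's ordering on $x\cup y\cup z$ is then just their concatenation.

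\textbf{Combining.} Since $H(P_T)$ and $\supp T$ are isomorphic hypergraphs, $\tau(H(P_T))=\tau(\supp T)$. Steps 1–3 then give $\SR(T)=\tau(\supp T)=\symSR(P_T)$.

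All the real work sits in Lemma~\ref{lem:symSR-tau}. The only possible obstacle is consistency of orderings and the fact that we treat $P_T$ as a formal polynomial, so that Step 2 is a formal identity. Over $\FF_2$ no reduction of squares occurs anyway, since $P_T$ is multilinear in disjoint variable sets.
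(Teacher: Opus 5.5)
Your proof is correct and follows essentially the paper's argument. You get $\SR(T)=\tau(\supp T)$ from Tao's two bounds plus the antichain hypothesis, and $\tau(H(P_T))\le\symSR(P_T)$ from Lemma~\ref{lem:symSR-tau}. The only difference is the easy upper bound. The paper notes that every slice decomposition of $T$ is already a symmetric slice decomposition of $P_T$, giving $\symSR(P_T)\le\SR(T)$; you instead group monomials by a minimum vertex cover to get $\symSR(P_T)\le\tau(H(P_T))$, which is an equally trivial and interchangeable step.
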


\begin{proof}
This follows from the fact that $\SR(T) = \tau(\supp T)$ when $\supp T$ is an antichain \cite[Prop.~4]{taoblogSR} and the fact that $\tau(H(P_T)) \le \symSR(P_T)$ (Lemma~\ref{lem:symSR-tau}). We have $\symSR(P_T) \le \SR(T)$, since every slice decomposition of $T$ is also a symmetric slice decomposition of $P_T$. Finally, $H(P_T)$ and $\supp P$ are isomorphic and their cover numbers are the same.
\end{proof}

\begin{corollary}
If $T$ is a trilinear form such that $\supp T$ is an antichain, then $\Str(P_T) = \SR(T)$.    
\end{corollary}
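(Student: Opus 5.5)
The plan is to combine Corollary~\ref{cor:symSR-SR} with the remark that $\symSR(P) = \Str(P)$ for every cubic form $P$. The statement then follows by a one-line chain of equalities. The only point that needs a bit of care is the remark itself, since the corollary has already done the substantive work.

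First, I would justify $\Str(P) = \symSR(P)$ for a cubic form $P$. By definition, $\Str(P)$ is the minimum $r$ with $P = \sum_{i=1}^r F_i G_i$, where $1 \le \deg F_i, \deg G_i < 3$. Since $\deg F_i + \deg G_i$ must be able to reach $3$ and each degree is $1$ or $2$, we should reduce each product to one degree-$1$ factor times one degree-$2$ factor.

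To do this, take homogeneous components. Because $P$ is homogeneous of degree $3$, we may replace each $F_i G_i$ by its degree-$3$ part; this does not increase the number of terms. That part is $F_i^{(1)}G_i^{(2)} + F_i^{(2)}G_i^{(1)}$. When $\deg F_i = 1$ and $\deg G_i = 2$, only the first summand is present, so we get a linear form times a quadratic form. The symmetric case $\deg F_i = 2$, $\deg G_i = 1$ is handled the same way. If both factors have degree $1$ or both have degree $2$, the degree-$3$ part either vanishes or is again of the form linear times quadratic. In the case where both factors have degree $2$, I would check that only one cross term survives, using degrees of the homogeneous pieces.

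This gives $\symSR(P) \le \Str(P)$. The reverse inequality is immediate, since a symmetric slice decomposition is a strength decomposition. This bookkeeping with mixed-degree factors is the only mildly delicate step. It is a formal-polynomial statement, matching how $\symSR$ is used in Corollary~\ref{cor:symSR-SR}.

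Finally, for $T$ with antichain support, Corollary~\ref{cor:symSR-SR} gives $\SR(T) = \symSR(P_T)$. Combining this with the first step yields $\Str(P_T) = \symSR(P_T) = \SR(T)$.
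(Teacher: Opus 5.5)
Your reduction is the same as the paper's. The corollary is read off from the remark $\Str(P)=\symSR(P)$ for cubic forms together with \cref{cor:symSR-SR}, and the paper adds nothing further. The problem lies in your justification of that remark, which is the one step you single out.

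If you allow non-homogeneous factors with $\deg F_i=\deg G_i=2$, the degree-$3$ part of $F_iG_i$ is $F_i^{(1)}G_i^{(2)}+F_i^{(2)}G_i^{(1)}$. Both summands survive whenever both linear parts are nonzero, and no degree consideration removes either of them. For example, $F=x_1+x_2x_3$ and $G=x_4+x_5x_6$ give $(FG)^{(3)}=x_1x_5x_6+x_2x_3x_4$. This polynomial is irreducible, since it has degree $1$ in $x_1$ with coprime coefficients, so it is not a single linear-times-quadratic term. Your termwise replacement therefore only yields $\symSR(P)\le 2\Str(P)$ under this reading. To do better you would have to use globally that the degree-$4$ and lower-degree parts cancel across the terms, for instance $\sum_i F_i^{(2)}G_i^{(2)}=0$, and your proposal does not do this.

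The clean fix is to take strength as in Ananyan--Hochster, which the paper cites, with $F_i,G_i$ homogeneous forms of positive degree. Then each $F_iG_i$ is homogeneous of degree $\deg F_i+\deg G_i\in\{2,3,4\}$. Comparing homogeneous components of $P=\sum_i F_iG_i$, the products of degree $2$ sum to zero, as do those of degree $4$, so both groups can be deleted. Every remaining product is a linear form times a quadratic form. This gives $\symSR(P)\le\Str(P)$ with no case analysis. The reverse inequality is trivial, and your chain $\Str(P_T)=\symSR(P_T)=\SR(T)$ then finishes the proof.
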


\begin{corollary}
Computing $\symSR(P)$ is $\NP$-hard, that is, deciding whether a given cubic form $P$ has an independent space of dimension $h$ is $\NP$-hard.  
\end{corollary}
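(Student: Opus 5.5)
The plan is a polynomial-time reduction from the slice-rank problem of \textcite{BlaserILPS21}, which is \np-hard by the cited theorem. The key observation is that their hardness proof only produces trilinear forms $T(x,y,z)$ whose support is already an antichain with respect to fixed total orders on the three variable sets. Given such an instance $(T,r)$, I will output the cubic form $P_T$ in the $N=n_x+n_y+n_z$ variables $x\cup y\cup z$, ordered so that every $x$-variable exceeds every $y$-variable, which in turn exceeds every $z$-variable. The target threshold will be $h = N - r$ (or $\symSR$-threshold $r$). This map is clearly polynomial time: the coefficients of $P_T$ are exactly those of $T$.

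Correctness rests on Corollary~\ref{cor:symSR-SR}, which gives $\SR(T)=\symSR(P_T)$ whenever $\supp T$ is an antichain. This immediately shows that deciding $\symSR(P)\le r$ is \np-hard.

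To translate this into the independent-space formulation, I use the theorem relating symmetric slice rank to formally independent spaces: $\symSR(P)\le r$ iff $P$ has a formally independent space of dimension at least $N-r$. Over a general field, formal and functional independence coincide for forms, so this finishes the proof.

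The main obstacle is $\FF_2$, where functional independence (Definition~\ref{def:ind_sp}) may be weaker, as the elementary symmetric example shows. There I plan to avoid $\symSR$ and use vertex covers directly, proving that $P_T$ has an independent space of dimension $\ge N-r$ iff $\tau(H(P_T))\le r$.
\begin{itemize}
\item \emph{($\Leftarrow$)} The complement of a cover spans a coordinate independent space.
\item \emph{($\Rightarrow$)} Rerun the row-echelon argument of Lemma~\ref{lem:symSR-tau}, replacing the hypothesis ``$P$ vanishes on $V$'' by ``$\TF^{P}$ vanishes on $V^3$''. The isolated coefficient of $d_ad_bd_c$ is then read off as the trilinear value $\TF^P(v_a,v_b,v_c)$. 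The antichain case analysis should still show that no other hyperedge contributes to it.
\end{itemize}
Since $\tau(\supp T)=\SR(T)$ for antichains, this again transfers \np-hardness.
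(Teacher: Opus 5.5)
Your proposal is correct and follows the paper's proof: reduce from the antichain-support slice-rank instances of \cite{BlaserILPS21}, invoke Corollary~\ref{cor:symSR-SR}, and handle $\FF_2$ with the pivot argument of Lemma~\ref{lem:symSR-tau}. For that last step, you take $\TF^{P}|_{W^3}=0$ as the hypothesis and read off the $d_ad_bd_c$-coefficient as $\TF^{P}(v_a,v_b,v_c)$, which is valid because $P_T$ has only squarefree monomials $x_ay_bz_c$; this is an explicit version of the paper's one-line remark that formally and functionally independent spaces have the same maximum dimension.
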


\begin{proof}
This follows from the fact that the $\NP$-hardness proof for slice rank \cite{BlaserILPS21} uses only instances whose support is an antichain. For these instances, symmetric slice rank and slice rank coincide by Corollary~\ref{cor:symSR-SR}.
In particular, by the same pivot argument as in Lemma~\ref{lem:symSR-tau}, the maximum dimensions of formally independent and
functionally independent spaces coincide, and the hardness proof
also works for $\FF_2$.
\end{proof}

\begin{corollary}
Computing $\Str(P)$ is $\NP$-hard.  
\end{corollary}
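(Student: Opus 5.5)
The statement is that computing $\Str(P)$ is $\NP$-hard. The plan is to reduce from the $\NP$-hard problem of deciding whether $\SR(T)\le r$ for a trilinear form $T$, which was shown by \textcite{BlaserILPS21}. Given an instance $(T,r)$ from that reduction, we output the cubic form $P_T$ in the variable set $x\cup y\cup z$, together with the same threshold $r$. This map is clearly polynomial time: the coefficients of $P_T$ are just the coefficients of $T$, re-indexed.

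Correctness rests on the hardness proof of \textcite{BlaserILPS21}, which only produces trilinear forms whose support is already an antichain in the product order. This holds for the orderings of $x$, $y$ and $z$ given by the construction, and we then order the variables of $x\cup y\cup z$ with all $x$'s first, then all $y$'s, then all $z$'s. For such $T$, the corollary just proved gives $\Str(P_T)=\SR(T)$. This corollary combines three facts:
\begin{itemize}
\item the remark $\Str(P)=\symSR(P)$ for cubic forms;
\item Corollary~\ref{cor:symSR-SR}, namely $\symSR(P_T)=\SR(T)$;
\item Lemma~\ref{lem:symSR-tau}, together with Tao's identity $\SR(T)=\tau(\supp T)$ for antichain supports.
\end{itemize}
Consequently $\SR(T)\le r$ if and only if $\Str(P_T)\le r$, and any algorithm computing $\Str$ would decide the slice rank problem on these hard instances.

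The only step I expect to need care is checking that the $\NP$-hard instances really have antichain support under an ordering compatible with the hypotheses of Lemma~\ref{lem:symSR-tau}, namely one in which every $x$-variable is larger than every $y$-variable and every $y$-variable is larger than every $z$-variable. This requires inspecting the construction of \cite{BlaserILPS21} rather than merely citing it; the paper already asserts that these instances are given with antichain support.

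The identification $\Str=\symSR$ for cubic forms is immediate. In any product $F_iG_i$ with $1\le\deg F_i,\deg G_i<3$, one factor has degree $1$ and the other degree $2$. If we allow nonhomogeneous factors, we take homogeneous components and keep only the degree-$3$ part of the product. The remaining pieces are therefore bookkeeping.
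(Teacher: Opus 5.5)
Your reduction is correct and follows essentially the paper's route: the corollary is read off from $\Str=\symSR$ for cubic forms together with Corollary~\ref{cor:symSR-SR}, applied to the antichain-support instances of \cite{BlaserILPS21}. The ordering step you flag as needing care is in fact free, since the product order never compares variables from different blocks, so concatenating the $x$-, $y$- and $z$-orders is always allowed. One small correction: your aside on nonhomogeneous factors does not preserve the count, because two degree-$2$ factors give degree-$3$ part $F^{(1)}G^{(2)}+F^{(2)}G^{(1)}$, i.e.\ two slice terms; with the standard homogeneous definition of strength this does not arise.
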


\begin{remark}\label{rem:TI-space-NP}
  A related result is the following. That is, for a bilinear map $\phi:U\times U\to V$ where $U\cong \FF^n$ and $V\cong \FF^m$, we say that $W\leq U$ is a totally-isotropic space of $\phi$ if for any $w, w'\in W$, $\phi(w, w')=0$. In \cite{bei_independent_2019}, it is shown that the maximum totally-isotropic space dimension of alternating bilinear maps is \np-hard, by a reduction from the independent set problem on graphs. Here, our proof of \np-hardness for the maximum independent space dimension for symmetric trilinear forms is more involved due to the lack of corresponding graph or hypergraph problems to start with.
\end{remark}

\section{Cryptanalysis of the planted independent space problem}
\label{sec:cryptanalysis}

\subsection{The $h/n > 1/2$ regime}
\label{sub:the_h_gtrsim_n_2_regime}

\subsubsection{Search algorithm for \(h\ge(1/2+\epsilon)n\) through rank profiles} \label{subsubsec:rank-profile}

Let $f$ be a cubic Boolean polynomial with constant term $0$. Recall that $\TF^f$ is the alternating trilinear form associated with $f$, as defined in \cref{def:Tf}.

In the generation phase of the planted distribution $\mc P_3(n, h)$, an $h$-dimensional planted independent space is constructed by forcing an $h \times h \times h$ corner block of the tensor to be entirely zero, followed by a secret random change of basis to hide this structure. In other words, there exists a secret $h$-dimensional subspace $V \le \mathbb{F}_2^n$ such that $\TF^f|_{V^3} = 0$. 

By \cref{def:Tf} we know that $\TF^f$ captures the homogeneous degree-3 part of $f$:
$\TF^f=\TF^{f^{(3)}}$. For the purpose of our algorithm, we only deal with this homogeneous cubic component, which is a cubic form. In the remainder of this subsection, we assume that $f$ is homogeneous, that is, a cubic form.
Since $f$ has no quadratic part, \cref{cor:equiv} gives $\TF^f_v=\Hess^f_v$ for every $v\in\mathbb{F}_2^n$. Our algorithm exploits the rank profile of these tensor slices, which is invariant under invertible linear changes of variables.
We show that collecting sufficiently many linearly independent vectors whose slices have low rank allows us to recover the planted independent space.

\begin{algorithm}[H]
\caption{Rank-Profile Search Algorithm for Cubic Forms with
Planted Independent Spaces}
\label{alg:rank-profile-search}
\textbf{Input:} A homogeneous degree-3 Boolean polynomial
$f:\mathbb{F}_2^n\to\mathbb{F}_2$, and
\(h\ge(1/2+\epsilon)n\). \\
\textbf{Output:} A basis of the planted independent space,
or \texttt{Failure}.
\begin{enumerate}
  \item Let $N=4h\cdot 2^{n-h}$ and initialize an empty
  list $\mathcal B$.

  \item For $t=1$ to $N$:
  \begin{enumerate}
      \item Sample a vector
      $v^{(t)}\in\mathbb{F}_2^n$ uniformly at random,
      independently of all previous samples.

      \item Compute the Hessian slice matrix
      $M_{v^{(t)}}\in\mathbb{F}_2^{n\times n}$, where
      $(M_{v^{(t)}})_{ij}
      =\partial_j\partial_i f(v^{(t)})$.

      \item Compute the rank of $M_{v^{(t)}}$ over
      $\mathbb{F}_2$ using Gaussian elimination.

      \item If
      $\operatorname{rank}(M_{v^{(t)}})\le 2(n-h)$
      and $v^{(t)}\notin\operatorname{span}(\mathcal B)$,
      append $v^{(t)}$ to $\mathcal B$.

      \item If $|\mathcal B|=h$, output $\mathcal B$
      and halt.
  \end{enumerate}

  \item If the loop terminates without halting,
  output \texttt{Failure}.
\end{enumerate}
\end{algorithm}

\begin{theorem}
\label{thm:rank-profile-search}
  For any fixed constant \(0<\epsilon<1/2\) and sufficiently large \(n\), let $h$ be an integer satisfying $(1/2+\epsilon)n\le h\le n$, and let $f$ be the cubic component of a polynomial sampled from $\mc P_3(n,h)$. Algorithm~\ref{alg:rank-profile-search} recovers the planted
  independent space in time $O(hn^3 2^{n-h})$ and polynomial space with probability $1-e^{-\Omega(n)}$. 
\end{theorem}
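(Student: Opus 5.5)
Write $f=\widetilde c\circ A$ with $\widetilde c$ the planted cubic form, $U=\spn\{e_1,\dots,e_h\}$, and $V=A^{-1}U$. By the transformation rule $\TF^f_v=A^t\TF^{\widetilde c}_{Av}A$, ranks of slices are preserved. So I will work in the planted basis and write $k=n-h$ for the codimension. The proof rests on two properties of the rank threshold $2k$: (a) \emph{completeness}: every $v\in V$ satisfies $\operatorname{rank}M_v\le 2k$; and (b) \emph{soundness}: with probability $1-e^{-\Omega(n)}$ over the planted instance, every $v\notin V$ satisfies $\operatorname{rank}M_v>2k$. Granting both, the algorithm appends only vectors of $V$, so $\mathcal B$ is a linearly independent subset of $V$, and it outputs a basis of $V$ once $|\mathcal B|=h$.

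Completeness (a) is immediate. For $u\in U$, the slice $\TF^{\widetilde c}_u$ vanishes on $U\times U$ because $\TF^{\widetilde c}|_{U^3}=0$. Hence in block form with respect to $U\oplus\spn\{e_{h+1},\dots,e_n\}$ its $h\times h$ block is zero. Its rank is therefore at most $2k$, since rows from $U$ are supported on the last $k$ columns and there are only $k$ other rows.

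For running time, each iteration costs $O(n^3)$: evaluating the $n^2$ entries of the Hessian via \cref{fact:partials} at $O(n)$ each, Gaussian elimination, and a span-membership test against $\mathcal B$. Over $N=4h2^{k}$ iterations this gives $O(hn^32^{n-h})$ total time and polynomial space. For success, each sample lies in $V$ with probability $2^{-k}$. Conditioned on $|\mathcal B|=j<h$, a sample lies in $V\setminus\spn\mathcal B$ with probability $2^{-k}(1-2^{j-h})\ge 2^{-k-1}$. The number of useful hits in $N$ trials therefore dominates a binomial with mean $2h$, and a Chernoff bound shows that at least $h$ hits occur except with probability $e^{-\Omega(h)}=e^{-\Omega(n)}$.

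The main obstacle is soundness (b), which is where $h\ge(1/2+\epsilon)n$, i.e.\ $k\le(1/2-\epsilon)n$, enters. Write $v=u+w$ with $w\notin U$ in the complement. My first attempt is a union bound over $v\notin V$. In the planted basis, all coefficients of $\widetilde c$ outside the zeroed $h^3$ corner are uniform. For $w\neq0$, the restriction of $\TF^{\widetilde c}_v$ to $U\times U$ equals $\TF^{\widetilde c}_w|_{U\times U}$, which is driven by the random coefficients $\widetilde c_{ijm}$ with $i,j\le h<m$. After a change of basis in the complement, it is a uniformly random alternating $h\times h$ matrix, independent across directions $w$.

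A uniformly random alternating $h\times h$ matrix has rank at most $2k$ with probability at most about $2^{-\Omega((h-2k)^2)}$. Since $h-2k\ge 2\epsilon n+(h-n)+\ldots$, and more directly $h-2k=3h-2n\ge(3\epsilon-1/2)n$, this argument is too weak for $\epsilon$ near zero. I would repair it by using the full matrix rather than only the $U\times U$ block. The matrix $M_v$ has the block form $\begin{pmatrix}R_w & S\\ -S^t & Q\end{pmatrix}$, where $R_w$ is random alternating $h\times h$ and $S$ is an $h\times k$ block also involving fresh coefficients. Requiring total rank at most $2k$ forces $R_w$ to have rank at most $2k$, so I need the union bound to hold only over the $2^k$ cosets of $w$. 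I will carefully use that $u$ enters only through $S$ and $Q$, and that $R_w$ depends only on $w$.

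Concretely, the event ``$\operatorname{rank}R_w\le 2k$'' has probability at most $2^{-\binom{h-2k}{2}}\cdot 2^{O(hk)}$ by counting rank-$2k$ alternating matrices. The union over $2^k$ choices of $w$ then gives $2^{k}\cdot 2^{2hk-\binom{h}{2}}$. This is $e^{-\Omega(n^2)}$ because $2hk<h^2/2$ exactly when $k<h/4$, which does not hold in general.

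So the step I expect to need the most work is the correct rank-counting estimate. The number of alternating $h\times h$ matrices of rank $2r$ is about $2^{2rh-2r^2-r}$. With $r$ at most $k$, and noting that rank at most $2k\le h$ is a proper constraint exactly when $2k<h$, the exponent $\binom h2-(2kh-2k^2)=\tfrac12(h-2k)^2+O(n)$ grows as $\Omega(\epsilon^2n^2)$. The union bound over $2^k$ values of $w$ therefore still yields failure probability $e^{-\Omega(n)}$ for fixed $\epsilon$. This is precisely where $h>n/2$, i.e.\ $h-2k>0$ linearly, is used.
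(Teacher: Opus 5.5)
\textbf{There is a genuine gap in soundness (b).} Your completeness argument and the running-time count are fine. The Chernoff/stochastic-domination argument for collecting $h$ independent vectors of $V$ is also fine; it is a valid alternative to the paper's union bound over hyperplanes of $V$.

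Every estimate you write for soundness uses only the implication $\operatorname{rank}M_v\le 2k\Rightarrow\operatorname{rank}R_w\le 2k$ for the $U\times U$ block. But $R_w$ is $h\times h$, and $h-2k=3h-2n$. So whenever $h\le 2n/3$ the event $\operatorname{rank}R_w\le 2k$ has probability $1$. That covers the whole range $(1/2+\epsilon)n\le h\le 2n/3$, and in particular everything near the threshold when $\epsilon<1/6$.

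Your closing computation $\binom h2-(2kh-2k^2)=\tfrac12(h-2k)^2+O(n)=\Omega(\epsilon^2n^2)$ is therefore incorrect. The count $2^{2rh-2r^2-r}$ of rank-$2r$ alternating $h\times h$ matrices is only meaningful for $2r\le h$. Also, $|h-2k|$ is not of order $\epsilon n$: at $h=(1/2+\epsilon)n$ it equals $|3\epsilon-\tfrac12|\,n$. Your "repair" via the full block matrix also reduces back to the same $R_w$ constraint, so it repairs nothing. As written, the argument proves the theorem only for $h\ge(2/3+\delta)n$. (Separately, $R_w$ is linear in $w$, so the matrices are not "independent across directions $w$", but a union bound does not need that.)

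\textbf{The missing idea.} For $v\notin V$, the planting places no constraint on the slice $\TF^f_v$ beyond the automatic $v\in\ker M_v$. You should compare $2k$ against $n$, not against $h$.

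Choose a basis $v,w_1,\dots,w_{n-1}$ with $V=\spn\{w_1,\dots,w_h\}$. Since $\TF^f$ is uniform among alternating trilinear forms vanishing on $V^3$, and that condition only fixes the coordinates $\TF^f(w_i,w_j,w_k)$ with $i<j<k\le h$, the bits $\TF^f(v,w_j,w_k)$ for $j<k$ are independent and uniform. Hence $M_v\cong 0\oplus A$, where $A$ is a uniformly random alternating $(n-1)\times(n-1)$ matrix.

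Then $\operatorname{rank}M_v\le 2k$ forces $\dim\ker A\ge 2h-n-1\ge 2\epsilon n-1$. A union bound over subspaces of that dimension, each imposing $d(m-d)+\binom d2$ independent linear constraints, bounds this probability by $4\cdot 2^{-\binom{2h-n-1}{2}}=2^{-\Omega(\epsilon^2 n^2)}$. This is exactly where $h>n/2$ enters. The bound suffices for a union bound over the $N$ samples, which are independent of $f$, as in the paper. It even suffices for your stronger uniform claim over all $2^n$ vectors $v\notin V$.

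Splitting $v=u+w$ and keeping only the $w$-dependent block $R_w$ is what discards the needed randomness. The $u$-dependent $S$ block and the $Q$ block carry the fresh coefficients that make the full slice uniform.
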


\begin{proof}
Since $f$ is a homogeneous multilinear cubic form, \cref{cor:equiv} gives $M_v=\Hess^f_v=\TF^f_v$ for every $v\in\FF_2^n$. For the analysis, we regard all $N$ vectors as sampled independently in advance, even if the algorithm halts before examining all of them.

Suppose $f$ contains an $h$-dimensional independent space $V$, so that $\TF^f|_{V^3}=0$. Whenever $v\in V$, we have $\TF^f(v,x,y)=0$ for all $x,y\in V$. 
In a basis whose first $h$ vectors span $V$, the matrix $M_v$ therefore has the block form
$$M_{v^{(t)}} = \begin{pmatrix} 0_{h \times h} & B \\ B^T & D \end{pmatrix}.$$
Since $B$ has dimensions $h \times (n-h)$, the rank of the first $h$ columns is at most $n-h$, and the rank of other $n-h$ columns is at most $n-h$. Therefore we obtain:
\[
\text{rank}(M_{v^{(t)}}) \le 2(n-h) <n.
\]
Thus, every sampled vector in $V$ satisfies the rank condition in Step 2(d).

Next, we bound the probability that a sampled vector outside $V$ satisfies the same rank condition. By the planted sampling procedure, $\TF^f$ is uniformly distributed over the space of alternating trilinear forms satisfying $\TF^f|_{V^3}=0$.

For any fixed $v\notin V$, choose a basis $v,w_1,\ldots,w_{n-1}$ with $V=\operatorname{span}\{w_1,\ldots,w_h\}$. The planted condition only constrains the coefficients $\TF^f(w_i,w_j,w_k)$ with $1\le i<j<k\le h$. Thus, the coefficients $\TF^f(v,w_j,w_k)$,
$1\le j<k\le n-1$, remain independent uniform bits. Since $\TF^f(v,v,w)=0$ for every $w$, the matrix $M_v$ in this basis has the form $0\oplus A$, where $A$ is a uniformly random alternating $(n-1)\times(n-1)$ matrix.

For a uniformly random alternating $m\times m$ matrix $A$, requiring a fixed $d$-dimensional subspace to lie in $\ker A$ imposes $d(m-d)+\binom d2$ independent linear constraints. There are at most $4\cdot2^{d(m-d)}$ such subspaces, so a union bound gives
\[
    \Pr[\dim\ker A\ge d] \le 4\cdot2^{-\binom d2}, \qquad 0\le d\le m.
\]
Set $p=2^{h-n}$ and $r=2h-n-1$, which is positive for sufficiently large $n$. Applying this bound with $m=n-1$ and $d=r$, and taking a union bound over the $N$ samples, we obtain
\[
    \Pr\bigl[\exists t\in[N]:
        v^{(t)}\notin V,\rank(M_{v^{(t)}})\le2(n-h)
    \bigr] \le 4N(1-p)\,2^{-\binom r2}.
\]

It remains to show that the sampled vectors in $V$ span $V$. If they do not, they are all contained in some hyperplane $H<V$. There are $2^h-1$ such hyperplanes, and for each one,
$|V\setminus H|=2^{h-1}$. Each independent sample therefore avoids $V\setminus H$ with probability $1-p/2$. A union bound gives
\[
    \Pr\bigl[\operatorname{span}\{v^{(t)}:t\in[N],\ v^{(t)}\in V\}
        \neq V\bigr]\le (2^h-1)\left(1-\frac{p}{2}\right)^N\le 2^h e^{-\frac{Np}{2}}=e^{-(2-\ln2)h},
\]
where we used $N=4h/p$.

Unless one of these two events occurs, every vector appended to $\mathcal B$ lies in $V$, and the sampled vectors in $V$ span $V$. Hence the algorithm returns a basis of $V$, with failure probability at most
\[
    4N(1-p)\,2^{-\binom r2}+e^{-(2-\ln2)h}\le 16h\,2^{n-h-\binom r2}+e^{-(2-\ln2)h}.
\]
Since $r\ge2\epsilon n-1$, the first term is $2^{-\Omega(n^2)}$ and the second is
$e^{-\Omega(n)}$. This proves the claimed success probability. The bound holds for every fixed $V$ and therefore also after averaging over the choice of the planted space.

Finally, each iteration takes $O(n^3)$ operations, including constructing the slice, computing its rank, and updating a basis for $\operatorname{span}(\mathcal B)$.
Thus, the total running time is $O(hn^3 2^{n-h})$.    
\end{proof}

\begin{remark}
    The same technique also gives a distinguisher: using only $N= c\,2^{n-h}$ samples, where $c$ is a sufficiently large constant, output \texttt{Planted} as soon as one sampled vector satisfies the rank condition, and \texttt{Random} otherwise.
\end{remark}

\begin{remark}
    Our algorithm works when \(h\ge(1/2+\epsilon)n\).  
    A similar idea was used in \cite{liu_planted_2026} in the context of planting a totally-isotropic space in an alternating bilinear map (see \cref{rem:TI-space-NP}). There, a similar distinguisher runs in  polynomial time for $h>n/2$. For $h=n/2$ and even $h=n/2-o(n)$, the algorithm is also  in sub-exponential time. The main idea is to view the tensor as $m$ matrix slices $A_1, \dots, A_m$ and analyse their rank profiles. Any linear combination of $A_i$ always has an $h\times h$ zero block on the top-left so it is easy to distinguish by the probability of certain rank deficiency.
    
    While the distinguishers in \cite{liu_planted_2026} and the search algorithm in \cref{thm:rank-profile-search} follow the same strategy, the results are quantitatively different. Recall that in the process of sampling from $\mc P_3(n, h)$, we set the upper-left-frontal $h\times h\times h$ corner block to be $0$.
    The Hessian slice matrix $M_v$ exhibits a top-left zero block whenever $v$ belongs to the independent space, whose probability is $2^{h-n}$. So in step 2 of \Cref{alg:rank-profile-search}, we need to sample $N=2^{n-h}$ vectors to obtain one in the isotropic space with high probability, so that already demonstrates a major difference between our setting and the setting in \cite{liu_planted_2026}. Finally, we note that when $h>(1/2+\epsilon)n$, the rank helps to distinguish between whether $v\in H$ or not. When $h \le n/2$, we need to compute the probability of certain rank deficiency to distinguish between a random matrix and a matrix with zero block.
\end{remark}

\subsubsection{Search algorithm for $h\geq (1/2+\epsilon)n$ through computing shrunk subspaces} 
We now devise another $2^{n-h}\cdot \poly(n)$-time
algorithm to find a planted independent space of dimension $h\geq (1/2+\epsilon)n$ with high probability. While this algorithm may not be more efficient than the algorithm in \cref{thm:rank-profile-search}, it uses fewer samples from $\FF_2^n$ and introduces another technique (shrunk subspace computation), so we decide to record it here.

We basically follow the setting in \cref{subsubsec:rank-profile}. Let $f\in\FF_2[x_1, \dots, x_n]$ be a 
cubic form. Recall the alternating trilinear form $\TF^f$ associated with $f$ defined in \cref{def:Tf}.
Let $V\leq \FF_2^n$ be the planted independent space, satisfying that $h=\dim(V)=\lceil (1/2+\epsilon)n\rceil$. 

The algorithm follows the same structure as \cref{alg:rank-profile-search}, but it requires the following preparation. In the following, $C$ is a constant that will be determined later. 
\begin{definition}
  Let $\mathbf{A}=(A_1, \dots, A_C)\in \M(n, \FF)^C$ be a matrix tuple. For $U\leq\FF^n$, the image of $U$ under $\mathbf{A}$ is $\mathbf{A}(U)=\mathrm{span}\{\cup_{i\in[C]}A_i U\}$. 
  
  For $c\in\NN$, we say that $U$ is a $c$-shrunk subspace of $\mathbf{A}$, if $\dim(U)-\dim(\mathbf{A}(U))\geq c$. Equivalently, $U$ is a $c$-shrunk subspace of $\mathbf{A}$, if there exists $W\leq\FF^n$ of dimension $n-\dim(U)+c$, such that for any $u\in U$ and $w\in W$, and any $i\in[C]$, $w^tA_iu=0$.
\end{definition}

Let $T$ be a random $n\times n\times n$ alternating tensor over $\FF_2$, with the top-left-front $h\times h\times h$ subcube set to $0$.
Let $M_1, \dots, M_C$ be generic combinations of the first $h$ frontal slices of $T$. Suppose $n\gg C$. Note that $M_i=\begin{bmatrix}
    0 & M_i' \\
    -M_i'^t & M_i''
\end{bmatrix}$, where $0$ indicates the zero matrix of size $h\times h$. Furthermore, the entries in $M_i'$ and the upper triangular part of $M_i''$, for $i\in[C]$, are independent. That is, we may view $M_i$'s as alternating matrices whose upper-triangular entries are independent.

Let $H$ be the subspace spanned by the first $h$ standard basis vectors. Note that $H$ is a $c$-shrunk subspace where $c=\lceil (1/2+\epsilon)n\rceil -(n-\lceil (1/2+\epsilon)n\rceil)\approx 2\epsilon n$. Indeed, viewing $M_1, \dots, M_C$ as bilinear forms on $\FF_2^n$, $H\leq \FF_2^n$ satisfies that for any $i\in[C]$, $M_i(H, H)=0$. 
\begin{proposition}\label{prop:unique}
    Let $C\geq 4$, and let $M_1, \dots, M_C$ be random $n\times n$ alternating matrices over $\FF_2$ with the top-left $h\times h$ submatrix set to $0$, with $h\geq (1/2+\epsilon)n$. Then with high probability, $(V,W)=(H, H)$ is the only pair of subspaces of dimension $h$ satisfying that $\forall i\in[C]$, $M_i(V, W)=0$. Furthermore, $H$ is the $(2h-n)$-shrunk subspace of the smallest dimension.
\end{proposition}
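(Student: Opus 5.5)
The plan is a first-moment (union bound) argument over all candidate pairs $(V,W)$ of $h$-dimensional subspaces, with the random entries of $M_1,\dots,M_C$ supplying independent linear constraints. Write $k=n-h\le(1/2-\epsilon)n$. For a pair $(V,W)$ with $\dim V=\dim W=h$, the conditions $M_i(V,W)=0$ are linear in the free (upper-triangular, outside the $h\times h$ zero block) entries of each $M_i$. We then bound $\Pr[M_i(V,W)=0\ \forall i]$ by $2^{-C\cdot \rho(V,W)}$, where $\rho(V,W)$ is the rank of the linear map from the free entries to the bilinear form restricted to $V\times W$. This map is the restriction $\mathrm{Alt}(\FF_2^n)\to \mathrm{Bil}(V\times W)$ composed with quotienting out the forms supported on $H\times H$. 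Hence $\rho(V,W)$ equals the dimension of the image of all alternating forms on $V\times W$, minus the dimension of the part coming from forms vanishing outside $H\times H$.

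The key step is a lower bound on $\rho$ in terms of how far $(V,W)$ is from $(H,H)$. Set $a=h-\dim(V\cap H)$ and $b=h-\dim(W\cap H)$. Choose bases adapted to $V\cap H$ and to a complement, and similarly for $W$. Every free entry pairing a vector of $V$ outside $H$ with $W$, or a vector of $W$ outside $H$ with $V$, contributes independently. We would show $\rho(V,W)\ge (a+b)h-O(a^2+b^2)$, with extra care on the diagonal $V\cap W$ because the forms are alternating. Separately, we count the pairs with given $(a,b)$: there are at most $2^{O(a(k+h))}\cdot 2^{O(b(k+h))}$, more precisely about $2^{a(h-a)+ak}\cdot 2^{b(h-b)+bk}$ via Gaussian binomials. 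The expected number of bad pairs with given $(a,b)\ne(0,0)$ is then at most $2^{(a+b)(n) - C(a+b)h + O(C(a^2+b^2))}$. Since $h>n/2$ and $C\ge4$, the linear term $C h(a+b)\ge 2n(a+b)$ dominates the counting term $n(a+b)$.

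The main obstacle is the regime where $a$ or $b$ is comparable to $h$, where the crude bound $\rho\ge(a+b)h-O(a^2)$ degrades. There we would use a cleaner bound: $\rho\ge \dim(V/(V\cap H))\cdot\dim W-\binom{a}{2}$-type terms. When $a\approx h$, $V$ is nearly disjoint from $H$ and at least $\approx h^2-\binom{h}{2}\approx h^2/2$ constraints hold per matrix. This gives $2^{-Ch^2/2}$, which beats the at most $2^{2n^2/4\cdot}$ total pairs once $C\ge4$ and $h>n/2$. Summing over all $(a,b)\ne(0,0)$ gives a total failure probability $2^{-\Omega(n)}$, which proves uniqueness.

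For the second claim, a $(2h-n)$-shrunk subspace $U$ of dimension $u$ yields $W$ of dimension $n-u+2h-n=2h-u$ with $M_i(W,U)=0$. We would run the same union bound over pairs $(U,W)$ with $\dim U+\dim W=2h$ and $\dim U<h$, or with $\dim U=h$ and $U\ne H$. The degrees of freedom of such a pair are at most $u(n-u)+(2h-u)(n-2h+u)$, while the constraints per matrix are at least $u(2h-u)$ minus the portion absorbed by the zero block. The excess of constraints over freedom is again positive for $C\ge4$, with the same case split handling the overlap with $H$. This shows $H$ is the minimal-dimensional $(2h-n)$-shrunk subspace with high probability.
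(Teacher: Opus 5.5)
\textbf{Verdict: the overall strategy is right, but the key rank estimate is false, so there is a genuine gap.} Your approach is a first-moment bound over pairs $(V,W)$, stratified by $a=h-\dim(V\cap H)$ and $b=h-\dim(W\cap H)$. This is the paper's approach; the paper writes out only the case $V=H$ (your $a=0$) and calls the rest routine. Your estimate $\rho(V,W)\ge(a+b)h-O(a^2+b^2)$ fails precisely in that omitted case. The two constraint families you treat as independent overlap, because each $M_i$ is symmetric: $M_i(v',y)=M_i(y,v')$. So pairing $v'\in V\setminus H$ with $y\in W\cap H$ gives the same constraint as pairing an element of $W\setminus H$ with an element of $V\cap H$, whenever $y\in V\cap H$ and $v'\in W+H$. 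Write $\bar V,\bar W$ for the images in $\FF_2^n/H$. The overlap is $(V\cap W\cap H)\otimes(\bar V\cap\bar W)$, of dimension up to $(h-\max(a,b))\min(a,b)$. That is order $h\min(a,b)$, not $O(a^2+b^2)$.

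Concretely, for $V=W$ one gets exactly $\rho=\binom h2-\binom{h-a}2=ah-\binom{a+1}2$, about half of your $2ah-O(a^2)$. So the factor-$2$ slack behind ``$Ch(a+b)\ge 2n(a+b)$ dominates $n(a+b)$'' does not exist. Since $a,b$ can be $\Theta(n)$, your $O(C(a^2+b^2))$ terms are not lower order either. Three smaller points:
\begin{enumerate}
\item The identity in your first paragraph is wrong. For generic $V=W$, forms supported on the $H\times H$ block already restrict onto all of $\mathrm{Alt}(V)$, so your difference is $0$. The correct formula is $\rho=\dim(V\wedge W)-\dim\bigl((V\wedge W)\cap\Lambda^2H\bigr)$ inside $\Lambda^2\FF_2^n$.
\item The regime $a\approx h$ that you treat separately cannot occur, since $\dim(V\cap H)\ge 2h-n$ forces $a,b\le n-h$.
\item The dangerous pairs are the intermediate, aligned ones ($V\cap H\approx W\cap H$, $\bar V\approx\bar W$), and neither of your two regimes covers them.
\end{enumerate}

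The repair stays within your framework. Filter $\Lambda^2\FF_2^n/\Lambda^2H$ by $H\otimes(\FF_2^n/H)$, with quotient $\Lambda^2(\FF_2^n/H)$. For a complement $V'$ of $V\cap H$ in $V$, the elements $V'\wedge(W\cap H)$ contribute $a(h-b)$ independent vectors to the first piece. The image $\bar V\wedge\bar W$ contributes $ab-\binom{e+1}2$ to the quotient, where $e=\dim(\bar V\cap\bar W)$. Hence $\rho\ge ah-\binom{e+1}2$, and symmetrically $\rho\ge bh-\binom{e+1}2$. So $\rho\ge mh-\binom{m+1}2$ with $m=\max(a,b)$, which is tight for $V=W$.

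For the count, use $\#\{V:\dim V=h,\ \dim(V\cap H)=h-a\}\le 16\cdot2^{a(n-a)}$ and $a(n-a)+b(n-b)\le 2m(n-m)$, valid since $a,b\le m<n/2$. Then for $C\ge4$ the expected number of bad pairs with given $(a,b)\neq(0,0)$ is at most $2^8\cdot2^{2m(n-m)-4(mh-\binom{m+1}2)}=2^8\cdot 2^{2m(n-2h+1)}$. Summing over $(a,b)$ gives $2^{-\Omega(\epsilon n)}$. The quadratic terms cancel exactly at $C=4$, so $C\ge4$ and $h>n/2$ are used sharply, not with a factor-$2$ margin.

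Your sketch of the ``furthermore'' part (``$u(2h-u)$ minus the portion absorbed by the zero block'') has the same overlap defect. It should be redone with the analogous bound $\rho\ge\max\bigl(a\dim W,\,b\dim U\bigr)-\binom{e+1}2$.
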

\begin{proof}
Let $(V, W)$ be a pair of subspaces of dimension $h$. 

First, consider the case when $V=H$. Suppose $W\cap H$ is of codimension $k$ in $W$, $k\geq 1$. Choose $w_1, \dots, w_k\in W$ such that $(W\cap H)\cup \{w_1, \dots, w_k\}$ span $W$. Then $M_i(H, W)=0$ for every $i\in[C]$ implies that $M_i(H, w_j)=0$ for every $i\in[C]$ and $j\in [k]$, which has probability $\leq 1/2^{k\left(Ch-\binom{C+1}{2}\right)}$.
The number of $W$ such that $W\cap H$ is of codimension $k$ in $W$ is upper bounded by $2^{k(h-k)+kn}$. As $h\geq (1/2+\epsilon)n$, when $C\geq 6$ is a constant, we have $k(Ch-\binom{C+1}{2})\geq k(Ch-C^2)=kC(h-C)\geq 3nk-kC^2\geq hk+nk+(nk-kC^2)\geq k(h-k)+kn$; note that we assume $n\geq C^2$. Therefore, by the union bound, when $C\geq 4$, the probability of having $(H, W)$, $W\neq H$ such that $M_i(H, W)=0$ for $i\in[C]$ is negligible.

Second, the case of $V\neq H$ can be dealt with in the same approach, considering their intersections with $H$. We omit the routine calculations here. 

For the furthermore statement, note that the existence of a $(2h-n)$-shrunk subspace of dimension larger than $h$ is equivalent to the existence of $(V, W)$ where $\dim(V)=h+d$, $\dim(W)=h-d$, where $d\geq 1$, such that $M_i(V, W)=0$ for $i\in[C]$. Again, it can be shown that the probability of random $M_i$'s allowing such $(V, W)$ is negligible by specifying bases of $V$ and $W$ with respect to $H$. 
\end{proof}

Proposition~\ref{prop:unique} ensures that $H$ is the $c$-shrunk subspace where $c=2h-n$ of the smallest dimension. Such a shrunk subspace is called the canonical shrunk subspace in \cite{ivanyos2022symbolic}, which can be computed by the algorithm in \cite{ivanyos_constructive_2018}.

We can now describe the algorithm. As it is similar to \cref{alg:rank-profile-search}, we present a sketch instead of a formal description. 
\begin{enumerate}
\item Let $N=4C\cdot 2^{n-h}$. Initialize an empty list $\mathcal{B}$. 
\item For $t=1\to N$:
\begin{enumerate}
    \item Sample a random $v^{(t)}\in \FF_2^n$. 
    \item Let $M_t$ be the Hessian slice matrix $\Hess^f_{v^{(t)}}$.
    \item Compute the rank of $M_t$ over $\FF_2$. 
    \item If
        $\operatorname{rank}(M_{t})\le 2(n-h)$
        and $v^{(t)}\notin\operatorname{span}(\mathcal B)$,
        append $v^{(t)}$ to $\mathcal B$.
    \item If $|\mathcal B|=C$, continue to the next step.
\end{enumerate}
\item Compute the canonical shrunk subspace $V'$ of matrices in $\mathcal{B}$ using \cite{ivanyos_constructive_2018}. 
\item Return $V'$ if $\dim(V')=\lceil (1/2+\epsilon)n\rceil$. Report ``Failure'' otherwise.
\end{enumerate}

Following the analysis of \cref{thm:rank-profile-search}, we see that the algorithm succeeds with high probability in sampling $C$ matrices in the planted space $H$ after step 2. Then by \cref{prop:unique}, the shrunk subspace $V'$ of matrices in $\mathcal{B}$ coincides with the planted subspace $H$. 

Clearly, the algorithm runs in time $2^{n-h}\cdot\poly(n)$, and uses $O(C\cdot 2^{n-h})$ many samples from $\FF_2^n$, instead of $O(h\cdot 2^{n-h})$-many in \cref{alg:rank-profile-search}.

\subsection{Analysis of Bell-sampling attacks}

In this section, we consider a potential attack on the security of \decision-$h$-\PIS\ based on Bell sampling. 
The motivation is the following: it has been shown that a certain measure of magic for quantum states---the \emph{stabilizer nullity} \cite{beverland_lower_2020}---can be efficiently estimated using Bell sampling~\cite{hangleiter_bell_2024,grewal_improved_2024}. 
On the other hand, Bell sampling can be efficiently (strongly) simulated for degree-$3$ circuits, since every outcome amplitude is described by a Clifford circuit \cite{hangleiter_fault-tolerant_2025}. 
But the entire point of degree-$3$ circuits with a planted independent space is that they have low stabilizer rank compared to random states ($2^{cn}$ compared to $2^{n}$). 
Therefore, one may be worried that Bell sampling can be used to efficiently solve \decision-$h$-\PIS\ for arbitrary~$h$. 

The resolution of why this attack does not work is that while the stabilizer nullity gives an upper bound to the stabilizer rank this upper bound is ``maximally'' loose for vertex-cover decompositions of random circuits. 

\begin{definition}[Stabilizer nullity]
  Given a quantum state $\ket \psi \in (\mb C^{2})^{\otimes n}$, the stabilizer nullity is defined as 
  \begin{align}
    \nu(\ket{\psi})  = n - \dim(\mc S). 
  \end{align}
  Here, $\mc S \le \mc P_n$ is the (Pauli) stabilizer group of $\ket \psi$, i.e., $S \in \mc S \Rightarrow S \ket \psi = \ket \psi$. 
\end{definition}

\begin{lemma}
Let  $h \le n-2 $ and $g$ be a random cubic polynomial with planted $h$-dimensional independent space. Then 
\begin{align}
  \Pr_{g \sim \mc P_3(n,h)}[\nu(\ket g) < n ]\le 2^{-n+4}. 
  \end{align}
\end{lemma}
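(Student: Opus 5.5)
The plan is to show that $\nu(\ket g)<n$ happens exactly when some nonzero slice $\TF^g_a$ of the cubic tensor vanishes identically. We then bound the probability of that event by a union bound over $a$, using the same uniformity property of the planted ensemble that was used in the proof of \cref{thm:rank-profile-search}.

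\emph{Step 1 (characterizing the stabilizer group).} Write a Pauli as $P=\omega X^aZ^b$ with $\omega$ a phase. Since $\ket g=2^{-n/2}\sum_x(-1)^{g(x)}\ket x$, we have $P\ket g=\ket g$ iff
\[
\omega(-1)^{g(x+a)+g(x)+b\cdot x}=1 \quad \text{for all } x .
\]
That is, the derivative $D_ag(x)=g(x+a)+g(x)$ must be affine in $x$, with linear part fixed by $b$ and constant part fixed by $\omega$.

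For $a=0$ this forces $b=0$ and $\omega=1$, because every amplitude of $\ket g$ has the same modulus. Hence the map $P\mapsto a$ is injective on $\mc S$. Its image is the subspace $K_g=\{a: D_ag \text{ is affine}\}$, which is closed under addition since $D_{a+a'}g(x)=D_ag(x+a')+D_{a'}g(x)$. Therefore $\dim\mc S=\dim K_g$.

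By \cref{lem:tensor}, the quadratic part of $D_ag$ is the alternating form $(v,w)\mapsto \TF^g(a,v,w)$; equivalently it is $\TF^g_a$ from \cref{cor:equiv}. So $a\in K_g$ iff $\TF^g_a\equiv 0$. Consequently $\nu(\ket g)<n$ iff there exists $a\neq 0$ with $\TF^g(a,\cdot,\cdot)\equiv 0$.

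\emph{Step 2 (probability for a fixed $a$).} As noted in the proof of \cref{thm:rank-profile-search}, $\TF^g$ is uniform over alternating trilinear forms satisfying $\TF^g|_{V^3}=0$. This holds for every fixed planted space $V$, so we may condition on $V$.

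\begin{itemize}
\item If $a\notin V$, complete $a$ to a basis $a,w_1,\dots,w_{n-1}$ with $V=\spn\{w_1,\dots,w_h\}$. Then the coefficients $\TF^g(a,w_j,w_k)$ are $\binom{n-1}{2}$ independent uniform bits, so $\Pr[\TF^g_a\equiv0]=2^{-\binom{n-1}{2}}$.
\item If $a\in V\setminus\{0\}$, take the basis $w_1=a,\dots,w_h$ of $V$ and extend it by $w_{h+1},\dots,w_n$. The only forced zeros among the slice coefficients $\TF^g(a,w_j,w_k)$, $2\le j<k\le n$, are those with $k\le h$. This leaves $\binom{n-1}{2}-\binom{h-1}{2}$ free uniform bits.
\end{itemize}

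\emph{Step 3 (union bound).} Summing over $a$ gives
\[
\Pr[\nu<n]\le 2^n\,2^{-\binom{n-1}{2}}+2^{h}\,2^{-\binom{n-1}{2}+\binom{h-1}{2}}.
\]
In the second term, the exponent $h+\binom{h-1}{2}$ increases with $h$. So under the hypothesis $h\le n-2$ it is maximized at $h=n-2$, where the exponent equals
\[
(n-2)-\Big(\tbinom{n-1}{2}-\tbinom{n-3}{2}\Big)=(n-2)-(2n-5)=-n+3 .
\]
The first term is $2^{-\Omega(n^2)}$ and is absorbed for $n$ beyond a small constant. Small $n$ can be checked directly, or the bound is trivial there. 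Together this gives the claimed bound $2^{-n+4}$.

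The main obstacle is Step 1. One has to be careful that the quadratic part of $D_ag$ is exactly the slice $\TF^g_a$ even over $\FF_2$ and with multilinear reduction. The terms of $g$ of degree at most $2$ contribute only affine pieces to $D_ag$, and \cref{lem:tensor} with \cref{cor:equiv} identifies the remaining quadratic part.
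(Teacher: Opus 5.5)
Your proposal is correct and follows the paper's proof. The paper likewise identifies $\nu(\ket g)$ with the rank of the linear map $a\mapsto \TF^g_a$, writing it as a matrix $G'$ in coordinates where $V=\spn\{e_1,\dots,e_h\}$ and isolating one coefficient $G_{i_0jk}$ per entry instead of choosing a basis adapted to $a$; it then runs the same two-case union bound with $\binom{n-1}{2}$ and $\binom{n-1}{2}-\binom{h-1}{2}$ free bits. As in the paper, your final absorption step needs $n\ge 6$, so the case $n=5$ does require the direct check you mention, since there the union bound as written already exceeds $1/2$; a short count (for instance via the duality between alternating $3$-forms and $2$-vectors on $\FF_2^5$) confirms the bound in that case.
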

\begin{proof}
Consider the stabilizer group
$\mc S(\ket{+^n})=\langle X_i\rangle_{i\in[n]}$.
Then a stabilizer subgroup of $\ket g$ is generated
by the Clifford operators $S_i= C(g) X_i C(g)^\dagger = X_i\cdot C(g_i)$ where
\begin{align}
  g_i(x) = g(x+e_i)+g(x)=
    \sum_{j < k} G_{ijk} x_jx_k + 
    \ell_i(x),
\end{align}
where $G_{ijk}=\TF^g(e_i,e_j,e_k)$ and
$\ell_i$ is affine linear.

The stabilizer nullity is determined by the number of linearly
independent \emph{Pauli stabilizers}.
We therefore need to determine how many Pauli stabilizers
the group $\langle S_i\rangle_i$ contains.
To this end, we observe that
\begin{align}  
  S_i S_j = X_i X_j C(g_i(\,\cdot + e_j) +  g_j),
\end{align}
and therefore an element
$S(v)\in \langle S_i \rangle_i$
labelled by $v\in\FF_2^n$ is given by
\begin{align}
  S(v)=X(v)C(g_v),\qquad
  g_v(x) = g(x+v)+g(x).
\end{align}
Every Pauli stabilizer belongs to this group: the diagonal part $C(g_v)$ is uniquely determined by its $X$-part and the nonzero computational-basis amplitudes of $\ket g$.
$S(v) \in \mc P_n$ iff $g_v$ is affine linear, that is, whenever
$G(v)=\sum_i v_i G_{ijk} \in \FF_2^{n\times n}$ is diagonal.
Letting $G'_{i,jk}=G_{ijk}$ for $j<k$,
$i\notin\{j,k\}$ and $0$ otherwise, this is the case iff
$v^TG'=0$, and therefore the stabilizer nullity of~$\ket g$
is given by
\begin{align}
  \nu(\ket g)=\rank(G').
\end{align}

Next, we show that $G'$ has full rank with high probability. Wlog.\ let the independent space of $g$ be spanned by $e_1, \ldots, e_h$. 
Given $v \neq 0$, we bound $\Pr[v^T G' = 0]$ by exhibiting many entries of the row vector $v^T G'$ that are independent uniform bits.
Pick any $i_0$ with $v_{i_0} = 1$ and consider only the entries of $v^T G'$ indexed by pairs $\{j,k\}$ with $i_0 \notin \{j,k\}$ and $\{i_0,j,k\} \not\subseteq [h]$. Each such entry has the form
\begin{align}
  (v^T G')_{jk} = G_{i_0jk} + \sum_{\substack{i \in \supp(v)\setminus\{i_0\} \\ i \notin \{j,k\}}} G_{ijk} ,
\end{align}
and the coefficient $G_{i_0jk}$ appears in no other entry under consideration.\footnote{Its only other occurrences in $v^T G'$ are in the entries $i_0j$ and $i_0k$, which we have excluded. }
Conditioned on all remaining coefficients, these entries are therefore independent uniformly random  bits so that 
\begin{align}
  \Pr[v^T G' = 0]  \le 2^{-N(v)}, \qquad N(v) = \left|\bigl\{\{j,k\} \subseteq [n]\setminus\{i_0\} : \{i_0,j,k\} \not\subseteq [h]\bigr\}\right|.
\end{align}
\begin{itemize}
  \item If $\supp(v) \not\subseteq [h]$, choose $i_0 > h$  and  $\{i_0,j,k\} \not\subseteq [h]$  holds automatically, giving  $N(v) = \binom{n-1}{2}$.
  \item If $\supp(v) \subseteq [h]$, then $i_0 \le h$ and $N(v) = \binom{n-1}{2} - \binom{h-1}{2} = \frac{(n-h)(n+h-3)}{2}$.
\end{itemize}
The union bound over the at most $2^n$ (respectively at most $2^h$) vectors of each type gives
\begin{align}
  \Pr[\rank(G') < n] \le 2^{\,n - \binom{n-1}{2}} + 2^{\,h - \frac{(n-h)(n+h-3)}{2}} .
\end{align}
The exponent of the second term is increasing in $h$ and hence maximized at $h = n-2$, where it equals $-(n-3)$; the first term is at most $2^{-(n-3)}$ for $n \ge 6$. This gives the claimed $2^{-n+4}$. 
\end{proof}

Note that for $h \le (1-\epsilon)n$ the same argument gives the much stronger bound $2^{-\Omega_\epsilon(n^2)}$.

\printbibliography

\appendix 

\section{Attack on block-diagonal structure}
\label{app:attack_on_block_diagonal_structure}

Suppose we have tensors $T_i\in U_i\otimes V_i\otimes W_i$, $i\in[s]$. Let $U=\oplus_i U_i$, $V=\oplus_i V_i$, and $W=\oplus_i W_i$. We can then form the direct sum of $T_i$ as $\oplus_i T_i\in U\otimes V\otimes W= (\oplus_i U_i)\otimes (\oplus_i V_i)\otimes (\oplus_i W_i)$. Such a block-diagonal tensor $\oplus_i T_i$ can then be transformed to $T$ via $\gl(U)\times\gl(V)\times\gl(W)$. Therefore, $T$ is isomorphic to a block-diagonal tensor $\oplus_i T_i$, and our task is to compute the transformation that reveals this block-diagonal structure. 

This can be done via some known procedures in computer algebra. In the following, by slight abuse of notation, we assume that $T$ is a trilinear form $T:U\times V\times W\to\FF$. For $u\in U$, let $T_u:V\times W\to \FF$ be the bilinear form by specifying the first argument of $T$ to $u$. Similarly, for $v\in V$ and $w\in W$, we can define $T_v$ and $T_w$ analogously.

First, we may assume that $T$ (and therefore $\oplus_i T_i$) is non-degenerate (or concise by Strassen). Recall that $T$ is degenerate, if there exists $x=u\in U$, or $x=v\in V$, or $x=w\in W$, such that $T_x$ is the zero form.

\newcommand{\End}{\mathrm{End}}
\newcommand{\Cent}{\mathrm{Cent}}
Second, for a non-degenerate $T$, its centroid algebra $\Cent(T)$ consists of $(A, B, C)\in\End(U)\oplus\End(V)\oplus\End(W)$, such that for any $u\in U$, $v\in V$, and $w\in W$, $T(Au, v, w)=T(u, Bv, w)=T(u, v, Cw)$. Here, $\End(U)$ denotes the endomorphism algebra of $U$. Note that a linear basis of $\Cent(T)$ can be computed efficiently. It can be shown that $\Cent(T)$ is a commutative algebra, and $T$ is isomorphic to a block-diagonal tensor if and only if $\Cent(T)$ contains a non-trivial idempotent, that is, $(P, Q, R)\in \Cent(T)$ such that $P^2=P$, $Q^2=Q$, and $R^2=R$. 

Third, we are reduced to the question of deciding whether $\Cent(T)$ contains a non-trivial idempotent. This can be solved by first computing the algebra structure of $\Cent(T)$, namely its Jacobson radical and the semisimple quotient decomposition. The non-trivial idempotent task is then solved by first working this problem out in the simple components, and if every simple component admits a non-trivial idempotent, gluing the solutions and lifting through the radical.

\end{document}